\newcommand\email[2][]%
   {\newaffiltrue\let\AB@blk@and\AB@pand
      \if\relax#1\relax\def\AB@note{\AB@thenote}\else\def\AB@note{\relax}%
        \setcounter{Maxaffil}{0}\fi
      \begingroup
        \let\protect\@unexpandable@protect
        \def\thanks{\protect\thanks}\def\footnote{\protect\footnote}%
        \@temptokena=\expandafter{\AB@authors}%
        {\def\\{\protect\\\protect\Affilfont}\xdef\AB@temp{#2}}%
         \xdef\AB@authors{\the\@temptokena\AB@las\AB@au@str
         \protect\\[\affilsep]\protect\Affilfont\AB@temp}%
         \gdef\AB@las{}\gdef\AB@au@str{}%
        {\def\\{, \ignorespaces}\xdef\AB@temp{#2}}%
        \@temptokena=\expandafter{\AB@affillist}%
        \xdef\AB@affillist{\the\@temptokena \AB@affilsep
          \AB@affilnote{}\protect\Affilfont\AB@temp}%
      \endgroup
       \let\AB@affilsep\AB@affilsepx
}
\let\OLDthebibliography\thebibliography
\renewcommand\thebibliography[1]{
  \OLDthebibliography{#1}
  \setlength{\parskip}{2pt}
  \setlength{\itemsep}{4pt plus 1pt}
}
\DeclareMathAlphabet{\mathcal}{OMS}{zplm}{m}{n}
\def\tmp#1#2#3{%
  \definecolor{Hy#1color}{#2}{#3}%
  \hypersetup{#1color=Hy#1color}}
\def\tmp#1#2{%
  \colorlet{Hy#1bordercolor}{Hy#1color#2}%
  \hypersetup{#1bordercolor=Hy#1bordercolor}}
\newenvironment{mechanism}[1][htb]{%
   \begin{algorithm}[#1]%
  }{\end{algorithm}}
\newcommand{\algcom}[1]{\tcp*[r]{#1}}
\colorlet{agentColor}{gray!20}
\colorlet{resourceColor}{white}
\tikzset{
	agent/.style={draw = black, fill = agentColor, circle, text = black, inner sep = 0pt, minimum size = 0.55cm},
	resource/.style={draw = black, fill = resourceColor, rectangle, text = black, inner sep = 0pt, minimum size = 0.5cm},
}
\renewcommand\Affilfont{\normalsize}
\pgfplotsset{compat=1.18} 
\renewcommand{\vec}[1]{\boldsymbol{#1}}
\newtheorem{theorem}{Theorem}[section]
\newtheorem{lemma}{Lemma}[section]
\newtheorem{corollary}{Corollary}[section]
\theoremstyle{definition}
\newtheorem{definition}{Definition}[section]
\newtheorem{claim}{Claim}[section]
\newtheorem{remark}{Remark}[section]
\newtheorem{observation}{Observation}[section]
\newcommand{\xmark}{\ding{55}}%
\newcommand{\cmark}{\ding{51}}%
\newcommand{\declaredEdges}{D}
\newcommand{\agentSet}{L}
\newcommand{\taskSet}{R}
\newcommand{\privateEdges}{E}
\newcommand{\values}{\bm{v}}
\newcommand{\sizes}{\bm{s}}
\newcommand{\caps}{\bm{C}}
\newcommand{\mstar}[1]{M^{*}_{#1}}
\newcommand{\mech}{\mathcal{M}}
\newcommand{\isize}{s}
\newcommand{\jcap}{C}
\newcommand{\minisec}[1]{\medskip\noindent\textbf{#1.~~}}
\DeclareMathOperator*{\argmax}{arg\,max}
\newcommand{\mrank}{\textsc{order}}
\newcommand{\msort}{\textsc{sort}}
\newcommand{\mlist}[1]{\ensuremath{\langle #1 \rangle}}
\newcommand{\lexext}{\ensuremath{\succeq'}}
\newcommand{\lexexts}{\ensuremath{\succ'}}
\newcommand{\lex}{\ensuremath{\succeq^{\text{lex}}}}
\newcommand{\set}[1]{\ensuremath{\{#1\}}}
\newcommand{\sset}[2]{\ensuremath{\{#1 \; | \; #2\}}}
\newcommand{\bmp}{\ensuremath{\text{BMP}}\xspace}
\newcommand{\mkar}{\ensuremath{\text{RMK}}\xspace}
\newcommand{\emkar}{\ensuremath{\text{ERMK}}\xspace}
\newcommand{\gap}{\ensuremath{\text{GAP}}\xspace}
\newcommand{\vcgap}{\ensuremath{\text{VCGAP}}\xspace}
\newcommand{\vigap}{\ensuremath{\text{AVGAP}}\xspace}
\newcommand{\sigap}{\ensuremath{\text{ASGAP}}\xspace}
\newcommand{\boost}{\textsc{Boost}\xspace}
\newcommand{\trust}{\textsc{Trust}\xspace}
\newcommand{\randomizedBoost}{{\textsc{Boost-or-Trust}\xspace}}
\newcommand{\mkarGreedy}{\textsc{Greedy-by-Theta}\xspace}
\newcommand{\randomizedMKAR}{{\textsc{Greedy-or-Trust}\xspace}}
\newcommand{\greedy}{{\textsc{Greedy}\xspace}}
\newcommand{\randomizedGAP}{\textsc{\boost-or-Greedy-or-Trust}\xspace}
\newcommand{\defacc}{\text{deferred acceptance algorithm}\xspace}
\begin{document}

\title{\bfseries
    To Trust or Not to Trust: Assignment Mechanisms with Predictions in the Private Graph Model
}

\author[1]{Riccardo Colini-Baldeschi}
\author[2,3]{Sophie Klumper}
\author[2,4]{\\Guido Sch\"afer}
\author[2]{Artem Tsikiridis}

\affil[1]{Meta, UK}
\affil[2]{Centrum Wiskunde \& Informatica (CWI), The Netherlands}
\affil[3]{Vrije Universiteit Amsterdam, The Netherlands}
\affil[4]{University of Amsterdam, The Netherlands}

\date{}

\maketitle
\begin{abstract}
\noindent The realm of algorithms with predictions has led to the development of several new algorithms that leverage (potentially erroneous) predictions to enhance their performance guarantees. 
The challenge here is to devise algorithms that achieve optimal approximation guarantees as the prediction quality varies from perfect (consistency) to imperfect (robustness).
This framework is particularly appealing in mechanism design contexts, where predictions might convey private information about the agents. 
This aspect serves as the driving force behind our research:
Our goal is to design strategyproof mechanisms that leverage predictions to achieve improved approximation guarantees for several variants of the Generalized Assignment Problem (GAP) in the private graph model. 
In this model, first introduced by \citet{dughmi10}, the set of resources that an agent is compatible with is private information, and assigning an agent to an incompatible resource generates zero value. 
For the Bipartite Matching Problem (BMP), we give a deterministic group-strategyproof (GSP) mechanism that is $(1 + \nicefrac{1}{\gamma})$-consistent and $(1 + \gamma)$-robust, where $\gamma \ge 1$ is some confidence parameter.
We also prove that this is best possible. 
Remarkably, our mechanism draws inspiration from the renowned Gale-Shapley algorithm, incorporating predictions as a crucial element. Additionally, we give a randomized mechanism that is universally GSP and improves on the guarantees in expectation. 
The other GAP variants that we consider all make use of a unified greedy mechanism that adds edges to the assignment according to a specific order. 
For a special case of Restricted Multiple Knapsack (each agent's value is equal to their size), this results in a deterministic GSP mechanism that is $(1+\nicefrac{1}{\gamma})$-consistent and $(2+\gamma)$-robust. 
We then focus on two variants: the Agent Size GAP (each agent has one size) and the Value Consensus GAP (all agents have the same preference order over the resources). 
Both variants use the same template that leads to a universally GSP mechanism that is $(1+\nicefrac{3}{\gamma})$-consistent and $(3+\gamma)$-robust in expectation. Our mechanism randomizes over the greedy mechanism, our mechanism for BMP and the predicted assignment. 
All our mechanisms also provide more fine-grained approximation guarantees that smoothly interpolate between the consistency and robustness, depending on some natural error measure of the prediction.   
\end{abstract}

\section{Introduction}

Mechanism design is centered around the study of situations where multiple self-interested agents interact within a system. Each agent holds some private information about their preferences (also called \emph{type}), based on which they make decisions. The primary goal is to create systems such that, despite the agents acting in their own self-interest, the outcome is socially desirable or optimal from the designer's perspective. 
One of the key challenges is to design mechanisms that incentivize the agents to reveal their preferences truthfully. 
A prominent notion in this context is \emph{strategyproofness}, which ensures that it is in the best interest of each agent to reveal their preferences truthfully, independently of the other agents.
Unfortunately, strategyproofness often imposes strong impossibility results on achieving the socially desirable objective optimally or even approximately. As a consequence, the wort-case approximation guarantees derived in the literature can be rather disappointing from a practical perspective (see, e.g., \citet{roughgarden19}).

\begin{table}[t]
\centering
{\small
\begin{tabularx}{\textwidth}{>{\raggedright\arraybackslash}Xl}
\toprule
\textbf{GAP Variant} & \textbf{Restrictions} ($\forall i \in L$, $\forall j \in R$) \\
\midrule
\emph{Unweighted Bipartite Matching (U\bmp)} & $v_{ij} = 1$, $\isize_{ij} = 1$, $\jcap_j = 1$ \\
\emph{Bipartite Matching Problem (\bmp)} & $\isize_{ij} = 1$, $\jcap_j = 1$ \\
\textit{Restricted Multiple Knapsack (\mkar)} & $v_{ij} = v_i$, $\isize_{ij} = \isize_i$ \\
\textit{Equal RMK (\emkar)} & $v_{ij} = \isize_{ij} = v_i$ \\
\emph{Value Consensus GAP (\vcgap)} & $\exists \sigma: \ v_{i\sigma(1)} \ge \dots, \ge v_{i\sigma(m)}$ \\
\emph{Agent Value GAP (\vigap)} & $v_{ij} = v_i$ \\
\emph{Resource Value GAP (RVGAP)} & $v_{ij} = v_{j}$ \\
\emph{Agent Size GAP (\sigap)} & $\isize_{ij} = s_i$ \\
\emph{Resource Size GAP (RSGAP)} & $s_{ij}= s_j$ \\
\bottomrule
\end{tabularx}
}
\caption{Overview of GAP variants considered in this paper.\label{tab:GAP-var}} 
\end{table}

\minisec{Mechanism Design with Predictions}
To overcome these limitations, a new line of research, called \emph{mechanism design with predictions}, is exploring how to leverage learning-augmented inputs, such as information about the private types of the agents or the structure of the optimal solution, in the design of mechanisms. 
While this line of research first emerged in the area of online algorithms (see, e.g., \citet{lykouris21}), it is particularly appealing in the context of mechanism design. Namely, in economic environments this information can oftentimes be extracted from data through machine-learning techniques.
In the context of mechanism design, \citet{agrawal22} and \citet{xu22} are among the first works along this line.

In the mechanism design with predictions framework, the designer can exploit the predicted information to improve the worst-case efficiency of their mechanism. However, the predictions might be inaccurate or even entirely erroneous. As a result, the goal is to design mechanisms that guarantee attractive approximation guarantees if the prediction is perfect (referred to as \emph{consistency}), while still maintaining a reasonable worst-case guarantee when the prediction is imperfect (referred to as \emph{robustness}). 
Ideally, the mechanism provides a fine-grained approximation guarantee depending on some measure of the prediction error, which smoothly interpolates between these two extreme cases (referred to as \emph{approximation}).

In this paper, we study how to leverage learning-augmented predictions in the domain of mechanism design \emph{without money}. 
How to design strategyproof mechanisms without leveraging monetary transfers is a much more complicated problem (see, e.g., \citet{schummer2007mechanism} and \citet{procaccia13}).
Indeed, in the standard mechanism design with money literature, monetary transfers can be employed to effectively eliminate the incentives for agents to misreport their types. On the other hand, in some practical settings the designer might not be allowed to leverage monetary transfers for ethical and legal issues (see \citet{roughgarden2010algorithmic}), or due to practical constraints (see \citet{procaccia13}).

\minisec{Generalized Assignment Problem with Predictions}
We focus on the \emph{Generalized Assignment Problem (\gap)}, which is one of the most prominent problems that have been studied in the context of mechanism design without money.
In this problem, we are given a bipartite graph $G=(\agentSet \cup \taskSet,\declaredEdges)$ with a set $\agentSet$ of strategic agents (or jobs) that can be assigned to a set $\taskSet$ of resources (or machines).
Each agent $i \in \agentSet$ has a value $v_{ij}$ and a size $\isize_{ij}$ for being assigned to resource $j \in \taskSet$. The values of the agents are assumed to be private information. 
Further, each resource $j \in \taskSet$ has a capacity $\jcap_j$ (in terms of total size) that must not be exceeded. 
The goal of the designer is to compute a feasible assignment of agents to resources such that the overall value is maximized. 
This problem models several important use cases that naturally arise in applications such as online advertising, crew planning, machine scheduling, etc. 
Unfortunately, it is known that deterministic strategyproof mechanisms are unable to provide bounded approximation guarantees for GAP (see \citet{dughmi10}). 

\begin{wrapfigure}{r}{0.25\linewidth}
\centering
   \begin{subfigure}[b]{0.25\linewidth}
    \hspace*{-1cm}\small
    \begin{tikzpicture}[auto, node_style/.style={circle,draw=black}]
        \node[agent] (v1) at (-1,1) {1}; \node[agent] (v2) at (-1,0) {2}; \node[resource] (v4) at (1,0.5) {$h$}; 
        \draw [line width = 3pt] (v1) edge node{1} (v4); \draw (v4) edge node[xshift=-0.4cm]{$\frac{1}{\alpha} - \epsilon$} (v2);
    \end{tikzpicture}
    \caption{ }
    \label{fig:privateModelA}
  \end{subfigure}
  
  \vspace*{.2cm}

\begin{subfigure}[b]{0.25\linewidth}
    \hspace*{-1cm}\small
    \begin{tikzpicture}[auto, node_style/.style={circle,draw=black}]
        \node[agent] (v1) at (-1,1) {1}; \node[agent] (v2) at (-1,0) {2}; \node[resource] (v4) at (1,0.5) {$h$}; 
        \draw [line width = 3pt] (v1) edge node{1} (v4); \draw (v4) edge node[xshift=-0.4cm]{$\beta + \epsilon$} (v2);
    \end{tikzpicture}
    \caption{}
    \label{fig:privateModelB}
  \end{subfigure}

  \vspace*{-.1cm}

  \caption{Limitations for $\gap$ with predictions.}
  \label{fig:privateModel}
\end{wrapfigure}
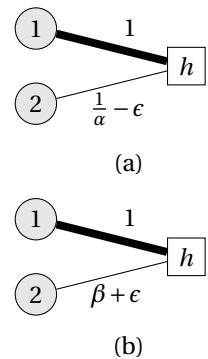
Even if the \gap is enhanced with a prediction suggesting the optimal assignment, this limitation cannot be overcome. This is illustrated by the following example. Consider two instances of the matching problem depicted in Figure \ref{fig:privateModel}.
Let $\mathcal{M}$ be a deterministic strategyproof mechanism with bounded consistency and robustness, i.e., it is $\alpha$-consistent and $\beta$-robust with $1 \le \alpha < \infty$ and $1 \le \beta < \infty$. Consider as input the truthful declarations as in Figure \ref{fig:privateModelA} with $0< \epsilon < \frac{1}{\alpha}$ and a perfect prediction $\set{(1,h)}$ (indicated in bold). To achieve $\alpha$-consistency, $\mathcal{M}$ must return $\{(1,h)\}$. Now consider as input the truthful declarations as in Figure \ref{fig:privateModelB} and the same prediction $\set{(1,h)}$, i.e., an imperfect prediction. To achieve $\beta$-robustness, $\mathcal{M}$ must return $\{(2,h)\}$. However, this contradicts strategyproofness, as agent 2 will unilaterally deviate and declare a value of $\beta + \epsilon$ if the true instance is as in Figure \ref{fig:privateModelA} with a perfect prediction. So the best a deterministic strategyproof mechanism can do is always return the prediction, leading to 1-consistency and unbounded robustness.

\minisec{Private Graph Model}
In light of this, we turn towards a slightly more restrictive (but natural) model for GAP that was introduced by \citet{dughmi10}, called the \emph{private graph model}. Here, the agents' values are assumed to be public information, but whether or not the value $v_{ij}$ can be generated by assigning agent $i$ to resource $j$ is private information. The latter can naturally be interpreted as \emph{compatibility restrictions} that agents have with respect to the available resources. Note that this variation restricts the strategy space of the agents from having the ability to misreports their entire valuation vector to being able to misreport only their compatibility vector.
Despite this restriction, GAP in the private graph model still has several natural applications (see also \cite{dughmi10}).

We study GAP in the private graph model considered in a learning-augmented setting, and assume that the optimal assignment with respect to the actual compatibilities is given as a prediction. Note that this is weaker than assuming that the actual compatibilities are available as a prediction. To see this, note that we can always compute an optimal predicted assignment with respect to some predicted compatibilities (notwithstanding computational constraints). Depending on the underlying application, it seems reasonable to assume that such assignment are learnable (e.g., through deep reinforcement learning, graph convolution neural networks, etc.).

\subsection{Our Contributions} 

We study GAP in the private graph model with predictions. We assume that a (potentially erroneous) prediction of the optimal assignment for the true compatibility graph is given as part of the input. We derive both deterministic and randomized mechanisms that are (universally) group-strategyproof for different variants of GAP; see Table~\ref{tab:GAP-var} for an overview. Our mechanisms are parameterized by a \emph{confidence parameter $\gamma \ge 1$}, which determines the trade-off between the respective consistency and robustness guarantees. Choosing a higher confidence value leads to a better consistency but a worse robustness guarantee, and vice versa. 

All our mechanisms provide more fine-grained approximation guarantees that smoothly interpolate between the consistency and robustness guarantees stated below, depending on some natural error parameter $\hat{\eta}$ of the prediction. More specifically, $1-\hat{\eta}$ measures the relative gap between the value of the predicted assignment and an optimal one; in particular, $\hat{\eta} = 0$ if the prediction is perfect, while $\hat{\eta} = 1$ if the prediction is arbitrarily bad. 

We summarize our main results below. 
\begin{itemize}
    \item We prove a lower bound on the best possible trade-off in terms of consistency and robustness guarantees that is achievable by any deterministic strategyproof mechanism for GAP (Section~\ref{sec:lowerBounds}). More precisely, we show that no deterministic strategyproof mechanism can be $(1+\nicefrac{1}{\gamma})$-consistent and $(1+\gamma-\epsilon)$-robust for any $\epsilon > 0$. 
    In fact, our lower bound also holds for the special case of the Bipartite Matching Problem (\bmp). We also extend our insights to derive a lower bound in terms of consistency and approximation guarantees.
    
    \item For \bmp, we derive a deterministic group-strategyproof mechanism that is $(1 + \nicefrac{1}{\gamma})$-consistent and $(1 + \gamma)$-robust (Section~\ref{sec:matching}). In light of the lower bound above, our mechanism thus achieves the best possible consistency and robustness guarantees, albeit satisfying the stronger notion of group-strategyproofness (GSP). 
    Unlike the mechanism known in the literature for the problem without predictions, we crucially do not consider declarations in a fixed order. Instead, our mechanism draws inspiration from the well-known deferred acceptance algorithm by \citet{GS62}. Here, the agent proposal order is crucial for GSP and the resource preference order is crucial to improve upon the known guarantee for the problem without predictions. If an edge is in the predicted optimal matching, it potentially has a better ranking in the resource preference order, depending on the confidence parameter and the instance at hand. 
    Our mechanism for \bmp extends (with the same approximation guarantees) to many-to-one assignments and RS\gap. In particular, this provides the first deterministic GSP mechanism that is $2$-approximate for RS\gap. 
    
    \item For \gap, we give a deterministic greedy mechanism that greedily adds declared edges (while maintaining feasibility) to an initially empty assignment, according to some order of the declarations (Section~\ref{sec:greedy}). The order of the declarations follows from a specific ranking function, that is given as part of the input. We derive a sufficient condition, called \emph{truth-inducing}, of the ranking function which guarantees that the resulting greedy mechanism is GSP. For the special case of \emkar, we combine the greedy mechanism with a truth-inducing ranking function resulting in a deterministic GSP mechanism that is $(1 + \nicefrac{1}{\gamma})$-consistent and $(2 + \gamma)$-robust. The same approach can be used to obtain a 3-approximate GSP mechanism for the setting without predictions, for which no polynomial time deterministic strategyproof mechanism was known prior to this work. 

    \item For \sigap and \vcgap, we derive randomized universally GSP mechanisms that are $(1 + \nicefrac{3}{\gamma})$-consistent and $(3 + \gamma)$-robust (Section~\ref{sec:randomized}). 
    To this aim, we randomize over three deterministic mechanisms, consisting of our mechanism for \bmp, our greedy mechanism and a third mechanism that simply follows the prediction.     
    As the previously mentioned greedy mechanism is one of the three building blocks, it is crucial that for these variants there exist truth-inducing ranking functions. Notably, none of the three mechanisms achieves a bounded robustness guarantee by itself. Finally, for \bmp and \emkar, we derive randomized universally GSP mechanisms that are $(1 + \nicefrac{1}{\gamma})$-consistent and outperform the robustness guarantees of their respective deterministic counterparts in expectation. In particular, for \bmp this provides a separation result showing that randomized mechanisms are more powerful than deterministic ones (at least in expectation).
\end{itemize}

\subsection{Related Work}

Algorithms with predictions represent one perspective within the ``beyond worst-case'' paradigm. The primary goal is to overcome existing worst-case lower bounds by augmenting each input instance with a prediction, possibly a machine-learned one. Hence, this line of work is also sometimes referred to as ``learning-augmented algorithms''. The conceptual framework that describes the trade-off between $\alpha$-consistency and $\beta$-robustness was introduced by \citet{lykouris21} in the context of online algorithms. Since then, online algorithms have remained a major focus (see e.g., \citet{purohit18, azar21, azar22, banerjee22} for some reference works). Thematically relevant to us, are the works on online matching (e.g., \citet{antoniadis20, antoniadis23, lavastida21, lavastida21-2, jin22, dinitz22}) in non-strategic environments.\footnote{An exception to this is the work by \citet{antoniadis20}, who, even though their main focus is on designing algorithms for online Bipartite Matching, observe that their algorithm implies a strategyproof mechanism if monetary transfers are allowed.} Other domains that have been studied under the lens of predictions include the reevaluation of runtime guarantees of algorithms (see e.g., \citet{dinitz21, chen22, sakaue22} for bipartite matching algorithms), streaming algorithms, data structures, and more. We refer the reader to \citet{mitzenmacher20} for a survey of some of the earlier works.\footnote{An overview of research articles that appeared on these topics is available at \url{https://algorithms-with-predictions.github.io}.}

Recently, \citet{xu22} and \citet{agrawal22} introduced predictions for settings involving strategic agents. In their work, \citet{xu22} showcased four different mechanism design settings with predictions, both with and without monetary transfers. On the other hand, \citet{agrawal22} focused solely on strategic facility location. Most subsequent works with strategic considerations have also been in algorithmic mechanism design (see e.g., \citet{balkanski23, istrate22, balcan23, balkanski2023online}). However, other classic domains of economics and computation literature continue to be revisited in the presence of predictions; see, e.g., the works by \citet{gkatzelis22} on the price of anarchy, \citet{berger23} on voting, \citet{lu23} and \citet{caragiannis24} on auction revenue maximization.

We briefly elaborate on the relation between learning-augmented mechanism design and Bayesian mechanism design. As pointed out by \citet{agrawal22}, the main difference is the absence of worst-case guarantees in the Bayesian setting. Indeed, in the standard Bayesian setting, it is implicitly assumed that one has perfect knowledge of the distribution when analyzing the expected performance of mechanisms. While this is a reasonable assumption in some settings, Bayesian mechanisms do not offer any guarantees if this assumption fails. 

Mechanism design without money has a rich history spanning over fifty years, being deeply rooted in economics and social choice theory. As will be evident in Section \ref{sec:matching}, the seminal works of \citet{GS62, Roth82} and \citet{HM05} on stable matching are particularly relevant to our study. However, our work aligns more closely with the agenda of \emph{approximate} mechanism design without money set forth by \citet{procaccia13} and, in particular, the subsequent work by \citet{dughmi10}. In their work, \citet{dughmi10} introduced the private graph model, that we use in our environment with predictions, and initiated the study of variants of $\gap$ when the agents are strategic (a variant of this model where the resources are strategic instead, was studied by \citet{fadaei17-2}). \citet{dughmi10} obtained a $2$-approximate, strategyproof mechanism for Weighted Bipartite Matching and a matching lower bound. Furthermore, they developed randomized strategyproof-in-expectation\footnote{A randomized mechanism is strategyproof-in-expectation if the true declaration of an agent maximizes their expected utility. Note that this is a weaker notion than universal strategyproofness, which we obtain for the randomized mechanisms in this work.} mechanisms for special cases of $\gap$; namely, a $2$-approximation for $\mkar$, a $4$-approximation for $\sigap$ and a $4$-approximation for a special case of $\vcgap$, termed \emph{Agent Value GAP} (see Section \ref{subsec:sigap-vcgap} for some discussion). 
Finally, they proposed a randomized, strategyproof-in-expectation $O(\log n)$-approximate mechanism for the general case. 
Subsequently, \citet{chen14} improved upon these results by devising mechanisms which satisfy universal strategyproofness, matching the guarantees of \citet{dughmi10} for these special cases. Additionally, they showed an improved $O(1)$-approximation for $\gap$.

Beyond the private graph model, for the setting where values are private information but monetary transfers are allowed, \citet{fadaei17} devised a $\nicefrac{e}{(e-1)}$-approximate, strategyproof-in-expectation mechanism. Finally, from an algorithmic perspective, the best known approximation ratio for $\gap$ is $\nicefrac{e}{(e-1)}-\epsilon$, for a fixed small $\epsilon>0$ due to \citet{feige06}. On the negative side, \citet{chakrabarty10} have shown that $\gap$ does not admit an approximation better that $\nicefrac{11}{10}$, unless $\text{P} = \text{NP}$.

\section{Preliminaries} \label{sec:prelim}

\subsection{Generalized Assignment Problem with Predictions} \label{sec:GAPwP}

In the \emph{Generalized Assignment Problem (\gap)}, we are given a bipartite graph $G=(\agentSet \cup \taskSet,\declaredEdges)$ consisting of a set $\agentSet = [n]$ of $n \ge 1$ agents (or items, jobs) and a set $\taskSet = [m]$ of $m \ge 1$ resources (or knapsacks, machines, respectively).\footnote{Throughout the paper, we use $[n] = \set{1, \dots, n}$ to refer to the set of the first $n \ge 1$ natural numbers.} Each agent $i \in \agentSet$ has a value $v_{ij} > 0$ and a size $\isize_{ij} > 0$ for being assigned to resource $j \in \taskSet$. Further, each resource $j \in \taskSet$ has a capacity $\jcap_j > 0$ (in terms of total size) that must not be exceeded. We assume without loss of generality that $\isize_{ij} \le C_j$ for every $i \in \agentSet$. Below, we use $\values = (v_{ij})_{i \in \agentSet, j \in \taskSet} \in \mathbb{R}^{n \times m}_{>  0}$ to refer to the matrix of all agent-resource values, $\sizes = (\isize_{ij})_{i \in \agentSet, j \in \taskSet} \in \mathbb{R}^{n \times m}_{>  0}$ to refer to the matrix of all agent-resource sizes and $\caps = (C_j)_{j \in \taskSet} \in \mathbb{R}^m_{> 0}$ to refer to the vector of all resource capacities.\footnote{We note that our assumption of all values, sizes and capacities being positive is without loss of generality in our model (which will become clear from further details introduced below).} 

The bipartite graph $G=(\agentSet \cup \taskSet,\declaredEdges)$ encodes compatibilities between agents and resources; we also refer to it as the \emph{compatibility graph}. An agent $i \in \agentSet$ is said to be \emph{compatible} with a resource $j \in \taskSet$ if $(i,j) \in \declaredEdges$; otherwise, $i$ is \emph{incompatible} with $j$.
We use $\declaredEdges_i = \set{(i,j) \in \declaredEdges}$ to denote the set of all compatible edges of agent $i$. Similarly, we use $\declaredEdges_j = \set{(i,j) \in \declaredEdges}$ to refer the set of all compatible edges of resource $j$. 
For example, the compatibility $(i,j) \in \declaredEdges$ might indicate that agent $i$ has access to resource $j$, or that item $i$ can be assigned to knapsack $j$, or that job $i$ can be executed on machine $j$. Generally, $\declaredEdges$ can be any subset of $\agentSet \times \taskSet$. We use $G[\declaredEdges] = (\agentSet \cup \taskSet, \declaredEdges)$ to refer to the compatibility graph induced by the edge set $\declaredEdges \subseteq \agentSet \times \taskSet$.\footnote{We can assume without loss of generality that $G$ does not contain any isolated nodes (as, otherwise, we can simply remove such nodes).}
Further, we write $\mathcal{I}_{\gap}=(G[\declaredEdges], \values, \sizes, \caps)$ to refer to an instance of the Generalized Assignment Problem. 

An \emph{assignment} $M \subseteq \agentSet \times \taskSet$ is a subset of edges such that each agent $i \in \agentSet$ is incident to at most one edge in $M$. Note that each agent is assigned to at most one resource, but several agents might be assigned to the same resource; we also say that $M$ is a \emph{many-to-one assignment}. 
If we additionally require that each resource $j \in \taskSet$ is incident to at most one edge in $M$, then $M$ is said to be a \emph{one-to-one assignment} (or, \emph{matching}, simply). Note that every matching is also an assignment. Given an agent $i \in \agentSet$, we use $M(i) = \sset{j \in \taskSet}{(i,j) \in M}$ to refer to the resource assigned to $i$ (if any); note that $M(i)$ is a singleton set. Also, we have $M(i) = \emptyset$ if $i$ is unassigned. Similarly, for a resource $j \in \taskSet$, we define $M(j) = \sset{i \in \agentSet}{(i,j) \in M}$ as the set of agents assigned to $j$ (if any); note that $M(j) = \emptyset$ if $j$ is unassigned. 

An assignment $M$ is said to be \emph{feasible} for a given compatibility graph $G[\declaredEdges]$ if (1) $M$ is an assignment in $G[\declaredEdges]$, i.e., $M \subseteq \declaredEdges$, and (2) $M$ satisfies all resource capacities constraints, i.e., for each resource $j \in \taskSet$, $\sum_{i \in M(j)} \isize_{ij} \le \jcap_j$. We define the \emph{value} $v(M)$ of an assignment $M$ as the sum of the values of all edges in $M$; more formally,
\begin{equation}\label{def:val}
v(M) = \sum_{(i,j) \in M} v_{ij}.
\end{equation}

We overload this notation slightly and also write $v(M(i))$ and $v(M(j))$ to refer to the total value of all edges assigned to agent $i$ or resource $j$, respectively. Further, we define $v_{i\emptyset} = 0$ and $v_{\emptyset j} = 0$ for notational convenience. We use $\mstar{\declaredEdges}$ to denote a feasible assignment of maximum value in the graph $G[\declaredEdges]$. We also say that $\mstar{\declaredEdges}$ is an \emph{optimal} assignment with respect to $\declaredEdges$.

\newsavebox{\taxonomytable}
\sbox{\taxonomytable}{
\scalebox{0.8}{
\begin{tabular}{@{}lcc@{}}
      \toprule
      \textbf{color} & \textbf{tractable} & \textbf{OPT $\Rightarrow$ SP} \\
      \midrule
      green & \cmark & \cmark \\
      orange & \cmark & \xmark \\
      blue & \xmark & \cmark \\
      red & \xmark & \xmark \\
      \bottomrule
\end{tabular}
}
}

\begin{figure}[t]
    \centering
    \begin{tikzpicture}[node distance=0.5cm, auto, scale=1]
      \tikzstyle{block} = [ellipse, draw, minimum width=1cm, minimum height=1cm, text centered]
      \tikzstyle{arrow} = [thick,->,>=Stealth]
      \node[block, fill=green!20] (unweighted_bmp) at (0,-2) {$\text{U\bmp}$};
      \node[block, fill=blue!20] (rmbp) at (2,-4) {ERMK};
      \node[block, fill=orange!20] (bmp) at (2,0) {$\bmp$};
      \node[block, fill=orange!20] (rsigap) at (5,0) {RSGAP};
      \node[block, fill=red!20] (sigap) at (8,0) {$\sigap$};
      \node[block, fill=red!20] (gap) at (10,-2) {$\gap$};
      \node[block, fill=blue!20] (mkar) at (5,-4) {$\mkar$};
      \node[block, fill=red!20] (vcgap) at (8,-4) {$\vcgap$};
      \draw[arrow] (bmp) --  (unweighted_bmp);
      \draw[arrow] (rmbp) --  (unweighted_bmp);
      \draw[arrow] (sigap) -- (rsigap);
      \draw[arrow] (rsigap) -- (bmp);
      \draw[arrow] (gap) -- (sigap);
      \draw[arrow] (mkar) -- (rmbp);
      \draw[arrow] (vcgap) -- (mkar);
      \draw[arrow] (sigap) -- (mkar);
      \draw[arrow] (gap) -- (vcgap);

      \node at ([xshift=2.8cm, yshift=2.3cm]gap) 
      {\usebox{\taxonomytable}};
    \end{tikzpicture}
  \caption{Taxonomy of GAP variants resulting from \cite{dughmi10} and our results. The respective color (column 1) indicates whether the problem is polynomial-time solvable (column 2) and whether an optimal algorithm gives rise to a strategyproof mechanism that is $1$-efficient (column 3).}
  \label{fig:gap-variants}
\end{figure}
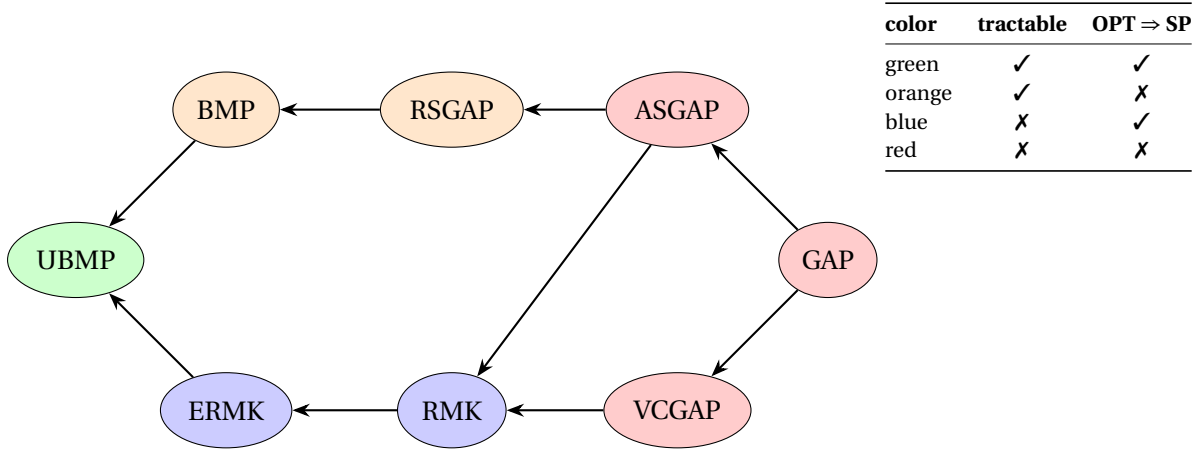

\minisec{$\gap$ Variants}
Below, we give an overview of the different special cases of \gap considered in this paper; see also Figure \ref{fig:gap-variants} for an illustration of the respective relationships between these problems.
\begin{itemize}
    \item \textit{Bipartite Matching Problem (\bmp).}
    Here, each agent $i \in \agentSet$ has unit size, i.e., $\isize_{ij} = 1$ for all $j \in \taskSet$, and each resource $j \in \taskSet$ has capacity $\jcap_j = 1$. We write $\mathcal{I}_{\bmp}=(G[\declaredEdges], \values)$ to denote an instance of \bmp.
    
    \sloppy
    \item \textit{Restricted Multiple Knapsack (\mkar)\footnote{This problem is also known as \emph{Multiple Knapsacks with Assignment Restrictions}; see, e.g., \citet{dawande00, nutov06, aerts03}.}.}
    Here, each agent $i$ has a fixed value $v_i = v_{ij}$ and size $\isize_i = \isize_{ij}$ for all $j \in \taskSet$. 
    We use $\mathcal{I}_{\mkar}=(G[\declaredEdges], (v_i)_{i \in \agentSet}, (\isize_i)_{i \in \agentSet}, \vec{\jcap})$ to denote an instance of \mkar. We mostly focus on the special case \emkar of \mkar, where $v_i = \isize_i$ for all $i$.
    We use $\mathcal{I}_{\emkar}=(G[\declaredEdges], (v_i = s_i)_{i \in \agentSet}, \vec{\jcap})$ to refer to an instance of \emkar.
    \fussy
    
    \item \textit{Value Consensus GAP (\vcgap).}
    Here, agents have some consensus about the value of the resources in the sense that there exists a permutation $\sigma$ of the resources such that for each agent $i \in \agentSet$ we have: $v_{i\sigma(1)} \ge v_{i\sigma(2)} \ge \dots \ge v_{i\sigma(m)}$. We note that both the \emph{Agent Value GAP} (i.e., $\forall i \in \agentSet,\ v_{ij} = v_{ik}\ \forall j,k \in \taskSet$) and the \emph{Resource Value GAP} (i.e., $\forall j \in \taskSet,\  v_{ij} = v_{kj}\ \forall i,k \in \agentSet$) fall into this case. Also, \mkar is a special case of the \vcgap. We write $\mathcal{I}_{\vcgap}=(G[\declaredEdges], \values, \sizes, \caps)$ to denote an instance of the \vcgap.
    
    \item \textit{Agent Size GAP (\sigap).}  
    Here, each agent $i \in \agentSet$ has the same size $s_i$ for all resources $j \in \taskSet$. Observe that both the \bmp and the \mkar are special cases of the \sigap. We use $\mathcal{I}_{\sigap}=(G[\declaredEdges], \values, \bm{s}, \vec{\jcap})$ to denote an instance of \sigap. 
    
    \item \textit{Resource Size GAP (RSGAP).}
    Here, for every resource $j \in \taskSet$, all agents have the same size, i.e., for all $j \in R$ and for all $i,k \in L$, $s_{ij}=s_{kj}$. Observe that the \bmp is a special case of the resource RSGAP. We use $\mathcal{I}_{RSGAP}=(G[\declaredEdges], \values, \bm{s}, \vec{\jcap})$ to refer to an instance of RSGAP. 
\end{itemize}

\minisec{GAP with Predictions}
In the setting with predictions, we are given an instance $\mathcal{I}_{\gap} = (G[\declaredEdges], \values, \sizes, \caps)$ of GAP, and, in addition, a \emph{predicted assignment} $\hat{M} \subseteq \agentSet \times \taskSet$. It is important to realize that the predicted assignment $\hat{M}$ is considered to be part of the input. Generally, $\hat{M}$ can be any assignment in the complete graph $G[\agentSet \times \taskSet]$. In particular, $\hat{M}$ does not necessarily correspond to a feasible assignment in the graph $G[\declaredEdges]$ (as it may contain edges which are not in $\declaredEdges$). We use $\mathcal{I}_{\gap^+} = (G[\declaredEdges], \values, \sizes, \caps, \hat{M})$ to refer to an instance of GAP augmented with a predicted assignment $\hat{M}$. We use a similar notation for the various special cases of \gap we study.

We say that $\hat{M}$ is a \emph{perfect prediction} for $\mathcal{I}_{\gap^+}$ if it corresponds to an assignment of maximum value in the graph $G[\declaredEdges]$, i.e., $v(\hat{M} \cap \declaredEdges) = v(M^*_{\declaredEdges})$. We define an error parameter that measures the quality of the predicted assignment $\hat{M}$ relative to an optimal assignment $\mstar{\declaredEdges}$ of $G[\declaredEdges]$. 
Namely, we define the \emph{prediction error} $\eta(\mathcal{I}_{\gap^+}) \in [0, 1]$ of an instance $\mathcal{I}_{\gap^+} = (G[\declaredEdges], \values, \sizes, \caps, \hat{M})$ as
\begin{equation} \label{eq:eta}
    \eta(\mathcal{I}_{\gap^+}) = 1 - \frac{v(\hat{M} \cap \declaredEdges)}{v(\mstar{\declaredEdges})}.
\end{equation}
\sloppy
Note that with this definition an instance $\mathcal{I}_{\gap^+}$ with a perfect prediction has a prediction error of $\eta(\mathcal{I}_{\textsc{GAP}^+}) = 0$. As the value of the predicted assignment $\hat{M}$ deteriorates from the value of the optimal assignment $M^*_{D}$, the error measure approaches $\eta(\mathcal{I}_{\gap^+}) \rightarrow 1$. If $\eta(\mathcal{I}_{\gap^+}) = 1$, we must have $v(\hat{M} \cap \declaredEdges) = 0$ which means that the prediction $\hat{M}$ does not contain any edge that is also in $\declaredEdges$ (recall that the values are assumed to be positive). 

\fussy
Note that the definition of our error parameter in \eqref{eq:eta} is meaningful as it captures the relative gap between the \emph{values} of the predicted assignment and the optimal one. Alternatively, one could compare structural properties of $\hat{M} \cap \declaredEdges$ and $\mstar{\declaredEdges}$. However, this seems less suitable in our context: For example, under an error notion that is not value-based, a predicted assignment may only miss one edge of an optimal assignment, i.e., $| \mstar{\declaredEdges} \setminus (\hat{M} \cap \declaredEdges) | = 1$, but still be of relatively low value if this missing edge is valuable. Further, a predicted assignment might contain none of the edges of the optimal assignment, i.e., $(\hat{M} \cap \declaredEdges) \cap \mstar{\declaredEdges} = \emptyset$, but still be very useful when its value is close to optimal; in fact, $(\hat{M} \cap \declaredEdges)$ might even be an optimal matching that is disjoint from $\mstar{\declaredEdges}$ (because the optimal assignment might not be unique).
Finally, note that accounting the value of edges in $\hat{M} \setminus \declaredEdges$ in a prediction error notion is not informative, as our goal is to compute a feasible assignment $M$ (i.e., $M \subseteq \declaredEdges$). All these cases are captured by the definition of our prediction error as in \eqref{eq:eta}. 

Given a fixed error parameter $\hat{\eta} \in [0,1]$, instances $\mathcal{I}_{\gap^+} = (G[\declaredEdges], \values, \sizes, \caps, \hat{M})$ with $\eta(\mathcal{I}_{\gap^+}) \le \hat{\eta}$ constitute the class of instances of prediction error at most $\hat{\eta}$. 

\minisec{Approximation Objectives}
We introduce the following three approximation notions for the Generalized Assignment Problem with predictions. 
\begin{itemize}
\sloppy
    \item \textit{Consistency:} A mechanism $\mathcal{M}$ is \emph{$\alpha$-consistent} with $\alpha \ge 1$ if for every instance $\mathcal{I}_{\gap^+} = (G[\declaredEdges], \values, \sizes, \caps, \hat{M})$ with a perfect prediction, i.e., $v(\hat{M} \cap \declaredEdges) = v(M^*_{\declaredEdges})$, the computed matching $M=\mathcal{M}({\declaredEdges})$ satisfies $\alpha \cdot v(M) \ge v(M^*_{\declaredEdges}).$
    
    \item \textit{Robustness:} A mechanism $\mathcal{M}$ is \emph{$\beta$-robust} with $\beta \ge 1$ if for every instance $\mathcal{I}_{\gap^+} = (G[\declaredEdges], \values, \sizes, \caps, \hat{M})$ with an arbitrary prediction $\hat{M}$, the computed matching $M=\mathcal{M}({\declaredEdges})$ satisfies $\beta \cdot v(M) \ge v(M^*_{\declaredEdges}).$
    
    \item \textit{Approximation:} A mechanism $\mathcal{M}$ is \emph{$g(\hat{\eta})$-approximate} with $g(\hat{\eta}) \ge 1$ if for every instance $\mathcal{I}_{\gap^+} = (G[\declaredEdges], \values, \sizes, \caps, \hat{M})$ with a prediction error of at most $\hat{\eta} \in[0,1]$, the computed matching $M=\mathcal{M}({\declaredEdges})$ satisfies $g(\hat{\eta}) \cdot v(M) \ge v(M^*_{\declaredEdges}).$
\end{itemize}

\subsection{Private Graph Model and Assignment Mechanisms} \label{subsec:privateGraph}

We study the Generalized Assignment Problem with predictions in a strategic environment. More specifically, we are interested in the setting where the agents are strategic and might misreport their actual compatibilities. To this aim, we use the \emph{private graph model} introduced by \citet{dughmi10}.

Here, each agent $i \in \agentSet$ has a private compatibility set $E_i \subseteq \set{(i,j) \in \agentSet \times \taskSet}$ specifying the set of edges that are truly compatible for $i$. Crucially, the compatibility set $\privateEdges_i$ is \emph{private} information, i.e., $\privateEdges_i$ is only known to agent $i$. In addition, each agent $i \in \agentSet$ declares a public compatibility set $\declaredEdges_i \subseteq \sset{(i, j)}{j \in \taskSet}$. The interpretation is that $i$ claims to be compatible with resource $j \in \taskSet$ if and only if $(i,j) \in \declaredEdges_i$; but these declarations might not be truthful, i.e., $\declaredEdges_i \neq \privateEdges_i$. We define $\declaredEdges = \cup_{i \in \agentSet} \declaredEdges_i \subseteq \agentSet \times \taskSet$ to refer to the union of all compatibility sets declared by the agents. We use $G[\declaredEdges] = (\agentSet \cup \taskSet, \declaredEdges)$ to refer to the compatibility graph induced by the declared edges in $\declaredEdges$. Similarly, we use $G[\privateEdges]$ to refer to the compatibility graph induced by the true compatibility sets of the agents, i.e., $\privateEdges = \cup_{i \in \agentSet} \privateEdges_i$. We refer to $G[\privateEdges]$ as the \emph{private graph model} (or, \emph{private graph} simply). 

\sloppy
Subsequently, we use $\left (\mathcal{I}_{\gap^+}, G[\privateEdges] \right )$ to refer to an instance $\mathcal{I}_{\gap^+}$ of the Generalized Assignment Problem with predictions in the private graph model $G[\privateEdges]$. 
We note that all input data of $\mathcal{I}_{\gap^+} = (G[\declaredEdges], \values, \sizes, \caps, \hat{M})$ is public information accessible by the mechanism, while the private graph $G[\privateEdges]$ is private information. 
For the sake of conciseness, we often omit input parameters which remain fixed; in fact, most of the time is will be sufficient to refer explicitly to the compatibility declarations $\declaredEdges$ only.

Given an instance $\left (\mathcal{I}_{\gap^+}, G[\privateEdges] \right )$ with compatibility declarations $\declaredEdges$, a deterministic mechanism $\mathcal{M}$ computes an assignment $M = \mathcal{M}(\declaredEdges)$ that is feasible for $\declaredEdges$. The \emph{utility} $u_i$ of agent $i \in \agentSet$ is defined as
\begin{equation} \label{eq:def:utility}
u_i(\declaredEdges) =
    \begin{cases}
	   v_{ij} & \text{if $(i,j) \in M \cap \privateEdges_i$,} \\
        0 & \text{otherwise.}
	\end{cases} 
\end{equation}
Note that the utility of agent $i$ is $v_{ij}$ if (1) $i$ is assigned to resource $j$ in $M$, i.e., $(i,j) \in M$, and (2) $i$ is truly compatible with resource $j$, i.e., $(i,j) \in \privateEdges_i$. In particular, the utility of $i$ is 0 if $i$ is unassigned in $M$, or if $i$ is matched to an incompatible resource. 
We assume that each agent wants to maximize their utility. To this aim, an agent $i$ might misreport their true compatibilities by declaring a compatibility set $\declaredEdges_i \neq \privateEdges_i$.
Note that we are considering a multi-parameter mechanism design problem here.

Note that if $M = \mathcal{M}(\declaredEdges)$ is the assignment computed by $\mathcal{M}$ for truthfully declared compatibilities, i.e., $\declaredEdges = \privateEdges$, then its value $v(M)$ (as defined in \eqref{def:val}) is equal to the sum of the utilities of the agents.

\minisec{Incentive Compatibility Objectives}
The following incentive compatibility notions will be relevant in this paper. 

\begin{itemize}
\sloppy
    \item \textit{Strategyproofness:} A mechanism $\mathcal{M}$ is \emph{strategyproof} (SP) if for every instance $\mathcal{I}_{\gap^+} = (G[\declaredEdges], \values, \sizes, \caps, \hat{M})$ and private graph $G[\privateEdges]$, it holds that for each agent $i \in \agentSet$
    \[
    \forall \declaredEdges'_i: \qquad 
    u_i(\privateEdges_i,{\declaredEdges}_{-i}) 
    \ge u_i(\declaredEdges'_i,{\declaredEdges}_{-i}).
    \]

    \item \textit{Group-Strategyproofness:} A mechanism $\mathcal{M}$ is \emph{group-strategyproof} (GSP) if for every instance $\mathcal{I}_{\gap^+} = (G[\declaredEdges], \values, \sizes, \caps, \hat{M})$ and private graph $G[\privateEdges]$, it holds that for every subset $S \subseteq \agentSet$
    \[
    \forall \declaredEdges'_S: \exists i \in S \qquad 
    u_i({\privateEdges}_S, {\declaredEdges}_{-S}) 
    \ge u_i({\declaredEdges}'_S, {\declaredEdges}_{-S}).
    \]
\end{itemize}

\minisec{Randomized Mechanisms} 
We also devise randomized mechanisms. A randomized mechanism is a probability distribution over a finite set of deterministic mechanisms. By extension, given a randomized mechanism $\mathcal{M}$ and an instance of the Generalized Assignment Problem with predictions $\mathcal{I}_{\gap^+}$, $\mathcal{M}(\mathcal{I}_{\gap^+})$ is a probability distribution over a finite set of feasible assignments for $\mathcal{I}_{\gap^+}$.

All randomized mechanisms we suggest in this work are universally strategyproof. A randomized mechanism $\mech$ is \emph{universally strategyproof} if its probability distribution is over a finite set of deterministic strategyproof mechanisms. The notion of \emph{universally group-strategyproof} is defined analogously. 
The three approximation objectives defined in Section~\ref{sec:GAPwP} extend naturally to randomized mechanisms (simply by replacing the value of an assignment with the expected value in the respective definition). 

\subsection{Stable Matching Preliminaries} \label{sec:stable-matching}
We introduce the notions and results from stable matching theory that we need in this paper. 
We are given a complete bipartite graph $G=(\agentSet \cup \taskSet, \privateEdges)$ consisting of a set of agents $\agentSet = [n]$, a set of resources $\taskSet = [m]$ and a set of edges $\privateEdges = \agentSet \times \taskSet$. Each agent $i \in \agentSet$ has a strict total preference order $\succ_i$ over $\privateEdges_i \cup \set{\emptyset}$, where $\privateEdges_i = \set{(i,j) \in \privateEdges}$ is the set of edges incident to $i$. Given two distinct edges $e, e' \in \privateEdges_i$, agent $i$ prefers $e$ over $e'$ if $e \succ_i e'$. The position of $\emptyset$ in the order indicates whether an edge is acceptable or not: $e \in \privateEdges_i$ is \emph{acceptable} if $e \succ_i \emptyset$; otherwise, $e$ is \emph{unacceptable}. Agent $i$ prefers to remain unmatched rather than being matched through an unacceptable edge. 
Similarly, each resource $j \in \taskSet$ has a strict total preference order $\succ_j$ over $\privateEdges_j \cup \set{\emptyset}$, where $\privateEdges_j = \set{(i,j) \in \privateEdges}$. All notions introduced above naturally extend to resource $j$ with preference order $\succ_j$. We refer to $(G[\agentSet \times \taskSet], (\succ_i)_{i \in\agentSet}, (\succ_j)_{j \in \taskSet})$ as a \emph{standard preference system} (i.e., if $G$ is complete and all preference orders are strict).\footnote{The model can be extended to incorporate arbitrary assignment restrictions $\privateEdges \subseteq \agentSet \times \taskSet$ and more complex resource preferences (see, e.g., \citep{HM05}), but we do not need this here.} 

We say that an edge $(i,j) \in \privateEdges$ is \emph{compatible} if $i$ is acceptable for $j$ and vice versa. A matching $M \subseteq \privateEdges$ is a subset of edges that are compatible such that no two distinct edges in $M$ share a common resource or agent. We use $M(i) = j$ to denote the \emph{mate} of agent $i$ with respect to $M$, i.e., $(i,j) \in M$; we write $M(i) = \emptyset$ if $i$ remains unmatched. Analogously, we use $M(j)$ to denote the mate of $j$. We say that an edge $e = (i,j) \in \privateEdges$ \emph{blocks} $M$ if (1) $i$ prefers to be matched through $e$ instead of $(i, M(i))$, i.e., $e \succ_i (i, M(i))$, and (2) $j$ prefers to be matched through $e$ instead of $(M(j), j)$, i.e., $e \succ_j (M(j),j)$. A matching $M$ is \emph{stable} if it is not blocked by any edge in $\privateEdges$.

In their seminal work, \citet{GS62} proposed an algorithm, also known as the \emph{Gale-Shapley algorithm} or \emph{\defacc}, that computes a stable matching for any given standard preference system. 
This result had several far-reaching consequences; indeed, it sparked the development of an entire theory on stable matchings. 
One of the consequences that will be useful in this paper is in the context of incentive compatibility: If the agents are strategic and can misreport their preference orders arbitrarily, then the agent-proposing \defacc guarantees group-strategyproofness. More precisely, no group of agents can jointly misreport their preferences such that each member of the group is strictly better off.

We summarize this result below. This result is attributed to \citet{Roth82} (strategyproofness) and, independently, to \citet{DF81} (group-strategyproofness). Later, \citet{GS85} gave a greatly simplified proof of group-strategyproofness (which we also adapt in Appendix~\ref{app:GSP} to our setting considered in Section~\ref{sec:matching}).

\begin{theorem}\label{thm:GS}
Let $(G[\agentSet \times \taskSet], (\succ_i)_{i \in\agentSet}, (\succ_j)_{j \in \taskSet})$ be a standard preference system. 
Then the agent-proposing \defacc is group-strategyproof. 
\end{theorem}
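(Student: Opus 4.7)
My plan is to proceed by contradiction, reducing the statement to a blocking-pair lemma and exploiting the asymmetry that, in this model, only the agents strategize while the resources' preferences are fixed inputs to the mechanism. Suppose that some coalition $S \subseteq \agentSet$ jointly misreports preferences $\succ'_S$ such that every $i \in S$ strictly improves, and let $\mu$ and $\mu'$ denote the matchings produced by the agent-proposing \defacc under the truthful profile $(\succ_i)_{i \in \agentSet}$ and the manipulated profile $(\succ'_S, \succ_{-S})$, respectively. By the assumed improvement $\mu'(i) \succ_i \mu(i)$ for all $i \in S$, and by construction $\mu'$ is stable under the reported profile while $\mu$ is the agent-optimal stable matching under the true profile.

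The key tool is a variant of the Blocking Lemma: if $\mu$ is the agent-optimal stable matching under the true preferences $\succ$ and $\nu$ is any individually rational matching with $\nu(i) \succ_i \mu(i)$ for every $i$ in a non-empty subset $S \subseteq \agentSet$, then $\nu$ is blocked under $\succ$ by some pair $(i^*, j^*)$ with $i^* \notin S$ and $j^* \in \nu(S)$. Applying this to $\nu := \mu'$ yields an agent $i^* \notin S$ and a resource $j^* = \mu'(i')$ (for some $i' \in S$) such that $j^* \succ_{i^*} \mu'(i^*)$ and $i^* \succ_{j^*} i'$. Since $i^* \notin S$ reports truthfully, $\succ'_{i^*} = \succ_{i^*}$, and since resources do not strategize the mechanism sees the true $\succ_{j^*}$. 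Hence the very same pair $(i^*, j^*)$ also blocks $\mu'$ under the reported profile, contradicting stability of $\mu'$ as the output of the \defacc on those reports. This part of the argument is short; the work is in the lemma.

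The principal obstacle is therefore establishing the Blocking Lemma itself, adapted to our setting with the outside option $\emptyset$ and possibly unmatched agents. The cleanest route is by induction on the execution of the agent-proposing \defacc under $\succ$: one shows that, at some step of the run, a resource $j^* \in \nu(S)$ must receive (or have already received) a proposal from an agent $i^* \notin S$ that $j^*$ strictly prefers to its $\nu$-partner, which exhibits the required blocking pair. An equivalent structural proof proceeds by examining the connected components of the symmetric difference $\mu \triangle \nu$: since $\mu$ is the agent-optimal stable matching, any improving alternating path starting from an agent in $S$ must exit $S$, and the exit vertex together with its incident resource in $\nu(S)$ forms the blocking pair. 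Degenerate cases involving the $\emptyset$ outside option (agents unmatched under $\mu$ or $\nu$, unacceptable partners) require only extra bookkeeping; the core stable-matching machinery carries over unchanged, and once the lemma is in place the GSP conclusion follows as sketched above.
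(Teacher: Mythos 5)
Your overall route is the same one the paper takes in Appendix~\ref{app:GSP}, which adapts the argument of \citet{GS85}: reduce group-strategyproofness to a Blocking Lemma (the paper's Lemma~\ref{lem:GSP-key}), and then observe that the guaranteed blocking agent lies outside the deviating coalition, hence reports truthfully, while the resources' preferences are fixed, so the same pair blocks the manipulated outcome under the reported profile --- contradicting stability of the \defacc's output. That reduction is carried out correctly in your write-up, including the (easy to overlook) point that individual rationality of $\mu'$ under the true preferences follows from the assumed strict improvement of every coalition member.

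The gap is that the Blocking Lemma --- which you yourself identify as ``where the work is'' --- is never actually proved. You offer two candidate strategies (induction on the run of the \defacc, or an alternating-path analysis of $\mu \triangle \nu$) but carry out neither. The paper's Lemma~\ref{lem:GSP-key} is precisely this step, and its proof is a nontrivial case analysis on whether $Y' \setminus Y$ is empty, where $Y = M(X)$ and $Y' = M'(X)$ are the resources matched to the improving agents; the second case in particular requires tracking the temporal order of proposals and rejections during the execution, which your one-sentence sketch does not substitute for. A secondary imprecision: you state the lemma for an \emph{arbitrary} nonempty subset $S$ of improvers with the conclusion $j^* \in \nu(S)$; in that generality the $j^* \in \nu(S)$ clause is not justified for proper subsets. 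The correct statement (and the one the paper uses) takes $X$ to be the \emph{full} set of agents preferring $\nu$ to $\mu$ and yields $i^* \notin X$; since the coalition satisfies $S \subseteq X$, this already gives the truthful blocking agent you need, and the clause $j^* \in \nu(S)$ plays no role in the contradiction.
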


Note that for this result to hold it is crucial that only the agents can misreport their preferences; in particular, the preferences of the resources are assumed to be fixed and public.

\subsection{Lexicographic Extensions, Rank and Sort Operators} \label{sec:lex-order-and-sort}

The notions defined here will be useful throughout the paper.

Let $X = \set{x_1, \dots, x_n}$ be a set of $n \ge 1$ elements and assume that each element $x_i \in X$ is associated with $k \ge 1$ numerical values $z_1(x_i), \dots, z_k(x_i)$. We define $\lexext$ as the partial order over $X$ that we obtain by comparing the elements in $X$ lexicographically with respect to $(z_1, \dots, z_k)$, i.e.,
\begin{equation}\label{eq:lexext}
\forall i, j \in [n]: \qquad 
x_i \lexext x_j 
\quad :\Leftrightarrow\quad
\big(z_1(x_{i}), \dots, z_k(x_{i})\big)
\lex
\big(z_1(x_{j}), \dots, z_k(x_{j})\big).
\end{equation}
We also say that $\lexext$ is the \emph{extended lexicographic order} of $X$ with respect to $(z_1, \dots, z_k)$. We write $\lexexts$ instead of $\lexext$ if the order is strict.

We also introduce an operator $\mrank$ that orders the elements in $X$ by lexicographic decreasing order of their values $(z_1, \dots, z_k)$. More formally, given $X$ and $(z_1, \dots, z_k)$ as above, we define $\mrank(X, (z_1, \dots, z_k))= (\pi_1, \dots, \pi_n) \in \mathcal{S}_n$ such that for all $i, j \in [n]$ with $i < j$, we have:\footnote{Here, $\mathcal{S}_n$ denotes the set of all permutations of $[n]$.}
\[ 
\big(z_1(x_{\pi_i}), \dots, z_k(x_{\pi_i})\big)
\lex
\big(z_1(x_{\pi_j}), \dots, z_k(x_{\pi_j})\big).
\]
Further, we define an operator $\msort$ that sorts the elements in $X$ according to this order, i.e., 
\begin{equation}\label{eq:sort}
\msort(X, (z_1, \dots, z_k)) = \mlist{x_{\pi_1}, \dots, x_{\pi_n}}.    
\end{equation}

\section{Impossibility Results and the Baseline Mechanism} \label{sec:lowerBounds}

We prove a lower bound on the best possible trade-off in terms of consistency and robustness guarantees achievable by any deterministic strategyproof mechanism for GAP. We also derive a lower bound in terms of the error parameter $\eta$. Finally, we introduce a trivial mechanism, called \trust, that serves as a baseline mechanism in subsequent sections.

\minisec{Impossibility Results} We prove our lower bound for both the Bipartite Matching Problem (\bmp) and the Value Consensus GAP (\vcgap) with predictions in the private graph model. Clearly, this lower bound extends to all variants of $\gap^+$ that contain $\bmp^+$ or $\vcgap^+$ as a special case (see Figure \ref{fig:gap-variants}). 

\begin{restatable}{theorem}{threeone}\label{th:tradeOffConsisBoundedRobust}
Let $\gamma \ge 1$ be fixed arbitrarily. 
Then no deterministic strategyproof mechanism for $\bmp^+$ can achieve $(1+ \nicefrac{1}{\gamma})$-consistency and $(1 + \gamma - \epsilon)$-robustness for any $\epsilon > 0$.
\end{restatable}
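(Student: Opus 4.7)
I would prove the theorem by contradiction. Suppose a deterministic strategyproof mechanism $\mathcal{M}$ for $\bmp^+$ is both $(1+\nicefrac{1}{\gamma})$-consistent and $(1+\gamma-\epsilon)$-robust for some $\epsilon>0$; set $\alpha=1+\nicefrac{1}{\gamma}$ and $\beta=1+\gamma-\epsilon$. The goal is to exhibit a small \bmp instance in the private graph model on which these two guarantees, combined with SP, cannot coexist.

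Conceptually, I would adapt the impossibility example sketched in the introduction (Figure~\ref{fig:privateModel}). Recall that there, with fixed prediction $\hat{M}=\{(1,h)\}$ and $v_{1,h}=1$, $\alpha$-consistency forces the mechanism to output $\{(1,h)\}$ when $v_{2,h}<\nicefrac{1}{\alpha}$ (perfect prediction) while $\beta$-robustness forces $\{(2,h)\}$ when $v_{2,h}>\beta$ (imperfect prediction), and a value manipulation by agent~$2$ produces a profitable deviation. Since values are public in the private graph model, I would instead use the fact that hiding or revealing a single declared compatibility can change $M^*_{\declaredEdges}$ and, in turn, flip the mechanism's perception of whether $\hat{M}$ is perfect. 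Concretely, I would construct two declared compatibility graphs $\declaredEdges^A,\declaredEdges^B$ on a handful of agents and resources, differing only in one agent's declaration, such that: (i) in $\declaredEdges^A$, $\hat{M}$ is perfect and $\alpha$-consistency forces $\mathcal{M}$ to match some agent $i$ to a low-value or null resource; (ii) in $\declaredEdges^B$, $\hat{M}$ is imperfect and $\beta$-robustness forces $\mathcal{M}$ to assign $i$ to a truly compatible, strictly more valuable resource. The deviation from $\declaredEdges^A$ to $\declaredEdges^B$ would then strictly improve $i$'s realized utility, contradicting strategyproofness.

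The main obstacle is that compatibility manipulations in the private graph model are one-sided: under-declarations can only shrink the mechanism's feasible set, while over-declarations risk zero utility from matches to truly incompatible resources (cf.~\eqref{eq:def:utility}). The instance therefore has to be engineered so that the deviating agent is still truly compatible with whichever resource they end up matched to after the manipulation, while their truthful match yields strictly smaller (possibly zero) utility. Tightly recovering the exact bound $\beta \ge 1+\gamma$, rather than a weaker trade-off, requires the edge values to be placed precisely at the consistency and robustness thresholds ($\nicefrac{1}{\alpha}$ and $\beta$), with auxiliary slack parameters going to $0^+$; the claimed impossibility then emerges in the limit. I expect this threshold-tuning to be the delicate part, since both the consistency-induced and the robustness-induced output constraints must simultaneously become tight at the manipulated declaration in order to pin down the mechanism's behavior uniquely enough to force the violation.
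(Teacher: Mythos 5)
Your high-level strategy is the right one and matches the paper's: argue by contradiction, build a small instance in which a single agent hiding one compatible edge moves the mechanism between the ``perfect prediction'' regime (where consistency bites) and the ``imperfect prediction'' regime (where robustness bites), and tune the edge values near the two thresholds with a slack parameter $\bar\epsilon \to 0^+$. You also correctly identify that the manipulation must be an under-declaration and that the deviating agent must remain truly compatible with whatever they end up matched to.

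The gap is in step (i) of your plan: on the truthful profile with a perfect prediction, $(1+\nicefrac{1}{\gamma})$-consistency does \emph{not} pin the mechanism down to a single output, so a single pair $(\declaredEdges^A,\declaredEdges^B)$ with a single deviating agent cannot force the contradiction. In the paper's instance (agents $1,2$, resources $a,b$, values $v_{1a}=\gamma-\bar\epsilon$, $v_{1b}=\gamma$, $v_{2a}=1-\bar\epsilon$, $v_{2b}=1+2\bar\epsilon$, prediction $\hat M=\set{(1,a),(2,b)}$), consistency only forces the truthful output to have value strictly greater than $\gamma$, which leaves \emph{two} candidate matchings: $\hat M$ itself and $\set{(1,b),(2,a)}$. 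Refuting them requires different deviating agents and, crucially, different guarantees. If the mechanism outputs $\set{(1,b),(2,a)}$, agent $2$ hides $(2,a)$; the prediction remains perfect on the reduced instance, and \emph{consistency} there forces $(2,b)$ into the output, a profitable deviation. If it outputs $\hat M$, agent $1$ hides $(1,a)$; the prediction becomes imperfect, and \emph{robustness} forces a value exceeding $1+2\bar\epsilon$, which is only achievable by including $(1,b)$ --- again a profitable deviation, and this is the only place where $\beta = 1+\gamma-\epsilon$ enters (via the choice $\bar\epsilon < \epsilon/(3+2\gamma-2\epsilon)$). Your sketch describes only the second scenario; without the case analysis, the mechanism can simply select the other matching on the truthful profile and evade the argument. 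Adding this two-case structure, together with the explicit instance and the bound on $\bar\epsilon$, is what completes the proof.
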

\begin{figure}[t]
\centering
\begin{subfigure}[b]{0.3\linewidth}
    \centering
    \begin{tikzpicture}[auto, node_style/.style={circle,draw=black}]
        \node[agent] (v1) at (-1,1) {1}; \node[agent] (v2) at (-1,-1) {2}; \node[resource] (a) at (1.4,1) {$a$}; \node[resource] (b) at (1.4,-1) {$b$};
        \draw [line width = 3pt] (v1) edge node{$\gamma -\bar{\epsilon}$} (a); \draw (v1) edge node[yshift=0.3cm,xshift=-0.5cm]{$\gamma$} (b); \draw [line width = 3pt] (b) edge node{$1 + 2 \bar{\epsilon}$} (v2);
    \end{tikzpicture}
    \caption{ }
    \label{fig:tradeoffConsisRobusta}
    \end{subfigure}
\begin{subfigure}[b]{0.3\linewidth}
    \centering
    \begin{tikzpicture}[auto, node_style/.style={circle,draw=black}]
        \node[agent] (v1) at (-1,1) {1}; \node[agent] (v2) at (-1,-1) {2}; \node[resource] (a) at (1.4,1) {$a$}; \node[resource] (b) at (1.4,-1) {$b$};
        \draw [line width = 3pt] (v1) edge node{$\gamma -\bar{\epsilon}$} (a); \draw (v1) edge node[yshift=0.3cm,xshift=-0.5cm]{$\gamma$} (b); \draw (v2) edge node[yshift=-0.5cm, xshift=-0.5cm]{$1 - \bar{\epsilon}$} (a);  \draw [line width = 3pt] (b) edge node{$1 + 2 \bar{\epsilon}$} (v2);
    \end{tikzpicture}
    \caption{ }
    \label{fig:tradeoffConsisRobustb}
    \end{subfigure}
\begin{subfigure}[b]{0.3\linewidth}
    \centering
    \begin{tikzpicture}[auto, node_style/.style={circle,draw=black}]
        \node[agent] (v1) at (-1,1) {1}; \node[agent] (v2) at (-1,-1) {2}; \node[resource] (a) at (1.4,1) {$a$}; \node[resource] (b) at (1.4,-1) {$b$};
        \draw (v1) edge node[yshift=0.3cm,xshift=-0.5cm]{$\gamma$} (b); \draw (v2) edge node[yshift=-0.5cm, xshift=-0.5cm]{$1 - \bar{\epsilon}$} (a);  \draw [line width = 3pt] (b) edge node{$1 + 2 \bar{\epsilon}$} (v2);
    \end{tikzpicture}
    \caption{ }
    \label{fig:tradeoffConsisRobustc}
    \end{subfigure}
\caption{Instance used in the lower bound proof of Theorem~\ref{th:tradeOffConsisBoundedRobust}.}
 \label{fig:tradeoffConsisRobust}
\end{figure}
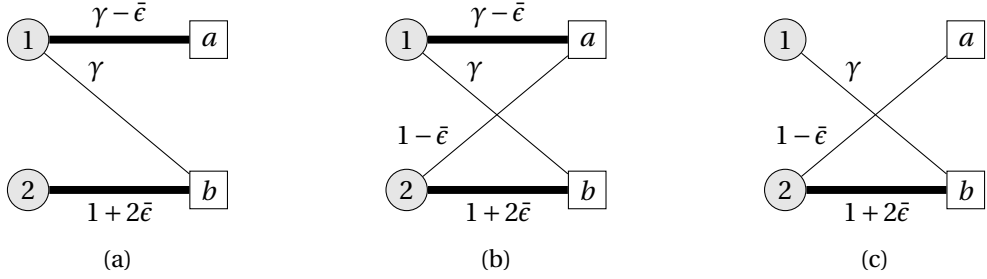
\begin{proof}
    The proof is by contradiction. 
    Suppose that for some $\gamma \ge 1$ there is a deterministic strategyproof mechanism $\mech$ that is $(1+ \nicefrac{1}{\gamma})$-consistent and $(1+\gamma-\epsilon)$-robust for some $\epsilon > 0$.
    Fix $\bar{\epsilon}$ to be a constant in $(0, \frac{\epsilon}{3+2\gamma-2\epsilon})$. 
    
    Consider the instance $\mathcal{I}_{\bmp^+} = (G[\declaredEdges], \values, \hat{M})$ with private graph $G[\privateEdges]$ depicted in Figure~\ref{fig:tradeoffConsisRobustb}, where the compatibility declarations are truthful, i.e., $\declaredEdges = \privateEdges$.
    Note that the values $\values$ of this instance depend on $\gamma$ and $\bar{\epsilon}$. Here, the predicted matching is $\hat{M} = \set{(1,a), (2,b)}$ (indicated in bold). Recall that we use $\mech(\declaredEdges)$ to denote the matching computed by $\mech$ for compatibility declarations $\declaredEdges$. Note that the optimal matching in $G[\declaredEdges]$ is $\mstar{\declaredEdges} = \hat{M}$. Because $\mech$ is $(1+\nicefrac{1}{\gamma})$-consistent, we have
    \[
        \Big(1+\frac{1}{\gamma}\Big) v(\mech(\declaredEdges)) \geq v(\mstar{\declaredEdges}) = 1+\gamma+\bar{\epsilon} >1 + \gamma.
    \]
    Here the last inequality holds because $\bar{\epsilon} > 0$.
    Dividing both sides by $(1+\gamma)/\gamma$ leads to $v(\mech(\declaredEdges)) > \gamma$. 
    Note that there are only two matchings in $G[\declaredEdges]$ with a value strictly larger than $\gamma$, namely $M_1 = \set{(1,b),(2,a)}$ and $M_2 = \set{(1,a), (2,b)}$. In particular, the above implies that the matching output by $\mech$ on $\declaredEdges$ must be either $\mech(\declaredEdges) = M_1$ or $\mech(\declaredEdges) = M_2$. We consider these two cases. 

    \medskip
   \underline{Case 1:} $\mech(\declaredEdges) = M_1 = \set{(1,b),(2,a)}$. Note that the utility of agent $2$ with respect to $\declaredEdges$ is $u_2(\mech(\declaredEdges)) = u_2(M_1) = v_{2a} = 1-\bar{\epsilon}$.
    Consider the compatibility declarations $\declaredEdges' = \declaredEdges \setminus \set{(2,a)}$ that we obtain from $\declaredEdges$ if agent $2$ deviates by hiding their edge $(2,a)$. The respective instance is depicted in Figure~\ref{fig:tradeoffConsisRobusta}. 
    Suppose we run $\mech$ on $\declaredEdges'$.
    Since $\mech$ is strategyproof, edge $(2,b)$ cannot be contained in $\mech(\declaredEdges')$. (To see this, note that otherwise the utility of agent $2$ with respect to $\declaredEdges'$ would be $v_{2b} = 1+2\bar{\epsilon} > 1-\bar{\epsilon} = u_2(\mech(\declaredEdges))$, contradicting that $\mech$ is strategyproof.)
    But then, we must have $v(\mech(\declaredEdges')) \le \max\{v_{1a}, v_{1b}\}= \gamma$. On the other hand, because $\mech$ is $(1+\nicefrac{1}{\gamma})$-consistent, it must hold that 
    \[
        \Big(1+\frac{1}{\gamma}\Big) v(\mathcal{M}(\declaredEdges')) 
        \geq v(\mstar{\declaredEdges'})
        = 1+\gamma+\bar{\epsilon} > 1 + \gamma.
    \]
    Here the last inequality holds because $\bar{\epsilon} > 0$, leading to a contradiction as we obtain $v(\mech(\declaredEdges')) > \gamma$.

    \medskip
    \underline{Case 2:} $\mech(\declaredEdges) = M_2 = \set{(1,a), (2,b)}$. Note that the utility of agent $1$ with respect to $\declaredEdges$ is $u_1(\mech(\declaredEdges)) = u_1(M_2) = v_{1a} = \gamma-\bar{\epsilon}$.
    Consider the compatibility declarations $\declaredEdges' = \declaredEdges \setminus \set{(1,a)}$ that we obtain from $\declaredEdges$ if agent $1$ deviates by hiding their edge $(1,a)$. The respective instance is depicted in Figure~\ref{fig:tradeoffConsisRobustc}. 
    Suppose we run $\mech$ on $\declaredEdges'$.
    Analogously to the argument given above, since $\mech$ is strategyproof, edge $(1,b)$ cannot be contained in $\mech(\declaredEdges')$. (To see this, note that otherwise the utility of agent $1$ with respect to $\declaredEdges'$ would be $v_{1b} = \gamma > \gamma-\bar{\epsilon} = u_1(\mech(\declaredEdges))$, contradicting that $\mech$ is strategyproof.)
    But then, we must have $v(\mech(\declaredEdges')) \le \max\{v_{2a}, v_{2b}\}= 1+2\bar{\epsilon}$. On the other hand, because $\mech$ is $(1+\gamma-\epsilon)$-robust, it must hold that 
    \[
    (1+\gamma-\epsilon) v(\mech(\declaredEdges')) \ge v(\mstar{\declaredEdges'}) = 1 + \gamma - \bar{\epsilon}.
    \]
     Dividing both sides by $(1+\gamma-\epsilon)$, we obtain 
     \[
     v(\mech(\declaredEdges')) \ge \frac{1 + \gamma - \bar{\epsilon}}{1+\gamma-\epsilon} > 1+2\bar{\epsilon},
     \]
     where the last inequality can be verified to hold if $\bar{\epsilon} < \frac{\epsilon}{3 + 2 \gamma - 2\epsilon}$ (which is a restriction we impose on the choice of $\bar{\epsilon}$). We thus obtain a contradiction.
\end{proof}

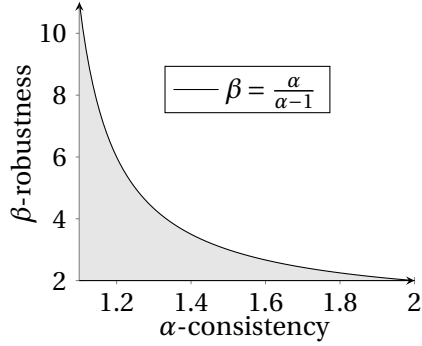
\begin{figure}[t]
\centering
    \begin{tikzpicture}[scale=0.65]
        \begin{axis}[ axis lines = left, xlabel = \( \alpha\text{-consistency } \), ylabel = {\(  \beta\text{-robustness } \)}, legend style={yshift=.5cm, xshift=.5cm}]
        \addplot [name path = A, domain=1.1:2, samples=100, color=black] {x/(x-1)};
        \addlegendentry{\(\beta = \frac{\alpha}{\alpha-1}\)};
        \addplot [name path = B, domain=1.1:2, samples=100, color=black] {2};
        \addplot [gray!20] fill between [of = A and B, soft clip={domain=1.1:2}];
        \end{axis}
    \end{tikzpicture}
    \caption{Impossibility trade-off in terms of $\alpha$-consistency and $\beta$-robustness. No deterministic strategyproof mechanism for \bmp can achieve a combination of $\alpha$ and $\beta$ in the gray area.} 
    \label{fig:tradeoffConsisRobustGraph}
\end{figure}

Note that the lower bound holds independently of any computational assumptions. For the setting without predictions, \citet{dughmi10} proved a lower bound of $2$ for \bmp. An illustration of the trade-off between consistency and robustness proven in Theorem~\ref{th:tradeOffConsisBoundedRobust} 
is given in Figure \ref{fig:tradeoffConsisRobustGraph}.
Note that as $\gamma \rightarrow \infty$, the consistency guarantee converges to $1$, but the robustness will be unbounded in this case. Therefore, no deterministic strategyproof mechanism can achieve $1$-consistency and bounded robustness. 
Also, note that in Theorem \ref{th:tradeOffConsisBoundedRobust}, $\gamma = 1+\epsilon$ for any fixed $\epsilon > 0$ is needed for a strategyproof mechanism to be 2-robust. However, this leads to an impossibility of $(1 + \frac{1}{1+\epsilon})$-consistency for any $\epsilon >0$. 
In particular, there is no deterministic strategyproof mechanism that is 2-robust and has a consistency strictly lower than $2$.

Theorem~\ref{th:tradeOffConsisBoundedRobust} proves a lower bound in terms of consistency versus robustness. The next theorem establishes a lower bound in terms of the error parameter $\eta$ as defined in \eqref{eq:eta} for any deterministic strategyproof mechanism that is $(1+\nicefrac{1}{\gamma})$-consistent.

\begin{restatable}{theorem}{threetwo}\label{th:lowerBoundWithError} 
Let $\gamma \ge 1$ be fixed arbitrarily.
Then no deterministic strategyproof mechanism for $\bmp^+$ can be $(1+\nicefrac{1}{\gamma})$-consistent and $(\frac{1}{1-\eta + \epsilon})$-approximate with $\epsilon > 0$ for any 
$\eta \in (0, \nicefrac{\gamma}{1+ \gamma}]$. 
\end{restatable}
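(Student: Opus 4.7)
The plan is to re-use the two-agent, two-resource instance from the proof of Theorem~\ref{th:tradeOffConsisBoundedRobust}, but to parameterize the auxiliary quantity $\bar\epsilon$ as a function of the target error $\eta$ so that the relevant deviated instance has prediction error at most $\eta$, and then to replace the robustness step in Case~2 of that proof by the approximation bound at level $\eta$.

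Suppose for contradiction that a deterministic SP mechanism $\mech$ is $(1+\nicefrac{1}{\gamma})$-consistent and $(1/(1-\eta+\epsilon))$-approximate. I would take the same values and prediction as in Theorem~\ref{th:tradeOffConsisBoundedRobust}: $v_{1a}=\gamma-\bar\epsilon$, $v_{1b}=\gamma$, $v_{2a}=1-\bar\epsilon$, $v_{2b}=1+2\bar\epsilon$, truthful declaration $\declaredEdges=\privateEdges=\{(1,a),(1,b),(2,a),(2,b)\}$, and prediction $\hat M=\{(1,a),(2,b)\}$. Since $\hat M$ is the optimal matching on $\declaredEdges$ (with value $\gamma+1+\bar\epsilon$), $(1+\nicefrac{1}{\gamma})$-consistency forces $v(\mech(\declaredEdges))>\gamma$, so $\mech(\declaredEdges)$ must be one of the two perfect matchings. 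I would split into the same two cases. If $\mech(\declaredEdges)=\{(1,b),(2,a)\}$, the Case~1 argument of Theorem~\ref{th:tradeOffConsisBoundedRobust} transfers verbatim: it uses only SP and consistency on the deviated (still perfect-prediction) instance and does not rely on the approximation/robustness parameter at all.

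The new work is in the other case. If $\mech(\declaredEdges)=\{(1,a),(2,b)\}$, I would apply SP on $\declaredEdges'=\declaredEdges\setminus\{(1,a)\}$ (since $v_{1b}>v_{1a}$) to rule out $(1,b)\in\mech(\declaredEdges')$, giving $v(\mech(\declaredEdges'))\le v_{2b}=1+2\bar\epsilon$. A direct computation shows that the optimal value on $\declaredEdges'$ is $\gamma+1-\bar\epsilon$ and that $v(\hat M\cap\declaredEdges')=v_{2b}=1+2\bar\epsilon$, so $\eta(\declaredEdges')=1-(1+2\bar\epsilon)/(\gamma+1-\bar\epsilon)$. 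I would pick $\bar\epsilon$ such that this equals $\eta$ --- solving gives $\bar\epsilon=((1-\eta)(\gamma+1)-1)/(3-\eta)$, which is strictly positive for every $\eta\in(0,\gamma/(1+\gamma))$. The approximation bound at level $\eta$ on $\declaredEdges'$ then yields
\[
v(\mech(\declaredEdges'))\ge (1-\eta+\epsilon)(\gamma+1-\bar\epsilon) = (1+2\bar\epsilon)+\epsilon(\gamma+1-\bar\epsilon)>1+2\bar\epsilon,
\]
contradicting the SP-based upper bound for any $\epsilon>0$.

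The main obstacle is the endpoint $\eta=\gamma/(1+\gamma)$, where the formula for $\bar\epsilon$ degenerates to $0$ and the strict inequalities $v_{1a}<v_{1b}$, $v_{2a}<v_{2b}$ that drive the SP arguments become vacuous. I would handle this by decoupling $\bar\epsilon$ from $\eta$ at the endpoint: take any sufficiently small positive $\bar\epsilon$, which makes $\eta(\declaredEdges')<\gamma/(1+\gamma)=\eta$ so that the approximation bound at level $\eta$ still applies to $\declaredEdges'$. A short computation shows that the approximation surplus $\epsilon(\gamma+1-\bar\epsilon)$ dominates the $\bar\epsilon$-slack $\bar\epsilon(2+1/(\gamma+1))$ introduced by this mismatch provided $\bar\epsilon$ is chosen of order $\epsilon$, recovering the same contradiction.
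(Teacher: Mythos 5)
Your overall strategy is the right one, and the case analysis, the SP steps, and the final ``$(1-\eta)v(\mstar{\declaredEdges'}) \ge v(\mech(\declaredEdges')) \ge (1-\eta+\epsilon)v(\mstar{\declaredEdges'})$'' contradiction all match the logic of the paper's proof. However, there is a genuine gap in the range of $\eta$ your construction covers. By reusing the Theorem~\ref{th:tradeOffConsisBoundedRobust} instance verbatim, you have only \emph{one} free parameter, $\bar{\epsilon}$, to control the prediction error of the deviated instance $\declaredEdges'$, and solving $\eta(\declaredEdges') = \eta$ forces $\bar{\epsilon} = ((1-\eta)(\gamma+1)-1)/(3-\eta)$. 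This quantity tends to $\gamma/3$ as $\eta \to 0$, so for $\gamma > 3$ and $\eta \le 1-\nicefrac{3}{\gamma}$ you get $\bar{\epsilon} \ge 1$, which makes $v_{2a} = 1-\bar{\epsilon} \le 0$. This violates the model's positivity assumption and, more importantly, invalidates the computations your argument relies on (e.g., $\set{(1,b),(2,a)}$ is no longer the optimal matching in $G[\declaredEdges']$, so the identity $\eta(\declaredEdges') = 1 - (1+2\bar{\epsilon})/(\gamma+1-\bar{\epsilon})$ fails). Note that you cannot sidestep this by settling for $\eta(\declaredEdges') < \eta$: the achievable errors of your family are confined to $(1-\nicefrac{3}{\gamma},\, \nicefrac{\gamma}{\gamma+1})$, so for small targets $\eta$ no admissible $\bar{\epsilon}$ produces an instance of error at most $\eta$ at all. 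Since the theorem claims impossibility for \emph{every} $\eta \in (0, \nicefrac{\gamma}{1+\gamma}]$ and every $\gamma \ge 1$, this is a real hole, not a boundary technicality.

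The paper avoids this by using a two-parameter instance (Figure~\ref{fig:tradeoffError}): values $\delta-\bar{\epsilon},\ \delta,\ 1-\delta-\bar{\epsilon},\ 1-\delta+2\bar{\epsilon}$, where $\delta$ independently tunes the error level ($\eta \to \delta$ as $\bar{\epsilon}\to 0$) while $\bar{\epsilon}$ remains an arbitrarily small perturbation; all values stay positive and the full range $\delta \in (0, \nicefrac{\gamma}{1+\gamma}]$ is reachable. Your argument can be repaired in the same spirit, e.g., by rescaling agent~$2$'s values (replace $1-\bar\epsilon$ and $1+2\bar\epsilon$ by $c-\bar\epsilon$ and $c+2\bar\epsilon$ for a second parameter $c$) so that the error of $\declaredEdges'$ can be set to $\eta$ with $\bar{\epsilon}$ arbitrarily small --- but as written the proposal does not establish the statement for all $\gamma$. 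Two smaller remarks: your endpoint treatment at $\eta=\nicefrac{\gamma}{1+\gamma}$ is fine, and the claim that consistency forces $\mech(\declaredEdges)$ to be one of the two perfect matchings needs $1+2\bar{\epsilon} \le \gamma$ (an issue only for $\gamma$ near $1$, and harmless since the case $\mech(\declaredEdges)=\set{(2,b)}$ can be absorbed into your Case~2, where agent~$1$ has utility $0$).
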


\begin{proof}
The proof is by contradiction. Suppose that for some $\gamma \ge 1$ there is a deterministic strategyproof mechanism $\mech$ that is $(1+ \nicefrac{1}{\gamma})$-consistent and $\frac{1}{1-\eta + \epsilon}$-approximate for some $\epsilon > 0$. 
Let $\bar{\epsilon} > 0$ be some constant. 

Consider the instance $\mathcal{I}_{\bmp^+} = (G[\declaredEdges], \values, \hat{M})$ with private graph $G[\privateEdges]$ depicted in Figure~\ref{fig:tradeoffErrorB} and assume that compatibility declarations are truthful, i.e., $\declaredEdges = \privateEdges$. 
Here, the predicted matching is $\hat{M} = \set{(1,a), (2,b)}$ (indicated in bold). 
Let $\delta \in (\bar{\epsilon}, 1- \bar{\epsilon})$ such that all values are strictly positive. 

First, consider the instance depicted in Figure~\ref{fig:tradeoffErrorA} that we obtain if agent $2$ hides their edge $(2,a)$. Let $\declaredEdges' = \declaredEdges \setminus \set{(2,a)}$ be the corresponding declarations. 
Note that the optimal matching in $G[\declaredEdges']$ is $\mstar{\declaredEdges'} = \hat{M}$ and has value $v(\mstar{\declaredEdges'})= 1+\bar{\epsilon}$. 
Thus, for $\delta < (\nicefrac{\gamma}{1+ \gamma})(1+\bar{\epsilon})$, $\mech$ must compute a matching that contains edge $(2,b)$ to achieve $(1+\nicefrac{1}{\gamma})$-consistency. 

Next, consider the truthful declarations $\declaredEdges = \privateEdges$ as depicted in Figure~\ref{fig:tradeoffErrorB}. As $\mathcal{M}$ is strategyproof, it must compute a matching containing edge $(2,b)$.
Otherwise, agent 2 could increase their utility by hiding edge $(2,a)$, leading to the case as in Figure \ref{fig:tradeoffErrorA}. 
Thus, we must have $(2,b) \in \mech(\declaredEdges)$.

Finally, consider the instance depicted in Figure~\ref{fig:tradeoffErrorC} that we obtain if agent $1$ hides their edge $(1,a)$. Let $\declaredEdges' = \declaredEdges \setminus \set{(1,a)}$ be the corresponding declarations. 
As $\mathcal{M}$ is strategyproof, edge $(1,b)$ cannot be part of the matching $\mech(\declaredEdges')$ output by $\mech$. 
Otherwise, edge $(1,b)$ must be part of $\mech(\declaredEdges)$ as well, which is a contradiction to $(2,b) \in \mech(\declaredEdges)$. 
This implies that $v(\mech(\declaredEdges')) \le \max(v_{2a}, v_{2b}) = 1-\delta + 2\bar{\epsilon}$. Note that the optimal matching in $G[\declaredEdges']$ is $\mstar{\declaredEdges'} = \set{(1,b), (2,a)}$ and has value $v(\mstar{\declaredEdges'})= 1-\bar{\epsilon}$. On the other hand, the predicted matching in $G[\declaredEdges']$ is $\hat{M} \cap \declaredEdges' = \set{(2,b)}$ and has value $v(\hat{M} \cap \declaredEdges') = 1-\delta+2\bar{\epsilon}$. 
Recall that the error $\eta$ of this instance is $\eta = 1- v(\hat{M} \cap \declaredEdges')/v(\mstar{\declaredEdges'})$.
Because $\mech$ is $1/(1-\eta + \epsilon)$-approximate, it must hold that $v(\mech(\declaredEdges')) \ge (1-\eta + \epsilon) v(\mstar{\declaredEdges'})$. 

Combining this with $v(\mech(\declaredEdges')) \le 1-\delta + 2\bar{\epsilon} = v(\hat{M} \cap \declaredEdges')$ (as argued above) and using the definition of $\eta$, we obtain 
\[
(1-\eta) v(\mstar{\declaredEdges'}) = 
v(\hat{M} \cap \declaredEdges') \ge 
v(\mech(\declaredEdges')) 
\ge (1-\eta + \epsilon) v(\mstar{\declaredEdges'}),
\]
which gives a contradiction for any $\epsilon > 0$.

Note that we have $1-\eta = \frac{1 - \delta +2\bar{\epsilon}}{1 -\bar{\epsilon}}$, so $\eta \rightarrow \delta$ as $\bar{\epsilon} \rightarrow 0$. As the restrictions imposed on $\delta$ are $\delta \in (\bar{\epsilon}, 1- \bar{\epsilon})$ and $\delta < (\nicefrac{\gamma}{1+ \gamma}) (1+\bar{\epsilon})$, there exists a $\delta$ so that the contradiction holds for any $\eta \in (0, \nicefrac{\gamma}{1+ \gamma}]$. 
\end{proof}

\begin{figure}[t]
\centering
\begin{subfigure}[b]{0.3\linewidth}
    \centering
    \begin{tikzpicture}[auto, node_style/.style={circle,draw=black}]
        \node[agent] (v1) at (-1,1) {1}; \node[agent] (v2) at (-1,-1) {2}; \node[resource] (a) at (1.4,1) {$a$}; \node[resource] (b) at (1.4,-1) {$b$};
        \draw [line width = 3pt] (v1) edge node{$\delta -\bar{\epsilon}$} (a); \draw (v1) edge node[yshift=0.3cm,xshift=-0.5cm]{$\delta$} (b); \draw [line width = 3pt] (b) edge node{$1-\delta + 2 \bar{\epsilon}$} (v2);
    \end{tikzpicture}
    \caption{ }
    \label{fig:tradeoffErrorA}
    \end{subfigure}
\begin{subfigure}[b]{0.3\linewidth}
    \centering
    \begin{tikzpicture}[auto, node_style/.style={circle,draw=black}]
        \node[agent] (v1) at (-1,1) {1}; \node[agent] (v2) at (-1,-1) {2}; \node[resource] (a) at (1.4,1) {$a$}; \node[resource] (b) at (1.4,-1) {$b$};
        \draw [line width = 3pt] (v1) edge node{$\delta-\bar{\epsilon}$} (a); \draw (v1) edge node[yshift=0.3cm,xshift=-0.5cm]{$\delta$} (b); \draw (v2) edge node[yshift=-0.5cm, xshift=-0.5cm]{ $1-\delta - \bar{\epsilon}$}(a);  \draw [line width = 3pt] (b) edge node{$1-\delta + 2 \bar{\epsilon}$} (v2);
    \end{tikzpicture}
    \caption{ }
    \label{fig:tradeoffErrorB}
    \end{subfigure}
\begin{subfigure}[b]{0.3\linewidth}
    \centering
    \begin{tikzpicture}[auto, node_style/.style={circle,draw=black}]
        \node[agent] (v1) at (-1,1) {1}; \node[agent] (v2) at (-1,-1) {2}; \node[resource] (a) at (1.4,1) {$a$}; \node[resource] (b) at (1.4,-1) {$b$};
        \draw (v1) edge node[yshift=0.3cm,xshift=-0.5cm]{$\delta$} (b); \draw (v2) edge node[yshift=-0.5cm, xshift=-0.5cm]{$1-\delta - \bar{\epsilon}$} (a);  \draw [line width = 3pt] (b) edge node{$1-\delta + 2 \bar{\epsilon}$} (v2);
    \end{tikzpicture}
    \caption{ }
    \label{fig:tradeoffErrorC}
    \end{subfigure}
\caption{Instance used in the lower bound proof of Theorem~\ref{th:lowerBoundWithError}.}
 \label{fig:tradeoffError}
\end{figure}
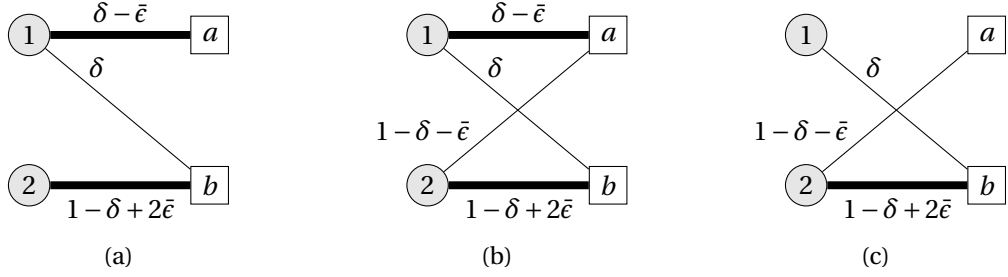

\begin{remark} \label{remark:LBvcgap}
Theorems \ref{th:tradeOffConsisBoundedRobust} and \ref{th:lowerBoundWithError} also hold for $\vcgap^+$. Namely, the instances used in the proofs (see Figures \ref{fig:tradeoffConsisRobust} and \ref{fig:tradeoffError}) are both instances of $\vcgap^+$ with $\sigma(1) = b$, $\sigma(2)=a$ and unit sizes and capacities. 
\end{remark}

\minisec{Baseline Mechanism} We conclude this section by introducing a na\"ive mechanism for $\gap^+$ in the private graph model, which simply adheres to the prediction: Given an instance $\mathcal{I}_{\gap^+} = (G[\declaredEdges], \values, \sizes, \caps, \hat{M})$, the mechanism returns the assignment $\hat{M} \cap \declaredEdges$. We call this mechanism \trust (see Mechanism~\ref{alg:trus}).

\begin{mechanism}[ht]
\SetKwInput{Input}{Input}
\SetKwInput{Output}{Output}
\DontPrintSemicolon
\caption{\label{alg:trus}
$\trust(\mathcal{I}_{\gap^+})$}
\Input{An instance $\mathcal{I}_{\gap^+} =(G[\declaredEdges], \values, \sizes, \vec{\jcap}, \hat{M})$.}
\Output{A feasible assignment for $\mathcal{I}_{\gap^+}$.}
\Return $\hat{M} \cap \declaredEdges$.
\end{mechanism}

It is trivial to see that \trust is $1$-consistent. It is also not hard to prove that \trust\ is group-strategyproof and achieves an optimal approximation guarantee matching the lower bound in Theorem~\ref{th:lowerBoundWithError} (as $\gamma \rightarrow \infty$).

\begin{restatable}{theorem}{trustthm}\label{lemma:trust-gsp}
Fix some error parameter $\hat{\eta} \in [0,1)$. Consider the class of instances of $\gap^+$ in the private graph model with prediction error at most $\hat{\eta}$.
Then, \trust is group-strategyproof and achieves an optimal approximation guarantee of $1/(1-\hat{\eta})$. 
\end{restatable}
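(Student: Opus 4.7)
The plan is to verify the two claims separately, both of which follow almost immediately from the definition of \trust.

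For group-strategyproofness, I would rely on the observation that the output $\hat{M} \cap \declaredEdges$ decomposes per-agent: since $\declaredEdges = \bigcup_{i \in \agentSet} \declaredEdges_i$, agent $i$ is assigned to $\hat{M}(i)$ if and only if $(i, \hat{M}(i)) \in \declaredEdges_i$, and hence $i$'s allocation depends \emph{only} on $i$'s own declaration (and on $\hat{M}$, which is part of the public input). By the utility definition in \eqref{eq:def:utility}, agent $i$ collects $v_{i,\hat{M}(i)}$ exactly when $(i, \hat{M}(i))$ lies in both $\declaredEdges_i$ and $\privateEdges_i$, and $0$ otherwise. It is then immediate that declaring $\declaredEdges_i = \privateEdges_i$ yields the maximum achievable utility for $i$, regardless of what the other agents declare. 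Consequently, for any coalition $S$ with misreport $\declaredEdges'_S \neq \privateEdges_S$, \emph{every} member of $S$ is weakly worse off, which is strictly stronger than the GSP requirement that some $i \in S$ be weakly worse off.

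For the approximation guarantee, I would use the definition of $\eta$ in \eqref{eq:eta}. On truthful declarations (which, by the previous paragraph, constitutes an equilibrium), the output is $M = \hat{M} \cap \privateEdges$, whose value is exactly $(1 - \eta)\, v(\mstar{\privateEdges})$. For every instance with prediction error at most $\hat{\eta}$, this gives $v(M) \ge (1 - \hat{\eta})\, v(\mstar{\privateEdges})$, i.e., a $\nicefrac{1}{1-\hat{\eta}}$-approximation. Optimality of this ratio follows by taking $\gamma \to \infty$ in Theorem~\ref{th:lowerBoundWithError}: no deterministic strategyproof mechanism can achieve $\nicefrac{1}{(1-\hat{\eta}+\epsilon)}$-approximation for any $\epsilon > 0$ on the relevant range of $\hat{\eta}$.

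The only bookkeeping point is feasibility of the returned assignment. Because $\hat{M}$ is an assignment (each agent incident to at most one edge), so is $\hat{M} \cap \declaredEdges \subseteq \declaredEdges$; capacity feasibility holds under the convention that the prediction is itself feasible with respect to $\caps$, which is consistent with how the error parameter $\eta$ is defined (otherwise the comparison to $v(\mstar{\declaredEdges})$ in \eqref{eq:eta} would not make sense). I do not anticipate any genuine obstacle — the proof is essentially a direct unrolling of definitions, and the main conceptual content is the per-agent decoupling that makes the argument fully insensitive to the quality of $\hat{M}$.
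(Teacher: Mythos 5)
Your proposal is correct and follows essentially the same route as the paper: the per-agent decoupling (agent $i$ is assigned iff $(i,\hat{M}(i)) \in \declaredEdges_i$, so truthful reporting is weakly dominant for every agent, which gives GSP outright) and the approximation bound read directly off the definition of $\eta$ in \eqref{eq:eta}, with optimality deferred to Theorem~\ref{th:lowerBoundWithError}. Your remark on capacity feasibility of $\hat{M} \cap \declaredEdges$ is a reasonable reading of an implicit assumption the paper also makes without comment.
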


\begin{proof}
Let $(I_{\gap^+}, G[E])$ be an instance of $\gap^+$ in the private graph model. 
Note that each agent $i \in \agentSet$ will be included in the assignment $M$ computed by $\trust(\mathcal{I}_{\gap^+})$ if and only if $(i, \hat{M}(i)) \in D_i$. Therefore, for each agent $i$ with $(i, \hat{M}(i)) \in E_i$, it is a dominant strategy to include $(i, \hat{M}(i))$ in their set $\declaredEdges_i$ of compatibility declarations. All other agents $j$ with $(j, \hat{M}(j)) \not \in E_j$ receive zero utility, independently of their compatibility declarations. This proves group-strategyproofness. 

Also, by the definition of the error parameter $\eta(\mathcal{I}_{\gap^+}) \le \hat{\eta}$, the approximation guarantee of \trust follows trivially: 
$v(\hat{M} \cap \declaredEdges) 
= (1- \eta(\mathcal{I}_{\gap^+}))  v(\mstar{\declaredEdges}) 
\ge (1-\hat{\eta}) v(\mstar{\declaredEdges})$.
\end{proof}

We conclude from Theorem~\ref{lemma:trust-gsp} that \trust realizes our strongest notion of incentive compatibility (i.e., group-strategyproofness) and even achieves the best possible consistency and approximation guarantees. But the point is, that it completely fails to achieve any bounded robustness guarantee.\footnote{To see this, just consider an instance $\mathcal{I}_{\gap^+}$ with $\hat{M} \cap \declaredEdges =\emptyset$.}
In a nutshell, this demonstrates that the actual challenge in deriving strategyproof mechanisms for $\gap^+$ in the private graph model is to achieve the best possible trade-off in terms of consistency/approximation \emph{and} robustness guarantees; without the latter, the whole problem becomes trivial (as \trust is the best possible mechanism). 
Despite this deficiency, and perhaps surprisingly, we will use this non-robust mechanism \trust as an important building block in our randomized mechanisms described in Section~\ref{sec:randomized}.

We conclude this section with a simple observation that we already used in the proof of Theorem~\ref{lemma:trust-gsp}. Because we will reuse it several times throughout the paper, we summarize it in the following lemma and refer to it as the Lifting Lemma. 

\begin{lemma}[Lifting Lemma]\label{lem:lifting}
Fix some error parameter $\hat{\eta} \in [0,1)$.
Consider the class of instances of $\gap^+$ in the private graph model with prediction error at most $\hat{\eta}$. 
If for every instance $\mathcal{I}_{\gap^+}$, $\mech(\mathcal{I}_{\gap^+})$ returns an assignment $M$ such that $\alpha \cdot \mathbb{E}[v(M)] \ge v(\hat{M} \cap \declaredEdges)$ for some $\alpha \ge 1$, then $\mech$ is $\alpha/(1-\hat{\eta})$-approximate.
\end{lemma}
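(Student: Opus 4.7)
The plan is essentially to chain two inequalities: the assumption on the prediction error (which lower bounds the value of the predicted assignment restricted to $\declaredEdges$ in terms of $v(\mstar{\declaredEdges})$), and the hypothesis of the lemma (which lower bounds $\mathbb{E}[v(M)]$ in terms of $v(\hat{M}\cap\declaredEdges)$). Because both inequalities point in the same direction, a single transitive step yields the desired approximation guarantee.

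More concretely, I would first fix an arbitrary instance $\mathcal{I}_{\gap^+}=(G[\declaredEdges],\values,\sizes,\caps,\hat{M})$ in the class under consideration, so that $\eta(\mathcal{I}_{\gap^+})\le \hat{\eta}$. Rearranging the definition \eqref{eq:eta} of the prediction error gives
\[
v(\hat{M}\cap \declaredEdges)=\bigl(1-\eta(\mathcal{I}_{\gap^+})\bigr)\, v(\mstar{\declaredEdges})\ \ge\ (1-\hat{\eta})\, v(\mstar{\declaredEdges}).
\]
Then I would plug this into the hypothesis $\alpha\cdot \mathbb{E}[v(M)]\ge v(\hat{M}\cap\declaredEdges)$ to obtain $\alpha\cdot \mathbb{E}[v(M)]\ge (1-\hat{\eta})\,v(\mstar{\declaredEdges})$. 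Since $\hat{\eta}\in[0,1)$, we can divide by $1-\hat{\eta}>0$, which rearranges to
\[
\frac{\alpha}{1-\hat{\eta}}\cdot \mathbb{E}[v(M)]\ \ge\ v(\mstar{\declaredEdges}).
\]
As the instance was arbitrary, the approximation notion from Section~\ref{sec:GAPwP} (extended to randomized mechanisms by using expected values) immediately gives that $\mech$ is $\alpha/(1-\hat{\eta})$-approximate.

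There is no real obstacle here; the statement is essentially bookkeeping between the error measure and the approximation objective. The only subtlety worth flagging is that the restriction $\hat{\eta}<1$ is what makes the denominator positive and the bound meaningful, and that the deterministic case is subsumed by taking $\mathbb{E}[v(M)]=v(M)$ in the hypothesis.
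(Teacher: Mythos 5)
Your proof is correct and follows exactly the same route as the paper's: unfold the definition of the prediction error to get $v(\hat{M}\cap\declaredEdges)=(1-\eta(\mathcal{I}_{\gap^+}))\,v(\mstar{\declaredEdges})\ge(1-\hat{\eta})\,v(\mstar{\declaredEdges})$ and chain it with the hypothesis $\alpha\cdot\mathbb{E}[v(M)]\ge v(\hat{M}\cap\declaredEdges)$. Nothing is missing.
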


\begin{proof}
By definition, we have $v(\hat{M} \cap \declaredEdges) = (1-\eta(\mathcal{I}_{\gap^+})) v(\mstar{\declaredEdges})$. Also, $\eta(\mathcal{I}_{\gap^+})) \le \hat{\eta}$. Thus,
\[
\alpha \cdot \mathbb{E}[v(M)] \ge 
v(\hat{M} \cap \declaredEdges) 
= (1- \eta(\mathcal{I}_{\gap^+})) v(\mstar{\declaredEdges}) 
\ge (1-\hat{\eta}) v(\mstar{\declaredEdges}). \qedhere
\]\qedhere
\end{proof}

\section{Our Mechanism for Bipartite Matching with Predictions} \label{sec:matching}

We introduce our mechanism, called \boost, for \bmp with predictions ($\bmp^+$) in the private graph model. Our mechanism is inspired by the \emph{\defacc} by \citet{GS62}. \boost is parameterized by some $\gamma \ge 1$, which we term the \emph{confidence parameter}. Put differently, \boost defines an (infinite) family of deterministic mechanisms, one for each choice of $\gamma \ge 1$. It is important to realize that all properties proved below, hold for an arbitrary choice of $\gamma \ge 1$. \boost also constitutes an important building block to derive our randomized mechanisms for certain special cases of GAP in Section~\ref{sec:randomized}. 

\subsection{\boost Mechanism} \label{subsec:boosted-proposal}

\boost receives as input an instance $\mathcal{I}_{\bmp^+}=(G[\declaredEdges], \values, \hat{M})$ of $\bmp^+$ and a confidence parameter $\gamma \ge 1$. 
Our mechanism maintains a (tentative) matching $M$ and a subset of agents $A \subseteq \agentSet$ that are called \emph{active}.
An agent $i$ is \emph{active} if it is not tentatively matched to any resource and has some remaining proposal to make; otherwise, $i$ is \emph{inactive}. 
Initially, the matching is empty, i.e., $M = \emptyset$, and all agents are active, i.e., $A = \agentSet$. 
In each iteration, the mechanism chooses an active agent $i \in A$ who then makes an offer to an adjacent resource $j \in \taskSet$ by following a specific proposal order:

\begin{mechanism}[t]
\SetKwInput{Input}{Input}
\SetKwInput{Output}{Output}
\DontPrintSemicolon
\caption{\label{alg:bpm}
$\boost(\mathcal{I}_{\bmp^+}, \gamma)$}
\Input{An instance $\mathcal{I}_{\bmp^+} =(G[\declaredEdges], \values, \hat{M})$, confidence parameter $\gamma \geq 1$.}
\Output{A feasible matching $M$ for $\mathcal{I}_{\bmp^+}$.}
Initialize $M= \emptyset, A=\agentSet$, $P_i=\declaredEdges_i$ for each $i \in \agentSet$. \;
\While{$A \neq \emptyset$}{
    Choose $i \in A$ and let $(i,j) = \argmax \sset{v_{ij}}{(i,j) \in P_i}$. 
    \label{alg:choice}
    \algcom{determine next proposal $(i,j)$}
    Agent $i$ offers $\theta_{ij} = \theta_{ij}(\gamma, \hat{M})$ to resource $j$. \algcom{$i$ makes (boosted) offer to $j$}
    \If(\tcp*[f]{check if $i$'s offer is highest for $j$}){\label{alg:if-line}$\theta_{ij} > \theta_{M(j)j}$}{
        \lIf(\tcp*[f]{$j$ rejects current mate $M(j)$ (if any)}){$M(j) \neq \emptyset$}{$M = M \setminus \set{(M(j), j)}$}
        $M = M \cup \set{(i,j)}$ \algcom{$i$ tentatively matched to $j$}
        $A = A \cup M(j) \setminus \set{i}$ \algcom{update active agents} 
    }
    $P_i=P_i \setminus \{(i,j)\}$ \algcom{update $i$'s proposal set}
    \lIf(\tcp*[f]{remove $i$ from $A$ if no more proposals}){$P_i = \emptyset$}{$A=A\setminus \{i\}$}
}
\Return $M$
\end{mechanism}

\begin{description}
\item[Agent Proposal Order:]

Each agent $i \in \agentSet$ maintains an order on their set of incident edges $\declaredEdges_i = \set{(i,j) \in \declaredEdges}$ by sorting them according to non-increasing values $v_{ij}$. 
We assume that ties are resolved according to a fixed tie-breaking rule $\tau_i$.\footnote{Note that the choice of the edge $(i,j)$ of maximum value $v_{ij}$ in Line~\ref{alg:choice} is uniquely determined by this order.}
\end{description}

\medskip
\noindent
The key idea behind our mechanism is that the value $v_{ij}$ that $i$ proposes to $j$ is \emph{boosted} if the edge $(i,j)$ is part of the predicted matching $\hat{M}$. We make this idea more concrete: 
given an agent $i$ and a declared edge $(i,j) \in \declaredEdges_i$, we define the \emph{offer} $\theta_{ij} = \theta_{ij}(\gamma, \hat{M})$\footnote{Note that $\theta_{ij}(\gamma, \hat{M})$ depends on the confidence parameter $\gamma$ and the predicted matching $\hat{M}$; but we omit these arguments and simply write $\theta_{ij}$ if they are clear from the context.} for resource $j$ as
\begin{equation}\label{eq:offer}
\theta_{ij}(\gamma, \hat{M}) = 
\begin{cases}
v_{ij} & \text{if $(i,j) \notin \hat{M}$,} \\
\gamma v_{ij} & \text{if $(i,j) \in \hat{M}$.}
\end{cases}
\end{equation}
Based on this definition, when it is $i$'s turn to propose to resource $j$, then the offer that $j$ receives from $i$ is the actual value $v_{ij}$ if $(i,j)$ is not a predicted edge, while it is the boosted value $\gamma v_{ij}$ if $(i,j)$ is a predicted edge. Intuitively, this way our mechanisms increases the chance that an agent proposing through a predicted edge is accepted (see below) by amplifying the offered value by a factor $\gamma \ge 1$.

Suppose resource $j$ receives offer $\theta_{ij}$ from agent $i$. Then $j$ \emph{accepts} $i$ if $\theta_{ij}$ is the largest offer that $j$ received so far; otherwise, $j$ \emph{rejects} $i$. We define $\theta_{\emptyset j} = 0$ to indicate that the highest offer that $j$ received is zero if $j$ is still unmatched, i.e., $M(j) = \emptyset$. To this aim, each resource $j$ maintains a fixed preference order over their set of incident edges.

\begin{description}
\item[Resource Preference Order:]
Each resource $j$ maintains an order on their set of incident edges $\declaredEdges_j = \set{(i,j) \in \declaredEdges}$ by sorting them according to non-increasing offer values $\theta_{ij}$. We assume that ties are resolved according to a fixed tie-breaking rule $\tau_j$.\footnote{Note that the comparison in Line~\ref{alg:if-line} is done with respect to this order.}
\end{description}

\medskip\noindent
If $i$ is accepted, then $i$ becomes \emph{tentatively matched} to $j$, i.e., $(i,j)$ is added to $M$, and $i$ becomes inactive. Also, if there is some agent $k$ that was tentatively matched to $j$ before, then $k$ is rejected by $j$, i.e., $(k,j)$ is removed from $M$, and $k$ becomes active again. Whenever an agent gets rejected, it moves on to the next proposal (if any) according to their offer order; in particular, an agent proposes at most once to each adjacent resource. 

The mechanism terminates when all agents are inactive, i.e., $A = \emptyset$. The current matching becomes definite and is output by the mechanism. Note that we do not specify how an agent $i$ is chosen from the set of active agents $A$ in Line~\ref{alg:choice}. In fact, any choice will work here. For example, a natural choice is to always choose an active agent $i \in A$ whose next offer $\theta_{ij}$ is largest.     

Intuitively, the confidence parameter $\gamma \ge 1$ specifies to which extent \boost follows the prediction. On the one extreme, for $\gamma = 1$ our mechanism ignores the prediction, which is the best choice in terms of achieving optimal robustness (at the expense of achieving worst consistency). As $\gamma$ increases, our mechanism follows the prediction more and more. On the other extreme, for $\gamma \rightarrow \infty$ our mechanism becomes \trust\ (as introduced in Section~\ref{sec:lowerBounds}) and simply returns the predicted matching; naturally, this is the best choice in terms of achieving optimal consistency (at the expense of unbounded robustness). 

The following is the main result of this section: 

\begin{restatable}{theorem}{bpmthm}\label{thm:bpm}
 Fix some error parameter $\hat{\eta} \in [0,1]$.
Consider the class of instances of $\bmp^+$ in the private graph model with prediction error at most $\hat{\eta}$.
Then, for every confidence parameter $\gamma \ge 1$, \boost\ is group-strategyproof and has an approximation guarantee of
\begin{equation}\label{eq:apx-BMP}
g(\hat{\eta}, \gamma) = 
\begin{cases}
    \frac{1+\gamma}{\gamma(1-\hat{\eta})} & \text{if $\hat{\eta} \le 1 - \nicefrac{1}{\gamma}$,} \\
    1 + \gamma & \text{otherwise.}
\end{cases}
\end{equation}
In particular, \boost\ is $(1+\nicefrac{1}{\gamma})$-consistent and $(1+\gamma)$-robust, which is best possible.   
\end{restatable}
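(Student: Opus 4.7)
\emph{Plan.} The proof splits into two parts: group-strategyproofness and the approximation guarantee $g(\hat{\eta},\gamma)$. The starting observation is that \boost\ is an instance of the agent-proposing deferred acceptance algorithm applied to the preference system in which each agent $i$ ranks incident edges by decreasing value $v_{ij}$ (with tie-breaking $\tau_i$) and each resource $j$ ranks incident edges by decreasing offer $\theta_{ij}$ (with tie-breaking $\tau_j$). Since $\values$, $\hat{M}$, and $\gamma$ are public inputs, the resource preferences are fully determined by the instance and cannot be influenced by the agents' compatibility declarations; the only manipulation available to agent $i$ is to modify $\declaredEdges_i$ (hide true edges or declare fake ones). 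I plan to establish GSP by adapting the Gale--Sotomayor proof of Theorem~\ref{thm:GS} to this restricted strategy space, leveraging that (i) hiding edges can only remove options, and (ii) declaring a fake edge $(i,j) \notin \privateEdges_i$ either has no effect on the outcome or assigns $i$ to an incompatible resource yielding zero utility; neither helps any member of a misreporting coalition.

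\emph{Stability and approximation.} As in the classical DAA, the matching $M = \boost(\mathcal{I}_{\bmp^+},\gamma)$ is stable: for every $(i,j) \in \declaredEdges \setminus M$, either $v_{i,M(i)} \ge v_{ij}$ or $\theta_{M(j),j} \ge \theta_{ij}$. Write $H = \hat{M} \cap \declaredEdges$. For each $(i,j) \in H \setminus M$, I charge $v_{ij}$ to $(i,M(i)) \in M$ if agent-side stability holds, and to $(M(j),j) \in M$ otherwise. An edge $(a,b) \in M \cap H$ receives no charge, because both $a$'s and $b$'s $H$-mates coincide with their $M$-mates. The crucial observation is that if $(a,b) \in M$ receives a resource-side charge from some $(i',b) \in H \setminus M$, then $(a,b) \notin \hat{M}$: otherwise $\hat{M}$ would contain both $(a,b)$ and $(i',b)$, forcing $i'=a$ and hence $(i',b)=(a,b) \in M$, a contradiction. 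Therefore $\theta_{ab} = v_{ab}$ while $\theta_{i'b} = \gamma v_{i'b}$, so the resource-side charge is at most $v_{ab}/\gamma$; the agent-side charge is at most $v_{ab}$ by stability. Summing these bounds yields
\[
v(H) \le v(M \cap H) + \Bigl(1 + \tfrac{1}{\gamma}\Bigr)\bigl(v(M) - v(M \cap H)\bigr) \le \Bigl(1 + \tfrac{1}{\gamma}\Bigr) v(M),
\]
and Lemma~\ref{lem:lifting} converts this into the approximation bound $\frac{1+\gamma}{\gamma(1-\hat{\eta})}$.

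\emph{Robustness and the combined bound.} An analogous charging from $\mstar{\declaredEdges} \setminus M$ onto $M$ yields $(1+\gamma)$-robustness. This time the competitor $(i',b) \in \mstar{\declaredEdges} \setminus M$ need not lie in $\hat{M}$, and the worst case is $(a,b) \in \hat{M}$ with $(i',b) \notin \hat{M}$: stability yields $\gamma v_{ab} \ge v_{i'b}$, producing a resource-side charge of at most $\gamma v_{ab}$. All other combinations give charges of at most $v_{ab}$, so a uniform per-edge bound of $(1+\gamma) v_{ab}$ applies to $(a,b) \in M \setminus \mstar{\declaredEdges}$; edges in $M \cap \mstar{\declaredEdges}$ absorb no charges. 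This yields $v(\mstar{\declaredEdges}) \le v(M \cap \mstar{\declaredEdges}) + (1+\gamma)(v(M) - v(M \cap \mstar{\declaredEdges})) \le (1+\gamma) v(M)$. The two curves $\frac{1+\gamma}{\gamma(1-\hat{\eta})}$ and $1+\gamma$ intersect exactly at $\hat{\eta} = 1 - 1/\gamma$, and taking the pointwise minimum yields $g(\hat{\eta},\gamma)$. Optimality of $(1+\nicefrac{1}{\gamma})$-consistency and $(1+\gamma)$-robustness follows immediately from Theorem~\ref{th:tradeOffConsisBoundedRobust}. I expect the adaptation of the Gale--Sotomayor GSP proof to the restricted hide/fake-edge strategy space to be the main conceptual hurdle; the two charging arguments are largely routine once the case-distinction forcing $(M(j),j) \notin \hat{M}$ in the consistency analysis is isolated.
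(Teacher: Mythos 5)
Your proposal is correct and follows essentially the same route as the paper: group-strategyproofness via reduction to the agent-proposing \defacc (with resource preferences fixed by the public $\theta_{ij}$), plus two charging arguments bounding $v(\hat{M}\cap\declaredEdges)$ by $(1+\nicefrac{1}{\gamma})v(M)$ and $v(\mstar{\declaredEdges})$ by $(1+\gamma)v(M)$, combined via the Lifting Lemma. Your phrasing of the charges through stability of the output matching is equivalent to the paper's rejection-based case analysis in Lemmas~\ref{lem:missing-lemma} and~\ref{lem:bpm-approx}, and your key observation that the $M$-partner of a resource receiving a charge from a predicted edge cannot itself be predicted is exactly the step the paper uses.
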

Note that \boost is not only strategyproof, but satisfies the stronger incentive compatibility notion of group-strategyproofness, which we prove in Section~\ref{subsec:bpm-gsp}.
Also, in light of the lower bound given in Theorem~\ref{th:tradeOffConsisBoundedRobust}, \boost achieves the best possible trade-off in terms of consistency and robustness guarantees. 
Note that the approximation guarantee retrieves the best possible $(1+1/\gamma)$-consistency guarantee for $\hat{\eta} = 0$ and $(1+\gamma)$-robustness guarantee for $\hat{\eta} \ge 1-1/\gamma$. 
For the range $\hat{\eta} \in [0, 1-1/\gamma]$, as $\hat{\eta}$ increases the approximation interpolates between the consistency and the robustness guarantee (see Figure~\ref{fig:predErrorA}). 
For $\hat{\eta} \in (0, 1-1/\gamma)$, the upper bound for \boost as stated above is off by a factor of $1+\nicefrac{1}{\gamma}$ from the lower bound proven in Theorem~\ref{th:lowerBoundWithError} (see Figure~\ref{fig:LBErrorA} and Figure~\ref{fig:LBErrorB}).

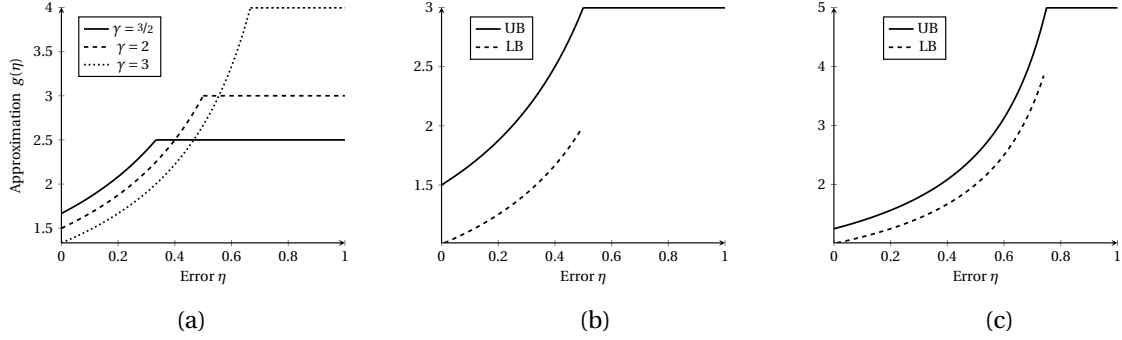
\begin{figure}[t]
\begin{subfigure}[b]{0.3\linewidth}
    \begin{tikzpicture}[scale=0.55]
        \begin{axis}[ thick, axis lines = left, xlabel = \( \text{Error } \eta\), ylabel = {\(  \text{Approximation\ \ } g(\eta)\)}, legend style={yshift=-0.1cm, xshift=-4.3cm}]
        \addplot [domain=0:((2.25-1)/(2.25+1.5)), samples=100, color=black, forget plot, very thick] {(1+1/1.5)/(1-x)};
        \addplot [domain=((2.25-1)/(2.25+1.5)):1, samples=100, color=black, very thick] {1+1.5};
        \addplot [dashed, domain=0:((4-1)/(4+2)), samples=100, color=black, very thick] {(1+1/2)/(1-x)}; 
        \addplot [dashed, domain=((4-1)/(4+2)):1, samples=100, color=black, forget plot, very thick] {1+2};
        \addplot [dotted, domain=0:((9-1)/(9+3)), samples=100, color=black, very thick] {(1+1/3)/(1-x)}; 
        \addplot [dotted, domain=((9-1)/(9+3)):1, samples=100, color=black, forget plot, ultra thick] {1+3};
        \addlegendentry{\( \gamma = \nicefrac{3}{2}\)};        
        \addlegendentry{\( \gamma = 2\)};                
        \addlegendentry{\( \gamma = 3\)};
        \end{axis}
    \end{tikzpicture}
    \caption{}       
    \label{fig:predErrorA}
\end{subfigure}
\quad
\begin{subfigure}[b]{0.3\linewidth}
    \begin{tikzpicture}[scale=0.55]
        \begin{axis}[ thick, axis lines = left, xlabel = \( \text{Error\ } \eta\), legend style={yshift=-0.1cm, xshift=-4.5cm}]
        \addplot [domain=0:((4-1)/(4+2)), samples=100, color=black, very thick] {(1+1/2)/(1-x)}; 
        \addplot [ domain=((4-1)/(4+2)):1, samples=100, color=black, ultra thick, forget plot] {1+2};
        \addplot [dashed, domain=0.01:0.49, samples=100, color=black, very thick] {1/(1-x)}; 
        \addlegendentry{UB};
        \addlegendentry{LB};
        \end{axis}
    \end{tikzpicture}
    \caption{}
    \label{fig:LBErrorA}
\end{subfigure}
\quad
\begin{subfigure}[b]{0.3\linewidth}
    \begin{tikzpicture}[scale=0.55]
        \begin{axis}[ thick, axis lines = left, xlabel = \( \text{Error } \eta\), legend style={yshift=-0.1cm, xshift=-4cm}]   
        \addplot [domain=0:((16-1)/(16+4)), samples=100, color=black, very thick] {(1+1/4)/(1-x)}; 
        \addplot [ domain=((16-1)/(16+4)):1, samples=100, color=black, ultra thick, forget plot] {1+4};
        \addplot [dashed, domain=0.01:0.74, samples=100, color=black, very thick] {1/(1-x)}; 
        \addlegendentry{UB};
        \addlegendentry{LB};
        \end{axis}
    \end{tikzpicture}
    \caption{}
    \label{fig:LBErrorB}
\end{subfigure}
\caption{Approximation guarantee $g(\hat{\eta})$ as a function of $\eta$. 
(a) For $\gamma \in \set{\nicefrac{3}{2}, 2, 3}$. 
(b) Upper vs.~lower bound for $\gamma = 2$. 
(c) Upper vs.~lower bound for $\gamma = 4$.} 
\label{fig:predError}
\end{figure}
\subsection{\boost: Group-Strategyproofness}
\label{subsec:bpm-gsp}

In this section, we show that \boost (Mechanism \ref{alg:bpm}) is group-strategyproof. The idea behind our proof is to reduce our setting to a one-to-one stable matching problem with standard preferences and argue that \boost is an instantiation of the (agent-proposing) \defacc by \citet{GS62}. Our mechanism then inherits the incentive compatibility properties of the \defacc by Theorem~\ref{thm:GS}. (We also provide a self-contained proof in Appendix~\ref{app:GSP}.)

In our proof below, it will be crucial that both the agents and the resources use fixed tie-breaking rules $(\tau_i)_{i \in \agentSet}$ and $(\tau_j)_{j \in \taskSet}$, respectively, as assumed above. 

\begin{proof}[Proof of Theorem~\ref{thm:bpm} (group-strategyproofness)]
Let $\gamma \ge 1$ be fixed. 
Suppose we are given an instance $\mathcal{I}_{\bmp^+}=(G[\declaredEdges], \values, \hat{M})$ of $\bmp^+$ with compatibility declarations $\declaredEdges$ and a private graph $G[\privateEdges]$.
Based on this, we construct a standard preference system $(G[\agentSet\times\taskSet], (\succ_i)_{i \in \agentSet}, (\succ_j)_{j \in \taskSet})$ sucht that
\begin{enumerate}
\item the preference order $\succ_i$ of each agent $i \in \agentSet$ mimics $i$'s proposal order over $\declaredEdges_i$, and
\item the preference order $\succ_j$ of each resource $j \in \taskSet$ encodes $j$'s preference order over $\declaredEdges_j$.
\end{enumerate}

In order to be able to apply Theorem~\ref{thm:GS}, it is crucial that only the agents can manipulate their preference orders (corresponding to compatibility declarations). The preference orders of the resources must be independent of the declarations and remain fixed. 

\medskip
We start with the definition of the preference orders of the agents. 

\begin{description}
\item[Agent Preference Order:]
We define the preference order $\succ_i$ of each agent $i$ as follows. Let $\declaredEdges_i$ be the compatibility set of agent $i$. Let $\lexexts_i$ be the extended lexicographic order of $\declaredEdges_i$ with respect to $(v_{ij}, \tau_i)$ as defined in \eqref{eq:lexext} in Section \ref{sec:lex-order-and-sort}, where $\tau_i$ refers to the tie-breaking rule used for agent $i$. (Note that this ensures that $\lexexts$ is a strict order.) 
Then $\succ_i$ is defined over the set of all edges in $\set{i} \times \taskSet$ as follows: 
\begin{equation}\label{eq:spref-agents}
\succ_i: \quad \mlist{~~\lexexts_i~,~~\emptyset~,~~  (\set{i} \times \taskSet) \setminus \declaredEdges_i~~}.
\end{equation}
That is, $\succ_i$ first orders the edges in $\declaredEdges_i$ as in $\lexexts$, succeeded by $\emptyset$, succeeded by any order of the edges in $(\set{i} \times \taskSet) \setminus \declaredEdges_i$.
\end{description}

By construction, $e$ is acceptable ($e \succ_i \emptyset$) if and only if $e$ is declared ($e \in \declaredEdges_i$). Also, all declared edges in $\declaredEdges_i$ are ordered by non-increasing values $v_{ij}$ (breaking ties according to $\tau_i$). 
The latter is crucial to align the preferences of the agents with their utilities as defined in \eqref{eq:def:utility}. The order on the set of unacceptable edges in $(\set{i} \times \taskSet) \setminus \declaredEdges_i$ is actually irrelevant here (any order will do). 

\medskip
We continue with the definition of the preference orders of the resources. 

\begin{description}
\item[Resource Preference Order:]
We define the preference order $\succ_j$ of each resource $j$ as follows.
Let $\lexexts_j$ be the extended lexicographic order of $\agentSet \times \set{j}$ with respect to $(\theta_{ij}, \tau_j)$ as defined in \eqref{eq:lexext} in Section \ref{sec:lex-order-and-sort}, where $\tau_j$ refers to the tie-breaking rule used for resource $j$. (Note that this ensures that $\lexexts$ is a strict order.) 
Then $\succ_j$ is simply defined as $\lexexts_j$ over the set of all edges in $\agentSet \times \set{j}$. 

That is, $\succ_j$ orders all edges by decreasing offer values $\theta_{ij}$ (breaking ties according to $\tau_j$). In particular, resource $j$ prefers agents with higher offer values. 
\end{description}

Note that by construction this order only depends on $\values$, $\hat{M}$ and $\gamma$; in particular, $\succ_j$ is independent of the compatibility declarations as required. 

\sloppy
Based on the above definitions, running our \boost mechanism on $(\mathcal{I}_{\bmp^+}, \gamma)$ as input is equivalent to running the agent-proposing \defacc on the preference system  $(G[\agentSet\times\taskSet], (\succ_i)_{i \in \agentSet}, (\succ_j)_{j \in \taskSet})$.
Thus, it follows from Theorem~\ref{thm:GS} that \boost is group-strategyproof. 
\fussy
\end{proof}

\subsection{\boost: Consistency, Robustness and Approximation} \label{subsec:boost:apx}

In this section, we prove the bounds on the consistency, robustness and approximation guarantee of \boost. The following lemma will turn out to be useful.

\begin{lemma}
    \label{lem:missing-lemma}
    Let $\gamma \geq 1$. Let $\mathcal{I}_{\bmp^+}=(G[D], \values, \hat{M})$ be an instance of $\bmp^+$ and let $M$ be the matching returned by $\boost(\mathcal{I}_{\bmp^+}, \gamma)$.
    Then 
    $
        2  v(M) + (\gamma-1)  v(M \cap \hat{M}) \geq v(M_D^*).
    $    
\end{lemma}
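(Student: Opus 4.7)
The key observation is that the right-hand side of the desired inequality can be rewritten in terms of the offer values defined in \eqref{eq:offer}. Specifically, since $\theta_{ij} = v_{ij}$ if $(i,j) \notin \hat{M}$ and $\theta_{ij} = \gamma v_{ij}$ if $(i,j) \in \hat{M}$, we have
$$\sum_{(i,j) \in M} \theta_{ij} = v(M) + (\gamma-1) v(M \cap \hat{M}),$$
so the target inequality is equivalent to $v(M) + \sum_{(i,j) \in M} \theta_{ij} \geq v(M^*_\declaredEdges)$.

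The plan is to establish this by a per-edge charging argument: I will show that for every edge $(i^*, j^*) \in M^*_\declaredEdges$,
$$v_{i^*j^*} \leq v_{i^* M(i^*)} + \theta_{M(j^*) j^*},$$
with the conventions $v_{i \emptyset} = \theta_{\emptyset j} = 0$. Summing this inequality over $M^*_\declaredEdges$ and noting that $M^*_\declaredEdges$ is a matching (so each agent and each resource appear at most once on the right), the terms on the right telescope into at most $\sum_{i \in L} v_{iM(i)} + \sum_{j \in R} \theta_{M(j)j} = v(M) + \sum_{(i,j)\in M} \theta_{ij}$, which finishes the proof.

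The bulk of the work, and the main obstacle, is verifying the per-edge inequality from the dynamics of \boost. I will use the standard Gale-Shapley-style invariants: (i) each agent $i^*$ proposes in decreasing order of value and, until it becomes permanently matched, exhausts its declared edges in this order; (ii) once a resource receives any proposal, the offer value of its tentative match only increases, so if $j^*$ ever rejected $i^*$, then $\theta_{M(j^*)j^*} \geq \theta_{i^*j^*} \geq v_{i^*j^*}$. I then case-split on whether $i^*$ and $j^*$ are matched in $M$: if $(i^*, j^*) \in M$ the bound is immediate; if $i^*$ is matched to some $j \neq j^*$, then either $v_{i^*j} \geq v_{i^*j^*}$ (because $i^*$ never proposed to $j^*$) or $i^*$ was rejected by $j^*$ (in which case the second term dominates); if $i^*$ is unmatched in $M$, then $i^*$ exhausted all its declared edges, including $j^*$, so $i^*$ was rejected by $j^*$ and again the second term suffices. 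The case where $j^*$ is unmatched while $i^*$ is matched is handled analogously, and the case that both are unmatched cannot occur since once $j^*$ receives a proposal from $i^*$ it remains matched. Each sub-case yields the desired per-edge bound, which is what I would need to complete the proof.
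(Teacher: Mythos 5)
Your proof is correct, and while it rests on the same underlying combinatorial fact as the paper's proof (if agent $i$ preferred $j$ to its final mate, then $j$ must have rejected $i$ in favour of a strictly higher offer, and a resource's held offer only increases), the accounting is genuinely different. The paper defines a map $g$ sending each edge of $M^*_{\declaredEdges}$ to a \emph{single} edge of $M$ together with a multiplier $\alpha_e \in \{1,\gamma\}$, which forces a three-way sub-case analysis on whether $(i,j)$ and $(\ell,j)$ lie in $\hat{M}$, and then a separate counting step showing each $M$-edge covers at most two optimal edges and can $\gamma$-cover at most one of them. You instead charge each optimal edge $(i^*,j^*)$ additively to the two terms $v_{i^*M(i^*)} + \theta_{M(j^*)j^*}$ — essentially the standard ``no blocking pair'' inequality for the stable matching computed by the \defacc, with the resource side measured in offer units — and the identity $\sum_{(i,j)\in M}\theta_{ij} = v(M) + (\gamma-1)v(M\cap\hat{M})$ converts the sum into the claimed bound automatically. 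This buys you a cleaner argument: the sub-cases on membership in $\hat{M}$ disappear because $\theta$ absorbs the boost, and the ``covers at most two, $\gamma$-covers at most one'' bookkeeping is replaced by the trivial observation that $M^*_{\declaredEdges}$ visits each agent and each resource at most once. The paper's formulation has the minor advantage of making explicit \emph{which} edge of $M$ pays for each optimal edge, which is the style reused in Lemma~\ref{lem:bpm-approx}, but for this lemma your route is at least as tight and arguably more transparent.
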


\begin{proof}
Let $\mstar{\declaredEdges}$ be an optimal matching. We prove that the value of each edge in $\mstar{\declaredEdges}$ can be covered by the value of an edge in $M$ output by \boost$(\mathcal{I}_{\bmp^+},\gamma)$. More precisely, we define a mapping $g: \mstar{\declaredEdges} \rightarrow M$ together with some scalars $(\alpha_e)_{e \in \mstar{\declaredEdges}}$ such that for each edge $e \in \mstar{\declaredEdges}$ it holds that
$
v_e \le \alpha_e \cdot v_{g(e)} 
$
with $\alpha_e \ge 1$. We also say that $e$ is \emph{$\alpha_e$-covered} by edge $g(e) \in M$.

Let $e = (i,j) \in \mstar{\declaredEdges}$. If $e \in M$, we define $g(e) = e$ and $\alpha_e = 1$. Suppose $e = (i,j) \notin M$. We distinguish the following cases: 
\begin{enumerate}

\item $\exists k \in M(i)$ with $v_{ik} \ge v_{ij}$. We define $g(e) = (i,k)$ and $\alpha_e = 1$.

\item $\exists k \in M(i)$ with $v_{ik} < v_{ij}$. In this case, $i$ first proposes to $j$ and only later to $k$. In particular, $j$ must have rejected the offer of $i$ at some stage. Thus, there must be an agent $\ell$ with $(\ell,j) \in M$ whose offer is larger than the one of $i$. 
We define $g(e) = (\ell,j)$ in this case.
The definition of $\alpha_e$ depends on whether $(i,j)$ or $(\ell,j)$ are part of the predicted matching $\hat{M}$: 

(a) $(i,j) \notin \hat{M}$ and $(\ell,j) \notin \hat{M}$: 
We have $v_{\ell j} > v_{ij}$ and define $\alpha_e = 1$.

(b) $(i,j) \notin \hat{M}$ and $(\ell,j) \in \hat{M}$: 
We have $\gamma v_{\ell j} > v_{ij}$ and define $\alpha_e = \gamma$.

(c) $(i,j) \in \hat{M}$ and $(\ell ,j) \notin \hat{M}$: 
We have $v_{\ell j} > \gamma v_{ij}$ and define $\alpha_e = 1$. 

\item $\not\exists k \in M(i)$. Note that $i$ proposed to $j$ at some stage but was rejected (immediately or subsequently) and remained unassigned after all. Similarly to the previous case, this implies that there exists some agent $\ell$ with $(\ell,j) \in M$ whose offer is larger than the one of $i$.
We can follow the same line of arguments as in the previous case. We define $g(e) = (\ell,j)$ and $\alpha_e$ as in the case distinction above.
\end{enumerate}

Note that the mapping $g$ defined above maps each edge $e \in \mstar{\declaredEdges}$ either to itself, i.e., $g(e) = e \in M$, or to an edge $f = g(e) \in M$ that is adjacent to $e$. Also, because $\mstar{\declaredEdges}$ is a matching, there are at most two edges in $\mstar{\declaredEdges}$ which are adjacent to an edge $f$ in $M$. Said differently, each edge $f \in M$ covers at most two edges in $\mstar{\declaredEdges}$.
Moreover, if edge $f = (\ell,j) = g(e) \in M$ $\gamma$-covers an edge $e = (i,j)\in \mstar{\declaredEdges}$ (i.e., Cases (2b) and (3b) above), $f$ and $e$ must share a common resource $j$, and $f$ must be part of $\hat{M}$, i.e. $f \in \hat{M} \cap M$; in particular, the other edge in $\mstar{\declaredEdges}$ that is mapped to $f$ (if any) must be $1$-covered by $f$. Using the above observations, we can now prove the claim:
\[ 
v(\mstar{\declaredEdges}) = \sum_{e \in \mstar{\declaredEdges}} v_{e} \le \sum_{e \in \mstar{\declaredEdges}} \alpha_e v_{g(e)} 
\le  \sum_{f \in M \setminus \hat{M}} 2 v_f + \sum_{f \in M \cap \hat{M}} (1+\gamma) v_f 
= 2 v(M) + (\gamma - 1) v(\hat{M} \cap M). 
\qedhere
\]
\end{proof}

We can now prove that \boost is $(1+\gamma)$-robust. The proof follows easily from Lemma~\ref{lem:missing-lemma}. 

\begin{proof}[Proof of Theorem~\ref{thm:bpm} (robustness)]
Let $\gamma \ge 1$.
Let $\mathcal{I}_{\bmp^+} = (G[\declaredEdges], \values, \hat{M})$ be an instance of $\bmp^+$ and let $M$ be the matching returned by $\boost(\mathcal{I}_{\bmp^+}, \gamma)$. 
Further, let $\mstar{\declaredEdges}$ be an optimal matching. By Lemma~\ref{lem:missing-lemma}, we have
\[ 
v(\mstar{\declaredEdges}) \le
2  v(M) + (\gamma-1)  v(M \cap \hat{M})
\le
2  v(M) + (\gamma-1)  v(M)
\le (1+\gamma) v(M).
\qedhere
\]
\end{proof}

\sloppy
The next lemma shows that the matching computed by \boost is a $(1+\nicefrac{1}{\gamma})$-approximation of the predicted matching $\hat{M} \cap \declaredEdges$ in $G[\declaredEdges]$. It will turn out to be useful when proving the approximation guarantee of \boost. 
\fussy

\begin{lemma}\label{lem:bpm-approx}
Let $\gamma \ge 1$. Let $\mathcal{I}_{\bmp^+}=(G[\declaredEdges], \values, \hat{M})$ be an instance of $\bmp^+$ and let $M$ be the matching returned by $\boost(\mathcal{I}_{\bmp^+}, \gamma)$.
Then $(1+\nicefrac{1}{\gamma}) v(M) \geq v(\hat{M} \cap \declaredEdges)$.
\end{lemma}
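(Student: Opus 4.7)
The plan is to adapt the covering argument used in Lemma~\ref{lem:missing-lemma}, but now compare $M$ to $\hat{M}' \coloneqq \hat{M} \cap \declaredEdges$ instead of to $\mstar{\declaredEdges}$. Specifically, I will construct a map $g \colon \hat{M}' \to M$ together with coverage factors $\alpha_e \geq 1/\gamma$ such that $v_e \leq \alpha_e \cdot v_{g(e)}$ for every $e \in \hat{M}'$, and then argue that the total contribution of the preimages of any given $f \in M$ is at most $(1 + 1/\gamma)\, v_f$.

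For each $e = (i,j) \in \hat{M}'$ the case analysis mirrors the one used in Lemma~\ref{lem:missing-lemma}. If $e \in M$, set $g(e) = e$ with $\alpha_e = 1$. Otherwise, since agents propose in non-increasing order of $v_{ij}$, exactly one of two things happens: either $M(i)\neq\emptyset$ with $v_{iM(i)} \geq v_{ij}$, in which case $i$ never proposed to $j$ and I set $g(e) = (i, M(i))$ with $\alpha_e = 1$; or $i$ did propose to $j$ at some point and was rejected. In the latter case, letting $\ell \coloneqq M(j)$, the resource $j$ must have ended up with at least as high an offer, so $\theta_{\ell j} \geq \theta_{ij} = \gamma v_{ij}$ (the equality using $(i,j) \in \hat{M}$). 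I set $g(e) = (\ell, j)$. The crucial observation—which is what lets us beat the factor-of-two loss from Lemma~\ref{lem:missing-lemma}—is that, because $\hat{M}$ is a matching and $(i,j) \in \hat{M}$ with $\ell \neq i$, the edge $(\ell, j)$ cannot lie in $\hat{M}$; hence $\theta_{\ell j} = v_{\ell j} \geq \gamma v_{ij}$, and I can take $\alpha_e = 1/\gamma$.

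To conclude, I bound how many edges of $\hat{M}'$ can map to a given $f = (p,q) \in M$. Since $\hat{M}$ is a matching, at most one edge of $\hat{M}'$ is incident to $p$ (which would be the agent-side cover) and at most one is incident to $q$ (the resource-side cover). If $f \in \hat{M}$, then $f$ itself is the unique edge of $\hat{M}'$ at both $p$ and $q$, so the only preimage of $f$ is $f$ itself, contributing $v_f \leq v_f(1+1/\gamma)$. If $f \notin \hat{M}$, the agent-side preimage contributes at most $v_f$ and the resource-side preimage contributes at most $v_f/\gamma$ (by the case analysis above), giving a total of at most $v_f(1 + 1/\gamma)$. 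Summing over $f \in M$ yields $v(\hat{M}') \leq (1 + 1/\gamma)\, v(M)$, as claimed. The main obstacle is exactly the last point: one must leverage the matching structure of $\hat{M}$ to rule out the resource-side cover being a predicted edge, which is what drives the resource-side factor down to $1/\gamma$ and produces the tight bound.
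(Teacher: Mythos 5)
Your proof is correct and follows essentially the same route as the paper's: the same charging map from $\hat{M}\cap\declaredEdges$ into $M$, the same three cases (identity, agent-side cover at factor $1$, resource-side cover at factor $\nicefrac{1}{\gamma}$ via the boosted offer), and the same counting argument showing each $f\in M$ absorbs at most $(1+\nicefrac{1}{\gamma})v_f$. The observation you highlight—that the matching structure of $\hat{M}$ forces the rejecting edge $(\ell,j)$ to be unpredicted, so its offer is unboosted—is exactly the point the paper's proof relies on as well.
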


\begin{proof}
We prove that the value of each edge in $\hat{M} \cap \declaredEdges$ can be covered by the value of an edge in the matching $M$ output by \boost$(\mathcal{I}_{\bmp^+},\gamma)$. 
More precisely, we define a mapping $g: \hat{M} \cap \declaredEdges \rightarrow M$ together with some scalars $(\alpha_e)_{e \in \hat{M} \cap \declaredEdges}$ such that for each edge $e \in \hat{M} \cap \declaredEdges$ it holds that $\alpha_e \cdot v_e \le v_{g(e)}$ with $\alpha_e \ge 1$. We also say that $e$ is \emph{$(1/\alpha_e)$-covered} by edge $g(e) \in M$.

Let $e = (i,j) \in \hat{M} \cap \declaredEdges$. If $e \in M$, we define $g(e) = e$ and $\alpha_e = 1$. Suppose $e = (i,j) \notin M$. We distinguish the following cases: 
\begin{enumerate}
\item $\exists k \in M(i)$ with $v_{ik} \ge v_{ij}$. We define $g(e) = (i,k)$ and $\alpha_e = 1$.

\item $\exists k \in M(i)$ with $v_{ik} < v_{ij}$. Note that $i$ first proposes to $j$ and only later to $k$. In particular, $j$ must have rejected the offer of $i$ at some stage. Thus, there must be an agent $\ell$ with $(\ell,j) \in M$ whose offer is larger than the one of $i$. Recall that $(i,j) \in \hat{M}\cap \declaredEdges$ and thus $i$ made a boosted offer $\gamma v_{ij}$ to $j$. On the other hand, $(\ell,j) \notin \hat{M} \cap \declaredEdges$ and thus $\ell$ offers $v_{\ell j}$ to $j$. We conclude that $v_{\ell j} > \gamma v_{ij}$. We define $g(e) = (\ell ,j)$ and $\alpha_e = \gamma$.

\item $\not\exists k \in M(i)$. Note that $i$ proposed to $j$ at some stage but was rejected (immediately or subsequently) and remained unassigned after all. Similarly to the previous case, this implies that there exists some agent $\ell$ with $(\ell,j) \in M$ and $v_{\ell j} > \gamma v_{ij}$. We define $g(e) = (\ell ,j)$ and $\alpha_e = \gamma$.
\end{enumerate}

Note that the mapping $g$ defined above maps each edge $e \in \hat{M}\cap \declaredEdges$ either to itself, i.e., $g(e) = e \in M$, or to an edge $f = g(e) \in M$ that is adjacent to $e$. Also, because $\hat{M} \cap \declaredEdges$ is a matching, there are at most two edges in $\hat{M} \cap \declaredEdges$ which are adjacent to an edge $f$ in $M$. Said differently, each edge $f \in M$ covers at most two edges in $\hat{M} \cap \declaredEdges$.
Moreover, if edge $f = (\ell,j) = g(e) \in M$ $({1}/{\gamma})$-covers an edge $e = (i,j)\in \hat{M} \cap \declaredEdges$ (i.e., Cases (2) and (3) above), $f$ and $e$ must share a common resource $j$; in particular, the other edge in $\hat{M} \cap \declaredEdges$ that is mapped to $f$ (if any) must be $1$-covered by $f$. Using the above observations, we can now prove the claim: 
\[ 
v(\hat{M} \cap \declaredEdges) = \sum_{e \in \hat{M} \cap \declaredEdges} v_{e} \le \sum_{e \in \hat{M} \cap \declaredEdges} v_{g(e)} / \alpha_{e} \le \Big (1 + \frac{1}{\gamma} \Big) \sum_{f \in M} v_f = \Big (1 + \frac{1}{\gamma} \Big) v(M). \qedhere
\]
\end{proof}

We can now complete the proof of Theorem~\ref{thm:bpm}.

\begin{proof}[Proof of Theorem~\ref{thm:bpm} (Approximation)]
\sloppy
Let $\gamma \ge 1$ be fixed arbitrarily. 
Consider an instance $\mathcal{I}_{\bmp^+}=(G[\declaredEdges], \values, \hat{M})$ of $\bmp^+$ with prediction error $\eta(\mathcal{I}_{\bmp^+}) \le \hat{\eta}$.
Let $M$ be the matching returned by $\boost(\mathcal{I}_{\bmp^+}, \gamma)$.
Note that by Lemma~\ref{lem:bpm-approx} we have $(1+\nicefrac{1}{\gamma}) v(M) \ge v(\hat{M} \cap \declaredEdges)$. Now, using the Lifting Lemma (Lemma~\ref{lem:lifting}) we conclude that \boost\ is $(1+\nicefrac{1}{\gamma})/(1-\hat{\eta})$-approximate. 
Further, the robustness guarantee of $(1+\gamma)$ holds independently of the prediction error $\hat{\eta}$. The claimed bound on the approximation guarantee $g(\hat{\eta}, \gamma)$ now follows by combining these two bounds. 
\end{proof}

\subsection{Extensions of \boost} \label{rem:boost-extensions}

\boost is rather versatile in the sense that it can be adapted to handle more general settings while retaining its group-strategyproofness property. We summarize a few extensions below.

\begin{enumerate}
    
\item[(E1)] \boost\ can also be run with a many-to-one assignment as input prediction and remain group-strategyproof. We exploit this in Section~\ref{sec:randomized}. In fact, the only change is that the offer function in \eqref{eq:offer} is defined with respect to a predicted many-to-one assignment $\hat{M}$. The proof of group-strategyproofness in Theorem~\ref{thm:bpm} continues to hold without change. 

\item[(E2)] \boost\ can also handle many-to-one assignments instead of matchings. Also here, the offer function in \eqref{eq:offer} is defined with respect to a predicted many-to-one assignment $\hat{M}$. Further, each resource $j$ now accepts the at most $\jcap_j$ highest offers among the set of proposing agents and rejects the remaining ones.\footnote{Recall that in the many-to-one setting each agent has unit size and all resources have integer capacities.} The resulting adaptation of \boost remains group-strategyproof. An easy way to see this, is by realizing that this adapted mechanism mimics \boost on the instance obtained from the reduction described next. 

\item[(E3)] \boost\ can also be used to handle instances of $\text{RSGAP}^+$ by a simple reduction to $\bmp^+$. 
Recall that for an instance $\mathcal{I}_{\text{RSGAP}^+}=(G[\declaredEdges], \values, \bm{s}, \vec{\jcap}, \hat{M})$ of $\text{RSGAP}^+$ it holds that all agents have the same size $s_j$ with respect to a resource $j \in \taskSet$, i.e., $s_{ij} = s_j$ for all $i \in \agentSet$. It is not hard to see that we can reduce $\mathcal{I}_{\text{RSGAP}^+}$ to an equivalent instance $\mathcal{I}_{\bmp^+}$ of $\bmp^+$: For each resource $j \in \taskSet$, we introduce $m_j = \lfloor \jcap_j/s_j \rfloor$ copies $j_1, \dots, j_{m_j}$. Each copy $j_{\ell}$ with $\ell \in [m_j]$ inherits the set of edges $\declaredEdges_j$ incident to $j$ and the value matrix $\values'$ is defined accordingly. 
Similarly, for each resource $j \in R$ with $k_j \le m_j$ predicted edges $(i_{1},j), (i_{2}, j), \dots, (i_{k_j},j) \in \hat{M}(j)$, edges $(i_{1},j_1), (i_{2}, j_2), \dots, (i_{k_j},j_{k_j})$ are added to the predicted matching $\hat{M}'$.
Now, there is a natural correspondence between the compatibility declarations of an agent in $\mathcal{I}_{\text{RSGAP}^+}$ and $\mathcal{I}_{\bmp^+}$. Similarly, each many-to-one assignment in $\mathcal{I}_{\text{RSGAP}^+}$ corresponds to a matching in $\mathcal{I}_{\bmp}^+$. 
It is not hard to prove that the two instances are equivalent. 
\end{enumerate}

The latter observation leads to the following corollary. 

\begin{corollary} \label{corollary:rsigap}
Fix some error parameter $\hat{\eta} \in [0,1]$. Consider the class of instances of $\text{RSGAP}^+$ in the private graph model with prediction error at most $\hat{\eta}$. Then, for any confidence parameter $\gamma \ge 1$, \boost\ is group-strategyproof and has an approximation guarantee of
\[
g(\hat{\eta}, \gamma) = 
\begin{cases}
    \frac{1+\gamma}{\gamma(1-\hat{\eta})} & \text{if $\hat{\eta} \le 1 - \nicefrac{1}{\gamma}$,} \\
    1 + \gamma & \text{otherwise.}
\end{cases}
\]
\end{corollary}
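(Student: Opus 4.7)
The plan is to prove Corollary~\ref{corollary:rsigap} by invoking the reduction sketched in extension~(E3) of Section~\ref{rem:boost-extensions} and then lifting the guarantees of Theorem~\ref{thm:bpm} through that reduction. Given an instance $\mathcal{I}_{\text{RSGAP}^+}=(G[\declaredEdges], \values, \bm{s}, \vec{\jcap}, \hat{M})$, I would first spell out the construction of the associated $\bmp^+$ instance $\mathcal{I}_{\bmp^+}=(G'[\declaredEdges'], \values', \hat{M}')$: split each resource $j$ into $m_j=\lfloor \jcap_j/s_j\rfloor$ copies $j_1,\dots,j_{m_j}$; for every declared edge $(i,j)\in\declaredEdges$ place copies $(i,j_1),\dots,(i,j_{m_j})$ in $\declaredEdges'$, all with value $v_{ij}$; and arbitrarily but deterministically assign the at most $m_j$ predicted edges incident to $j$ to distinct copies, dropping any predicted edge that exceeds capacity. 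Agents' declarations are mirrored across copies: a declaration of $(i,j)$ in the RSGAP$^+$ instance corresponds to declaring $(i,j_1),\dots,(i,j_{m_j})$ in the BMP$^+$ instance.

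Next I would establish the value-preserving bijection between feasible many-to-one assignments in $\mathcal{I}_{\text{RSGAP}^+}$ and matchings in $\mathcal{I}_{\bmp^+}$. Because all agents share the same size $s_j$ per resource, the capacity constraint $\sum_{i\in M(j)} s_j \le \jcap_j$ reduces to $|M(j)|\le m_j$, and any such assignment can be realized by matching each agent in $M(j)$ to a distinct copy $j_\ell$; conversely, a matching in $G'[\declaredEdges']$ collapses to a feasible RSGAP$^+$ assignment by identifying the copies back to $j$. The mapping preserves values edge-wise, hence $v(\mstar{\declaredEdges})=v(\mstar{\declaredEdges'})$ and $v(\hat{M}\cap \declaredEdges)=v(\hat{M}'\cap \declaredEdges')$. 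This immediately gives $\eta(\mathcal{I}_{\text{RSGAP}^+})=\eta(\mathcal{I}_{\bmp^+})\le \hat{\eta}$, so the hypothesis of Theorem~\ref{thm:bpm} transfers.

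Running \boost on $\mathcal{I}_{\text{RSGAP}^+}$ (as extended in~(E2)) is then equivalent to running \boost on $\mathcal{I}_{\bmp^+}$ and collapsing copies: in both executions the offer $\theta_{ij}$ to a copy $j_\ell$ is the same boosted or unboosted value, and each resource $j$ admitting the $m_j$ highest offers is exactly the behavior of $m_j$ independent copies each accepting their single best offer. Consequently, the output assignment in RSGAP$^+$ has the same value as the matching output in BMP$^+$, and Theorem~\ref{thm:bpm} yields the stated $g(\hat{\eta},\gamma)$ bound.

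The remaining point, and the one needing a little care, is group-strategyproofness. An agent $i$ in $\mathcal{I}_{\text{RSGAP}^+}$ can misreport only its incidence vector to the original resources $\taskSet$, which by construction is a strict subset of the strategy space available in $\mathcal{I}_{\bmp^+}$ (where one could in principle manipulate each copy independently). Since $\boost$ on $\mathcal{I}_{\bmp^+}$ is group-strategyproof by Theorem~\ref{thm:bpm} against the larger strategy space, and since each agent's utility in the two settings coincides (utility is determined by whether $i$ is matched to a truly compatible original resource), no coalition $S\subseteq \agentSet$ can jointly manipulate in the RSGAP$^+$ instance to strictly improve every member's utility. I expect the only mildly subtle spot to be confirming that the arbitrary but fixed assignment of predicted edges to copies does not interact badly with the tie-breaking used by \boost; this is handled by declaring the tie-breaking rules $\tau_{j_\ell}$ for copies as part of the fixed public data, which keeps the resource preference orders independent of declarations exactly as required in the proof of Theorem~\ref{thm:bpm}.
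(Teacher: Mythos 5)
Your proposal follows exactly the route the paper intends: the corollary is derived from the reduction sketched in extension (E3) (resource copies $j_1,\dots,j_{m_j}$ with $m_j=\lfloor C_j/s_j\rfloor$), the value-preserving correspondence between many-to-one assignments and matchings, and then transferring the guarantees of Theorem~\ref{thm:bpm}; your treatment of the smaller strategy space for group-strategyproofness and of the fixed tie-breaking on copies correctly fills in details the paper leaves implicit. No gaps.
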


Note that the above implies that \boost\ is group-strategyproof and $2$-approximate for RSGAP without predictions (i.e., by choosing $\gamma=1$). To the best of our knowledge, the current best mechanism for this problem is the randomized, universally strategyproof, $4$-approximate mechanism by \citet{chen14}.

\section{Beyond Bipartite Matching Via Greedy Mechanisms} \label{sec:greedy}

In this section, we first introduce a generic mechanism design template for $\gap^+$ that provides a unifying building block for several of our mechanisms. After that, we provide a first application of this template and derive a deterministic mechanism for $\emkar^+$. 

\begin{mechanism}[t]
\SetKwInput{Input}{Input}
\SetKwInput{Output}{Output}
\DontPrintSemicolon
\caption{\label{alg:greedy-template}
\greedy$(\mathcal{I}_{\gap^+}, \vec{z})$}
\Input{An instance $\mathcal{I}_{\gap^+}=(G[\declaredEdges], \values, \vec{\sizes}, \vec{C}, \hat{M})$, a ranking function  $\vec{z}: L \times R \mapsto \mathbb{R}^k$ for some $k \in \mathbb{N}$.}
\Output{A feasible assignment for $\mathcal{I}_{\gap^+}$.}
Set $\mathcal{L} = \msort(\declaredEdges, (z(e))_{e \in \declaredEdges})$.
\algcom{sort edges in $\declaredEdges$ lexicographically by $(z(e))_{e \in \declaredEdges}$}
Initialize $M= \emptyset$.\\
\While(\tcp*[f]{process edges in sorted order of $\mathcal{L}$}){$\mathcal{L} \neq \mlist{}$}{
    Let $(i,j)$ be the first edge of $\mathcal{L}$ and remove it.
    \algcom{remove first edge $(i,j)$ from $\mathcal{L}$}
    \If(\tcp*[f]{$i$ can be added to $M(j)$ without exceeding capacity}){$\sum_{t \in M(j) \cup \{i\}}s_{t j} \leq C_j$}
    {
        Set $M(j) = M(j) \cup \{i\}$. \algcom{add $i$ to $M(j)$} 
        Remove all edges of agent $i$ from $\mathcal{L}$. \algcom{update $\mathcal{L}$} 
    }
}
\Return $M$
\end{mechanism}

\subsection{A Template Of Greedy Mechanisms} \label{sec:GreedyTemplate}

At high level, the greedy mechanism template behaves as follows: The mechanism first orders all declared edges according to some specific ranking (which is given as part of the input). According to this order, the mechanism then greedily adds as many edges as possible (while maintaining feasibility) to construct an assignment. We refer to this mechanism as \greedy; see Mechanism~\ref{alg:greedy-template}.

\greedy\ receives as input an instance $\mathcal{I}_{\gap^+}=(G[\declaredEdges], \vec{v}, \vec{\sizes}, \vec{C}, \hat{M})$ of $\gap^+$ and a \emph{ranking function} $\vec{z}: L \times R \mapsto \mathbb{R}^k$ for some $k \in \mathbb{N}$. It then uses the $\msort$ operator (as defined in \eqref{eq:sort}) to sort the set of declared edges $\declaredEdges$ in lexicographic decreasing order according to their values $(z_1, \dots, z_k)$. As a result, the list $\mathcal{L} = \mlist{e_{\pi_1}, \dots, e_{\pi_{|\declaredEdges|}}}$ output by $\msort$ satisfies $\vec{z}(e_{\pi_i}) \lex \vec{z}(e_{\pi_{j}})$ for all $i < j$. \greedy\ then processes the edges in this order by always removing the first element $(i,j)$ from $\mathcal{L}$, and greedily assigns agent $i$ to resource $j$ whenever this maintains the feasibility of the constructed assignment $M$. If $i$ can be  assigned to $j$, the assignment $M$ is updated accordingly and all edges of $i$ are removed from the list $\mathcal{L}$. \greedy\ terminates if there are no more edges in $\mathcal{L}$.

It is important to realize that \greedy\ coupled with an arbitrary ranking function $\vec{z}$ may not result in a strategyproof mechanism for $\gap^+$ in general. However, we show that mechanisms derived through this template allow us to obtain meaningful results for several $\gap^+$ variants studied in this paper. Definition \ref{def:truth-inducing} captures a sufficient condition for group-strategyproofness for a ranking function $\vec{z}$ for $\greedy$, as we show in Theorem \ref{thm:greedy-gsp-sufficient}. 

\begin{definition} \label{def:truth-inducing} 
Consider some class of instances of $\gap^+$ in the private graph model. We say that a ranking function $\vec{z}$ is \emph{truth-inducing} if for every instance 
$\mathcal{I_{\gap^+}}$ of this class it holds that: 
    \begin{enumerate}
        \item The extended lexicographic order of $L \times R$ with respect to $\vec{z}$ is strict and total.
        \item For every agent $i \in L$, and every $e=(i,j)$, $e'=(i,j') \in L \times R$ with $\vec{z}(e) \lex \vec{z}(e')$, it holds that $v_{e} \geq v_{e'}$.
    \end{enumerate}
\end{definition}

\begin{restatable}{theorem}{fivetwo}\label{thm:greedy-gsp-sufficient}
Consider some class of instances of $\gap^+$ in the private graph model and let $\vec{z}$ be a ranking function that is truth-inducing with respect to this class. Then $\greedy$ coupled with $\vec{z}$ is a group-strategyproof mechanism.    
\end{restatable}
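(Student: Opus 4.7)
The plan is to prove group-strategyproofness by contradiction. Suppose a coalition $S \subseteq L$ with joint deviation $\declaredEdges'_S$ from $\privateEdges_S$ makes every $i \in S$ strictly better off, i.e., for each $i \in S$, $M^H(i) \in \privateEdges_i$ and $v_{i, M^H(i)} > v_{i, M^T(i)}$, where $M^T$ and $M^H$ denote the outputs of \greedy under truthful and deviation declarations respectively (with the convention $v_{i, \emptyset} = 0$). The key structural observation is that the extended lex-order $\lexexts$ induced by $\vec{z}$ on $L \times R$ is strict and total (first truth-inducing property), and both executions can be reformulated as a global sweep through $L \times R$ in $\lexexts$-order, taking a nontrivial action at an edge only if it is declared, its agent is still unassigned, and feasibility holds. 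Moreover, by the second truth-inducing property, restricted to any single agent $i$ the order $\lexexts$ is consistent with decreasing values; in particular, any strict value improvement among $i$'s edges forces a strictly earlier position in the sweep.

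Equipped with this viewpoint, I would let $t_d$ be the earliest sweep position at which the two partial assignments first disagree. Because the states coincide right before $t_d$, the disagreement at $t_d$ must be driven by the two executions having different declarations for the edge $e_{t_d} = (k,j)$, which forces $k \in S$. This splits into two sub-cases. If $(k,j) \in \declaredEdges'_k \setminus \privateEdges_k$ is a fake edge, then the deviation assigns $k$ to $j \notin \privateEdges_k$, yielding $u_k^H = 0$ and contradicting $u_k^H > u_k^T \ge 0$. If instead $(k,j) \in \privateEdges_k \setminus \declaredEdges'_k$ is hidden, then the only way a divergence can materialize at $t_d$ is that the truthful run adds $k$ to $j$, locking $M^T(k) = j$. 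Strict improvement then gives $v_{k, M^H(k)} > v_{k,j}$, so by the structural observation the improving edge $(k, M^H(k))$ sits strictly earlier in the sweep. That edge belongs to both $\privateEdges_k$ (positive utility) and $\declaredEdges'_k$ (it is added in the deviation), hence is declared in both runs; the identical pre-$t_d$ states then force identical actions, so $k$ is already assigned to $M^H(k)$ in the truthful run before $t_d$, contradicting its later assignment to $j$.

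The main obstacle I expect is translating \greedy---whose list $\mathcal{L}$ shrinks dynamically as edges of assigned agents are deleted---into the static ``global sweep over $L \times R$ in $\lexexts$-order'' formulation used above; once this equivalence is verified (skipping an already-assigned agent's edge is interchangeable with simply advancing the sweep), the truth-inducing conditions supply exactly the coupling between strict value improvements and strictly earlier positions that the argument requires. A secondary subtlety is that both hiding and inventing edges must be handled uniformly, but inventing edges is disposed of immediately since any assignment through a non-private edge yields zero utility and so cannot contribute to a profitable deviation.
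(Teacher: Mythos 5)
Your proof is correct and takes essentially the same route as the paper's: both arguments rest on the truth-inducing order being fixed, strict and total (so the two executions can be compared position by position), on \greedy never unassigning an edge, and on an agent's strictly more valuable edges occurring strictly earlier, and both locate a first point at which the executions could differ and show that the responsible coalition member cannot gain there. Your global-sweep/first-divergence packaging is a somewhat more explicit formalization of the paper's argument, which instead picks the coalition member whose hidden assigned edge is assigned first and asserts that nothing has changed in the execution up to that point, but the underlying idea is identical.
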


\begin{proof}
\sloppy
First, note that as $\vec{z}$ is truth-inducing, there is a strict extended lexicographic order of $L \times R$. \greedy\ always considers declared edges according to this order, so for two edges $e$ and $f$ both in $\declaredEdges$ and $\declaredEdges'$, either $e$ is always considered before $f$ or vice verse. Secondly, note that by construction, \greedy\ never unassigns an edge during the entire execution. And lastly, consider an instance $\mathcal{I_{\gap^+}}=(G[\declaredEdges], \vec{v}, \vec{s}, \vec{C}, \hat{M})$ with truth-inducing ranking function $\vec{z}$, $e \in \declaredEdges$ and an instance $\mathcal{I'_{\gap^+}}=(G[\declaredEdges \backslash \{e\}], \vec{v}, \vec{s}, \vec{C}, \hat{M})$. Let $M=\greedy(\mathcal{I_{\gap^+}},\vec{z})$ and $M'=\greedy(\mathcal{I'_{\gap^+}},\vec{z})$. Note that by construction, if $e \notin M$, then $M = M'$. In other words, removing a single unassigned edge from the instance has no influence on the execution of \greedy.

\fussy
Now, consider an instance $\mathcal{I_{\gap^+}}=(G[\declaredEdges], \vec{v}, \vec{s}, \vec{C}, \hat{M})$ with private graph $G[\privateEdges]$. Let $\vec{z}$ be a ranking function that is truth-inducing for $\mathcal{I_{\gap^+}}$. Consider an agent $i \in L$ with $\declaredEdges_i = \privateEdges_i$. 
Let $\declaredEdges'_{i}$ be a deviation of agent $i$ with $\mathcal{I'_{\gap^+}}$ the corresponding instance of this unilateral deviation of $i$. Let $M=\greedy(\mathcal{I_{\gap^+}},\vec{z})$ and let $M'=\greedy(\mathcal{I'_{\gap^+}},\vec{z})$. 

As $\vec{z}$ is truth-inducing for $\mathcal{I_{\gap^+}}$, declared edges of $i$ are always considered in non-increasing order of value. Therefore, and as \greedy\ never unassigns an edge, agent $i$ cannot benefit from reporting edges in $\declaredEdges'_{i}$ that are not in $\privateEdges_i$. If such an edge $(i,j) \in \declaredEdges'_{i} \backslash \privateEdges_i$ is in $M'$, this edge will never be unassigned leading to a utility of zero for agent $i$. If these edges are unassigned in $M'$, they have no influence of the execution of \greedy. Therefore, consider deviations such that $\declaredEdges'_i \subset \privateEdges_i$. Again, as $\vec{z}$ is truth-inducing for $\mathcal{I_{\gap^+}}$, hiding an edge $e \in \privateEdges_i$ that is not in $M$ has no influence on the the execution of \greedy. So consider the case in which $\exists e=(i,j) \in M$ and $e \notin \declaredEdges'_i$. However, edges $f=(i,k) \in \declaredEdges'_i$, so also $f \in \privateEdges_i$, with $z(f) \lex z(e)$ will not be assigned in $M'$ as nothing has changed in the execution of \greedy\ when these edges are considered. Therefore, the utility of agent $i$ will not strictly increase.

For group-strategyproofness, consider a subset $S \subseteq L$ of agents such that $\forall i \in S$ it holds that $\declaredEdges_i = \privateEdges_i$. Let $\declaredEdges'_{S}$ be a group deviation of $S$ with $\mathcal{I'_{\gap^+}}$ the corresponding instance of this unilateral deviation of $S$. Let $M=\greedy(\mathcal{I_{\gap^+}},\vec{z})$ and let $M'=\greedy(\mathcal{I'_{\gap^+}},\vec{z})$. By the same reasoning as above, an agent $i \in S$ cannot benefit from reporting edges in $\declaredEdges'_{i}$ that are not in $\privateEdges_i$. Therefore, consider deviations such that $\declaredEdges'_i \subset \privateEdges_i$ for all $i \in S$. Again, if for all $i \in S$ it holds that all $e \in \privateEdges_i \backslash \declaredEdges'_i$ are not in $M$, this has no influence on the the execution of \greedy. So consider the case in which $\exists i \in S$, $\exists e=(i,j) \in M$ and $e \notin \declaredEdges'_i$. Of all such agents, let $i$ be the agent such that $e=(i,j) \in M$ was the first of these edges to be assigned. Again, edges $f=(i,k) \in \declaredEdges'_i$, so also $f \in \privateEdges_i$, with $z(f) \lex z(e)$ will not be assigned in $M'$ as nothing has changed in the execution of \greedy\ when these edges are considered. Therefore, the utility of agent $i$ will not strictly increase, and no such $i$ will join such a deviation. 
\end{proof}

Finally, we stress that, given an instance $\mathcal{I}_{\gap^+}=(G[D], \vec{v}, \vec{s}, \vec{C}, \hat{M})$, $\greedy$ is not necessarily dependent on the predicted assignment $\hat{M}$; it can handle a non-augmented instance of $\gap$ as well. However, the flexibility is in place for the accompanying ranking function $\vec{z}$ to use $\hat{M}$ in a beneficial manner for the underlying optimization problem. In the following section we present an implementation of this concept.

\subsection{Restricted Multiple Knapsack} \label{subsec:mkar-deterministic}

In this section, we focus on devising a deterministic, group-strategyproof mechanism for $\emkar^+$.\footnote{We readdress the general problem in Section \ref{subsec:sigap-vcgap} through a more general variant, by devising a randomized mechanism.} We write $\mathcal{I}_{\emkar^+}=(G[\declaredEdges], (v_i = s_i)_{i \in \agentSet}, \vec{\jcap}, \hat{M})$ to denote an instance of $\emkar^+$. Interestingly, $\emkar$ is strongly NP-hard as shown by \citet{dawande00}. 
For this class of instances, we show that coupling our \greedy\ mechanism with a carefully chosen ranking function $\vec{z}$ gives a deterministic group-strategyproof mechanism achieving constant approximation guarantees (for every fixed $\gamma \ge 1$). Our ranking function $\vec{z}$ combines our boosted offer notion $\theta_{ij}(\gamma, \hat{M})$ with a specific tie-breaking rule to favor edges in the predicted assignment $\hat{M}$. 
This allows us to derive improved approximation guarantees if the prediction error is small, while at the same time retaining bounded robustness if the prediction is erroneous. 

Let $\gamma \ge 1$ be fixed arbitrarily. We define the ranking function $\vec{z}: L \times R \mapsto \mathbb{R}^4$ as follows: 
Let the boosted offer $\theta_{ij}(\gamma, \hat{M})$ be defined as in \eqref{eq:offer}. Also, let $\mathds{1}_{(i,j) \in \hat{M}}$ be the indicator function which is $1$ if and only if $(i,j) \in \hat{M}$. Then, for each $(i,j) \in L \times R$, we define
\begin{equation}
    \label{eq:mkar-zeta}
    \vec{z}((i,j)) := \Big(\theta_{ij}(\gamma, \hat{M}),\  \mathds{1}_{(i,j) \in \hat{M}},\ -i,\ -j \Big).
\end{equation}

The intuition behind our ranking function is to rank the edges in $\declaredEdges$ by their $\gamma$-boosted value. Recall that for \emkar we have for each agent $i \in \agentSet$, $v_i = v_{ij}$ for all $j \in \taskSet$. In particular, for $\gamma > 1$, the first-order criterion $\theta_{ij}(\gamma, \hat{M})$ ensures that the predicted edge of agent $i$ in $\declaredEdges_i$ is ordered before the non-predicted ones. In fact, crucially, the second-order criterion ensures that this property also holds for $\gamma = 1$. Put differently, whenever $\theta_e(\gamma, \hat{M})=\theta_{e'}(\gamma, \hat{M})$ we make sure that priority is given to edges in $\hat{M} \cap D$. Remarkably, the preference we give to the predictions in case of ties leads to improved approximation guarantees even for $\gamma=1$, if the prediction error $\eta$ is small, i.e., for $\eta < \nicefrac{1}{3}$. If any ties remain, they are broken in increasing index of first $i$ and then $j$.

We use $\mkarGreedy$ to refer to the mechanism that we derive from \greedy\ with the ranking function $\vec{z}$ as defined in \eqref{eq:mkar-zeta}. 

\begin{restatable}{theorem}{mkardeterministic}
    \label{thm:mkar-deterministic}
    Fix some error parameter $\hat{\eta}\in [0,1]$. Consider the class of instances of $\emkar^+$ in the private graph model and prediction error at most $\hat{\eta}$. Then, for every confidence parameter $\gamma \geq 1$, \mkarGreedy is group-strategyproof and has an approximation of guarantee
        \begin{equation}
        \label{eq:apx-mkar-det}
        g(\hat{\eta}, \gamma) = \begin{cases*}
                                    \frac{1+\gamma}{\gamma(1-\hat{\eta})} & if $\hat{\eta}\leq 1-\frac{\gamma+1}{\gamma(\gamma+2)},$\\
                                    2+\gamma & otherwise.
          \end{cases*}
        \end{equation}
    In particular, $\mkarGreedy$ is $(1+\nicefrac{1}{\gamma})$-consistent and $(2+\gamma)$-robust.
\end{restatable}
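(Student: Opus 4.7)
The plan is to decompose the theorem into three claims: (i) group-strategyproofness, (ii) a consistency-type bound $(1+\nicefrac{1}{\gamma})\,v(M)\geq v(\hat{M}\cap \declaredEdges)$, and (iii) a robustness bound $(2+\gamma)\,v(M)\geq v(M^*_\declaredEdges)$. Combining (ii) with the Lifting Lemma (Lemma~\ref{lem:lifting}) and (iii) via the pointwise minimum of the two resulting ratios yields the piecewise expression $g(\hat{\eta},\gamma)$ in \eqref{eq:apx-mkar-det}; equating $\frac{1+\gamma}{\gamma(1-\hat{\eta})}$ with $2+\gamma$ produces the crossover $\hat{\eta}=1-\frac{\gamma+1}{\gamma(\gamma+2)}$.

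For (i), I would verify that the ranking function $\vec{z}$ in \eqref{eq:mkar-zeta} is truth-inducing (Definition~\ref{def:truth-inducing}) on $\emkar^+$ and then invoke Theorem~\ref{thm:greedy-gsp-sufficient}. The extended lex order induced by $\vec{z}$ is strict and total because any two distinct edges differ in at least one of the tie-breaking coordinates $-i$ or $-j$, and the value-monotonicity condition is trivial for $\emkar$ since every edge incident to a given agent $i$ shares the common value $v_i$, so the required inequality holds with equality for any two such edges.

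For (ii), the plan is to mimic the mapping-and-weighting argument of Lemma~\ref{lem:bpm-approx}. For each $(i,j)\in \hat{M}\cap \declaredEdges$: if $i$ is assigned in $M$, say to $j'$, I charge $v_i$ at unit weight to $(i,j')\in M$, whose value equals $v_i$ by the \emkar condition; otherwise $(i,j)$ was considered and rejected by $\mkarGreedy$, so the set $S_i\subseteq M(j)$ already packed at that moment satisfies $\sum_{k\in S_i}s_k>C_j-s_i$. Because $(i,j)\in\hat{M}$ forces the offer $\theta_{ij}=\gamma v_i$, and each $k\in S_i$ was admitted earlier with $\theta_{kj}\geq\gamma v_i$, both sub-cases $(k,j)\in\hat{M}$ and $(k,j)\notin\hat{M}$ force $v_k\geq v_i$. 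Distributing $v_i$ across $S_i$ proportionally to $s_k$ with an additional weight $\nicefrac{1}{\gamma}$ is expected to cap the total charge absorbed by any $(k,j)\in M$ at $(1+\nicefrac{1}{\gamma})\,v_k$; applying the Lifting Lemma then produces the $\frac{1+\gamma}{\gamma(1-\hat{\eta})}$-approximation, which specializes to $(1+\nicefrac{1}{\gamma})$-consistency at $\hat{\eta}=0$.

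For (iii), I would run the analogous charging from $M^*_\declaredEdges$ to $M$, but with looser per-edge comparisons. The essential difference is that an edge $(i^*,j^*)\in M^*_\declaredEdges$ need not lie in $\hat{M}$, so in the worst case the offer drops to $\theta_{i^*j^*}=v_{i^*}$, and for a blocker $k\in S_{i^*}$ with $(k,j^*)\in\hat{M}$ we can only deduce $v_k\geq v_{i^*}/\gamma$; this weakened comparison is precisely what introduces the extra $\gamma$ in the robustness factor. The hard part will be the per-edge bookkeeping in both (ii) and (iii): a single edge $(k,j)\in M$ may simultaneously block several edges of $\hat{M}\cap \declaredEdges$ (or of $M^*_\declaredEdges$), so the aggregated charges must be kept within $(1+\nicefrac{1}{\gamma})\,v_k$ and $(2+\gamma)\,v_k$ respectively; the tie-breaking by $\mathds{1}_{(i,j)\in\hat{M}}$ in $\vec{z}$ is the crucial ingredient that makes this bookkeeping go through even at $\gamma=1$, thereby also producing the $3$-approximation of $\mkarGreedy$ for $\emkar$ without predictions.
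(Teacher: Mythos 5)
Your overall decomposition (truth-inducing ranking $\Rightarrow$ GSP via Theorem~\ref{thm:greedy-gsp-sufficient}; a bound of the form $(1+\nicefrac{1}{\gamma})v(M)\ge v(\hat M\cap\declaredEdges)$ lifted via Lemma~\ref{lem:lifting}; a separate $(2+\gamma)$-robustness bound; crossover at $\hat\eta=1-\frac{\gamma+1}{\gamma(\gamma+2)}$) is exactly the paper's, and your part (i) is correct as written. The gap is in parts (ii) and (iii): the paper does \emph{not} prove these by an edge-to-edge charging in the style of Lemma~\ref{lem:bpm-approx}; it proves per-resource \emph{capacity-utilization} lemmas that crucially exploit $v_i=s_i$, namely that every resource $j$ which rejects a predicted (resp.\ optimal) agent satisfies $(1+\nicefrac{1}{\gamma})v(M(j))\ge C_j$ (resp.\ $2v(M(j))+(\gamma-1)v(M(j)\cap\hat M(j))\ge C_j$), and then compares $v(\hat M(j)\cap\declaredEdges)\le C_j$ resource by resource. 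Your per-edge scheme does not close.

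Concretely, two things break. First, your claimed per-edge cap in (ii) is false: an edge $(k,j)\in M$ can simultaneously receive a unit self-charge of $v_k$ (because $k$ is itself a predicted agent, say of some other resource $j''$, that ended up assigned to $j$) \emph{and} the full distributed charge from unassigned predicted agents of $j$, which alone can reach $(1+\nicefrac{1}{\gamma})v_k$; the total then exceeds $(2+\nicefrac{1}{\gamma})v_k$. The aggregate bound is nevertheless true, but only because the self-charge of $v_k$ should be accounted against the capacity of $j''$ (where $k$ was predicted and rejected), not against $v(M(j))$ --- i.e., the accounting must be done per resource, which is what the paper's partition into $S_1=\{j: \hat M(j)\cap\declaredEdges\subseteq M(j)\}$ and $S_2=R\setminus S_1$ accomplishes. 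Second, your deduction ``both sub-cases force $v_k\ge v_i$'' is too weak to yield the factor $\nicefrac{1}{\gamma}$: from $v_k\ge v_i$ you only get $C_j< v(M(j))+v_i\le 2v(M(j))$. The paper instead fixes a blocker $k\in T\setminus\hat M(j)$ --- such a non-predicted blocker must exist, by feasibility of $\hat M\cap\declaredEdges$ together with $v=s$ --- for which $v_k=\theta_{kj}>\theta_{ij}=\gamma v_i$, giving $v_i< v(M(j))/\gamma$ and hence $C_j<(1+\nicefrac{1}{\gamma})v(M(j))$. This existence argument (and its cousin, Lemma~\ref{claim:mkar-no-subsets}) is the missing ingredient; your ``additional weight $\nicefrac{1}{\gamma}$ in the proportional distribution'' gestures at it but is not justified by the comparisons you actually derive. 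Part (iii) has the analogous issue: the paper splits $v(\mstar{\declaredEdges})$ into agents assigned in $M$ (contributing at most $v(M)$ since values are agent-dependent) and unassigned agents, whose total value is bounded by $\sum_{j}C_j$ over the affected resources via the utilization lemma --- again a resource-level, not edge-level, argument.
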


Note that for $\gamma = 1$ our result implies a $3$-approximate, group-strategyproof mechanism for $\emkar$. To the best of our knowledge, no deterministic strategyproof mechanism was known for this problem prior to our work.

We devote the remainder of this section for the proof of Theorem \ref{thm:mkar-deterministic}. We start by showing group-strategyproofness.

\begin{proof}[Proof of Theorem~\ref{thm:mkar-deterministic} (Group-Strategyproofness)]
    We show that the ranking function $\vec{z}$ is truth-inducing (Definition \ref{def:truth-inducing}) for the class of $\emkar^+$ instances. Let $\gamma \geq 1$. Let $\mathcal{I}_{\emkar^+}=(G[\declaredEdges],(v_i = \isize_i)_{i \in \agentSet}, \vec{C}, \hat{M})$ be any instance of $\emkar^+$.  Note that the extended lexicographic order with respect to $\vec{z}$ as defined in \eqref{eq:mkar-zeta} is strict and total. Also, for each agent $i \in \agentSet$ and all $e = (i,j),\ e' = (i,j') \in \agentSet \times \taskSet$, we have that $\theta_{ij}(\gamma, \hat{M}) \ge \theta_{ij'}(\gamma, \hat{M})$ trivially implies $v_{ij} \ge v_{ij'}$ because $v_{ij} = v_{ij'} = v_i$. Since the ranking function $\vec{z}$ is truth-inducing, the group-strategyproofness of this greedy mechanism follows from Theorem \ref{thm:greedy-gsp-sufficient}.
\end{proof}

Our next objective towards the proof of Theorem \ref{thm:mkar-deterministic} is to show that $\mkarGreedy$ is $(2+\gamma)$-robust. We prove two auxiliary lemmas. 

The first lemma shows that the assignment returned by $\mkarGreedy$ provides a ``utilization guarantee'' for each resource which missed optimal agents.\footnote{This is a generalization of a key idea used by \citet{dawande00} to our environment with predictions. In their work, they show that the assignment of the natural greedy algorithm for this problem guarantees that each resource/knapsack is $\nicefrac{1}{2}$-full. For $\gamma=1$, our mechanism retains this property.}

\begin{lemma}
\label{lemma:MKAR-capacity-ub}
    Let $\gamma \geq 1$. Let $\mathcal{I}_{\emkar^+}=(G[\declaredEdges],(v_i = \isize_i)_{i \in \agentSet}, \vec{C}, \hat{M})$ be an instance of $\emkar^+$ and $M$ be the assignment returned by $\mkarGreedy(\mathcal{I}_{\emkar^+}, \gamma)$. For every resource $j \in R$ for which there exists some agent $i \in M^*_{D}(j)$ with $M(i)=\emptyset$, it holds that $2 v(M(j)) + (\gamma-1)  v(M(j) \cap \hat{M}(j)) \geq C_j$.
\end{lemma}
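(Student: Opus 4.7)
The strategy is to analyze the moment at which $\mkarGreedy$ processes the edge $(i,j)$ (which lies in $\declaredEdges$ since $i \in \mstar{\declaredEdges}(j)$), and to exploit the specific structure of the ranking function in \eqref{eq:mkar-zeta} to lower-bound $\sum_{k \in M(j)} \theta_{kj}$, which equals $v(M(j)) + (\gamma-1) v(M(j) \cap \hat{M}(j))$.

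Let $M_t$ denote the partial assignment at the moment $(i,j)$ is examined. Because $M(i) = \emptyset$ by hypothesis, the capacity check for $(i,j)$ must fail, i.e.
\begin{equation*}
    \sum_{k \in M_t(j)} s_k + s_i > C_j,
    \qquad \text{equivalently} \qquad v(M_t(j)) + v_i > C_j,
\end{equation*}
using $v_k = s_k$ for $\emkar^+$. Moreover, $v_i = s_i \le C_j$ since the optimal assignment is feasible at $j$; if $M_t(j)$ were empty the capacity check would have succeeded and $i$ would have been assigned, a contradiction. Hence $M_t(j) \neq \emptyset$ and $v(M(j)) \geq v(M_t(j)) \geq C_j - v_i$.

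Next, I claim that $\sum_{k \in M_t(j)} \theta_{kj} \geq v_i$. Each agent $k \in M_t(j)$ entered $M(j)$ through the edge $(k,j)$, which was processed before $(i,j)$, so $(k,j) \lexexts (i,j)$ with respect to $\vec{z}$ as defined in \eqref{eq:mkar-zeta}. Inspecting the four coordinates $(\theta, \mathds{1}_{\hat{M}}, -k, -j)$ of $\vec{z}$, whenever the lexicographic comparison is resolved at the second, third or fourth coordinate, the first coordinate must be tied; hence $\theta_{kj} \geq \theta_{ij}$ in all cases. Since $\theta_{ij} \in \{v_i, \gamma v_i\}$ by \eqref{eq:offer} and $\gamma \ge 1$, this gives $\theta_{kj} \geq v_i$ for every $k \in M_t(j)$, and summing over the non-empty set $M_t(j)$ yields the claim.

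Finally, because $M_t(j) \subseteq M(j)$ and $\theta \geq 0$, the previous bound together with the identity $\sum_{k \in M(j)} \theta_{kj} = v(M(j)) + (\gamma-1) v(M(j) \cap \hat{M}(j))$ gives $v(M(j)) + (\gamma-1) v(M(j) \cap \hat{M}(j)) \geq v_i$. Adding this to $v(M(j)) \geq C_j - v_i$ yields the desired bound $2 v(M(j)) + (\gamma-1) v(M(j) \cap \hat{M}(j)) \geq C_j$. The main obstacle is the careful bookkeeping of the tie-breaking in $\vec{z}$: one must verify that none of the secondary criteria (prediction indicator, agent index, resource index) can flip the inequality $\theta_{kj} \geq \theta_{ij}$, which is precisely why $\theta$ is placed as the top-level coordinate of $\vec{z}$ in \eqref{eq:mkar-zeta}.
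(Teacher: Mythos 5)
Your proof is correct and follows essentially the same route as the paper's: both rest on the capacity-violation inequality $C_j < v(M(j)) + v_i$ at the moment $(i,j)$ is rejected, the greedy-order fact $\theta_{ij} \le \theta_{kj}$ for agents $k$ already assigned to $j$, and the bound $v_i \le \theta_{ij}$. The only (cosmetic) difference is that you obtain the upper bound $v_i \le v(M(j)) + (\gamma-1)\,v(M(j)\cap\hat{M}(j))$ via the exact identity $\sum_{k\in M(j)}\theta_{kj} = v(M(j)) + (\gamma-1)\,v(M(j)\cap\hat{M}(j))$ applied to a single term of a non-empty sum, whereas the paper proves the same bound for a single $\theta_{kj}$ by a two-case analysis (Claim~\ref{claim:mkar-theta-ub}).
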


\begin{proof}
    Let $j \in R$ and $i \in M^*_{D}(j)$ with $M(i)=\emptyset$. Since $i \in M^*_{\declaredEdges}(j)$ and $M(i)=\emptyset$, in the iteration in which \greedy\ considered edge $(i,j)$ the if-statement in Line 5 failed and $(i,j)$ was not added to $M$. Let $T \subseteq \agentSet$ denote the set of agents that were assigned to $j$ before this iteration. Note that $T \neq \emptyset$, since $i \in M^{*}_{\declaredEdges}(j)$ so $v_i \leq C_j$. Thus, we must have 
    \begin{equation}
        \label{eq:mkar-constraint-violation}
        C_j < v(T) + v_i \leq v(M(j)) + v_{i}.
    \end{equation}
    The second inequality holds because \greedy\ only adds edges to $M$, and thus $T \subseteq M(j)$. 
     
     Fix an agent $k \in T$. Let $\theta_{ij}:=\theta_{ij}(\gamma, \hat{M})$ and $\theta_{kj}:=\theta_{kj}(\gamma, \hat{M})$. Because $\greedy$ considered edge $(k,j)$ before edge $(i,j)$, by the first-order ranking of $\mkarGreedy$ in \eqref{eq:mkar-zeta} we have $\theta_{ij} \leq \theta_{kj}$. We prove the following simple claim to obtain a useful upper bound on $\theta_{kj}$.

    \begin{claim}
        \label{claim:mkar-theta-ub}
        For all $t \in M(j)$, it holds that $\theta_{tj} \leq v(M(j)) + (\gamma-1) \cdot v(M(j) \cap \hat{M}(j))$.
    \end{claim}
    
    \begin{proof}
    There are two cases to consider for agent $t \in M(j)$.
    \begin{enumerate}
        \item $t \in \hat{M}(j)$. Here, $\theta_{tj} = \gamma v_t \leq \gamma \cdot v(M(j) \cap \hat{M}(j)) \leq v(M(j)) + (\gamma-1) \cdot v(M(j) \cap \hat{M}(j))$. The first equality follows by the definition of $\theta_{tj}$ in \eqref{eq:offer}. Then, the first inequality is true since $t \in M(j) \cap \hat{M}(j)$.
        \item $t \not\in \hat{M}(j)$. Similarly, $\theta_{tj} = v_t \leq v(M(j)) \leq v(M(j)) + (\gamma-1) \cdot v(M(j) \cap \hat{M}(j))$, where the last inequality is true since $\gamma \geq 1$. \qedhere
    \end{enumerate}
    \end{proof}

    \noindent
    We now expand \eqref{eq:mkar-constraint-violation} as follows:
    \begin{equation*}
        C_j < v(M(j)) + v_{i} \leq v(M(j)) + \theta_{ij} \leq v(M(j)) + \theta_{kj} \leq  2 \cdot v(M(j)) + (\gamma-1) \cdot v(M(j) \cap \hat{M}(j)).
    \end{equation*}
    The second inequality holds by \eqref{eq:offer} and the third inequality holds because $\theta_{ij} \leq \theta_{kj}$ as argued above. Finally, the last inequality follows by applying Claim \ref{claim:mkar-theta-ub} for agent $k$. This concludes the proof.
\end{proof}

The next lemma will be useful to prove the $(2+\gamma)$-robustness guarantee of $\mkarGreedy$. This lemma will also be useful in Section~\ref{sec:randomized} when we devise our randomized mechanisms. 

\begin{lemma}
    \label{lemma:MKAR-prerobustness-lemma}
    Let $\gamma \geq 1$. Let $\mathcal{I}_{\emkar^+}=(G[\declaredEdges],(v_i = \isize_i)_{i \in \agentSet}, \vec{C}, \hat{M})$ be an instance of $\emkar^+$ and $M$ be the assignment returned by $\mkarGreedy(\mathcal{I}_{\emkar^+}, \gamma)$. Then, $3 v(M) + (\gamma -1)  v(M \cap \hat{M}) \geq v(M^*_D)$.
\end{lemma}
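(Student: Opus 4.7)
The plan is to split $M^*_D$ into two parts: edges $(i,j) \in M^*_D$ whose agent $i$ is matched in $M$ (possibly to a resource other than $j$), and edges whose agent $i$ is unmatched in $M$. Crucially, in $\emkar^+$ each agent has a single value $v_i = v_{ij}$, which is what makes this clean splitting meaningful.

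For the first part, since every agent $i$ contributes the same value $v_i$ regardless of where it is assigned, the total $v_i$-mass from agents in $M^*_D$ that are also matched somewhere in $M$ is bounded trivially by $v(M)$. So this part costs me at most one copy of $v(M)$.

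For the second part, I group the ``lost'' optimal agents by their resource in $M^*_D$. For each resource $j$ such that there exists some $i \in M^*_D(j)$ with $M(i)=\emptyset$, Lemma~\ref{lemma:MKAR-capacity-ub} applies and yields
\[
2 v(M(j)) + (\gamma-1)  v(M(j) \cap \hat{M}(j)) \geq C_j.
\]
On the other hand, since $M^*_D(j)$ is feasible at $j$, the sum of values of the optimal agents assigned to $j$ (in particular, the lost ones) is at most $C_j$. Summing the resulting inequality over all resources $j$ that lose at least one optimal agent, and noting that each edge of $M$ appears in at most one $M(j)$, I obtain that the total value of all lost agents is bounded by $2v(M) + (\gamma-1)  v(M \cap \hat{M})$.

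Adding the two bounds gives $v(M^*_D) \le 3v(M) + (\gamma-1)  v(M \cap \hat{M})$, which is the claim. The only subtle point is ensuring no double-counting: the value of an agent in $M^*_D \cap M$ (considered as a set of matched agents) is accounted for in the first bound, while the capacity argument for the second bound only involves resources $j$ for which at least one of their $M^*_D$-agents is entirely lost — those contributions are disjoint. Once this bookkeeping is clean, the robustness part of Theorem~\ref{thm:mkar-deterministic} follows immediately by $v(M \cap \hat{M}) \le v(M)$, and the approximation bound follows by combining this lemma (via an analogue of Lemma~\ref{lem:bpm-approx} for $\mkarGreedy$) with the Lifting Lemma.
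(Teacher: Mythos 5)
Your proposal is correct and follows essentially the same route as the paper's proof: you split $v(M^*_D)$ into agents matched in $M$ (bounded by $v(M)$ using $v_{ij}=v_i$) and "lost" agents grouped by their optimal resource, then bound the latter via the feasibility of $M^*_D$ (using $v_i=s_i$) together with Lemma~\ref{lemma:MKAR-capacity-ub}, exactly as the paper does with its sets $S$ and $T$. The bookkeeping point you flag is handled the same way in the paper, so there is nothing to add.
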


\begin{proof}
We use $M^*:=M^*_{\declaredEdges}$ for brevity. Let $S=\{i \in L\mid M(i)= \emptyset \land M^*(i) \neq \emptyset \}$ and let $T=\{j \in R \mid S \cap M^*(j) \neq \emptyset\}$. We have:
    \begin{align}
        \sum_{i \in S} v_i &= \sum_{j \in T}v(S \cap M^*(j))
             \leq \sum_{j \in T}C_j
             \leq \sum_{j \in T}\big( 2  v(M(j)) + (\gamma-1)  v(M(j) \cap \hat{M}(j) \big) \notag \\
             &= 2  \sum_{j \in T} v(M(j)) +(\gamma-1)  \sum_{j \in T}v(M(j) \cap \hat{M}(j)) 
             \leq 2  v(M) + (\gamma-1)  v(M \cap M). \label{eq:mkar-vs-bound}
    \end{align}
    Here, the first equality follows from the definitions of $S$ and $T$. The first inequality is due to the feasibility of the assignment $M^*$. The second inequality holds by applying Lemma \ref{lemma:MKAR-capacity-ub} for each resource $j \in T$. (Note that the preconditions of Lemma \ref{lemma:MKAR-capacity-ub} are satisfied for each resource $j \in T$: under the assignment $M^*$ there exists at least one agent $i \in M^*(j)$ with $M(i)=\emptyset$; in fact, every such agent must be in $S$). Finally, the last inequality holds since $T \subseteq R$ and $\gamma \geq 1$.
    
    Furthermore, the definition of $S \subseteq L$ implies that
    \begin{equation}
        \label{eq:mkar-opt-bound}
        v(M^*) 
        = \sum_{i \in L \setminus S}v_{iM^*(i)} + \sum_{i \in S}v_{iM^*(i)}
        = \sum_{\substack{i \in L: \\ M(i) \neq \emptyset}}v_{iM^*(i)} + \sum_{i \in S} v_i \leq v(M)+ \sum_{i \in S} v_i.
    \end{equation}
    By summing \eqref{eq:mkar-vs-bound} and \eqref{eq:mkar-opt-bound}, the lemma follows. \qedhere
\end{proof}

The proof that \mkarGreedy\ is $(2+\gamma)$-robust now follows easily. 

\begin{proof}[Proof of Theorem~\ref{thm:mkar-deterministic} (Robustness)]
Let $\gamma \ge 1$ be fixed.
Let $\mathcal{I}_{\emkar^+} = (G[\declaredEdges],(v_i = \isize_i)_{i \in \agentSet}, \vec{C}, \hat{M})$ be an instance of $\emkar^+$ and let $M$ be the assignment returned by $\mkarGreedy(\mathcal{I}_{\emkar^+}, \gamma)$. 
Further, let $\mstar{\declaredEdges}$ be an optimal matching. By Lemma~\ref{lemma:MKAR-prerobustness-lemma}, we have
\[ 
v(\mstar{\declaredEdges}) \leq 3 \cdot v(M) + (\gamma -1) \cdot v(M \cap \hat{M})
\leq 3 \cdot v(M) + (\gamma -1) \cdot v(M). 
\qedhere
\]
\end{proof}

\sloppy
\medskip
In the remainder of this section, we establish the approximation guarantee of $\mkarGreedy$ as stated in Theorem \ref{thm:mkar-deterministic}. We first show that whenever $\mkarGreedy$ makes a ``mistake'', meaning that for an edge $(i,j) \in \hat{M}$ it chooses not to assign agent $i$ to their predicted resource $j$, then resource $j$ is already $\nicefrac{\gamma}{\gamma+1}$-utilized.
\fussy

\begin{lemma}
    \label{lemma:mkar-s2-bound}
    Let $\gamma \geq 1$. Let $\mathcal{I}_{\emkar^+}=(G[\declaredEdges],(v_i = \isize_i)_{i \in \agentSet}, \vec{C}, \hat{M})$ be an instance of $\emkar^+$ and let $M$ be the assignment returned by $\mkarGreedy(\mathcal{I}_{\emkar^+}, \gamma)$. Every resource $j \in R$ with $((\hat{M}(j) \cap \declaredEdges) \setminus M(j)) \neq \emptyset $ satisfies $(1+\nicefrac{1}{\gamma}) \cdot v(M(j)) \geq C_j$.
\end{lemma}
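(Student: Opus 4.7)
The plan is to take an agent $i$ with $(i,j) \in \hat{M} \cap \declaredEdges$ and $i \notin M(j)$ (such an $i$ exists by hypothesis), trace what happened inside $\mkarGreedy$ at the moment the edge $(i,j)$ was examined, and read off a lower bound on $v(M(j))$ from that.

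First, I would show that at the moment $\mkarGreedy$ examined $(i,j)$, agent $i$ was still unassigned. Because $(i,j) \in \hat{M}$, we have $\theta_{ij} = \gamma v_i$ and $\mathds{1}_{(i,j) \in \hat{M}} = 1$. For any other $(i,j') \in \declaredEdges_i$, the fact that $\hat{M}$ is an assignment (each agent is in at most one edge) implies $(i,j') \notin \hat{M}$, giving $\theta_{ij'} = v_i$ and indicator $0$. Hence the ranking function $\vec{z}$ orders $(i,j)$ before every other edge of $i$: the first coordinate $\theta$ handles the case $\gamma > 1$, and the second coordinate (the $\hat{M}$-indicator) handles the remaining case $\gamma = 1$. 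Since agent $i$ is ultimately not assigned to $j$, the capacity test in \greedy must have failed when $(i,j)$ was considered, that is,
\[
v(M(j)_t) + v_i > C_j,
\]
where $M(j)_t$ denotes $M(j)$ at that instant. Because \greedy\ never removes edges, $v(M(j)) \ge v(M(j)_t) > C_j - v_i$.

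Second, I would derive a lower bound on $v_k$ for each $k \in M(j)_t$. Such a $k$ was processed before $(i,j)$, so by the first-order criterion of $\vec{z}$ we have $\theta_{kj} \ge \theta_{ij} = \gamma v_i$. Since $\theta_{kj} \le \gamma v_k$ regardless of whether $(k,j) \in \hat{M}$ or not, this yields $v_k \ge v_i$. Strengthening this: if $(k,j) \notin \hat{M}$, then $\theta_{kj} = v_k$, so in fact $v_k \ge \gamma v_i$.

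Finally, I would split on the size of $v_i$ relative to $\frac{C_j}{\gamma+1}$. If $v_i \le \frac{C_j}{\gamma+1}$, then the bound from Step~1 is already enough: $v(M(j)) > C_j - v_i \ge \frac{\gamma}{\gamma+1} C_j$, which rearranges to $(1+\nicefrac{1}{\gamma})\, v(M(j)) \ge C_j$. If instead $v_i > \frac{C_j}{\gamma+1}$, I would first check whether some $k \in M(j)_t$ satisfies $(k,j) \notin \hat{M}$; in that case Step~2 gives $v(M(j)) \ge v_k \ge \gamma v_i > \frac{\gamma}{\gamma+1} C_j$, and we are done. Otherwise, every $k \in M(j)_t$ is predicted to $j$, and I would combine $v(M(j)_t) \ge |M(j)_t|\, v_i$ with the capacity violation $v(M(j)_t) + v_i > C_j$ to conclude $|M(j)_t| \cdot v_i > C_j - v_i$, from which $(|M(j)_t|+1)\, v_i > C_j$ and then $v(M(j)) \ge |M(j)_t| v_i \ge \frac{|M(j)_t|}{|M(j)_t|+1}\, C_j \ge \frac{\gamma}{\gamma+1}\, C_j$, where the last step needs the pigeonhole-style observation that under the hypothesis $v_i > \frac{C_j}{\gamma+1}$ the count $|M(j)_t|$ is forced to be $\ge \gamma$. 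The main obstacle is precisely this last sub-case: the only per-edge lower bound available is the weak one $v_k \ge v_i$ (not $\gamma v_i$), so closing the gap requires carefully combining the ordering-induced pointwise lower bound with the capacity inequality at $j$ rather than invoking a single large edge.
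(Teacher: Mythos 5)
There is a genuine gap, and it sits exactly where you flag it: the sub-case $v_i > \nicefrac{C_j}{\gamma+1}$ with every $k \in M(j)_t$ predicted to $j$. Two things go wrong there. First, the inference ``combine $v(M(j)_t) \ge |M(j)_t|\, v_i$ with $v(M(j)_t) + v_i > C_j$ to conclude $|M(j)_t|\cdot v_i > C_j - v_i$'' is a non-sequitur: from $A \ge B$ and $A > C$ you cannot deduce $B > C$; you would need an \emph{upper} bound $v(M(j)_t) \le |M(j)_t| v_i$, which is not available. Second, the pigeonhole claim that $v_i > \nicefrac{C_j}{\gamma+1}$ forces $|M(j)_t| \ge \gamma$ is unsupported and false in general: nothing prevents $|M(j)_t| = 1$ while $\gamma$ is large (indeed, the feasibility of the tentative assignment gives $|M(j)_t| \cdot v_i \le v(M(j)_t) \le C_j$, i.e., $|M(j)_t| < \gamma + 1$, a bound in the \emph{opposite} direction). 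So this branch of your argument does not close.

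The missing idea is that this sub-case is vacuous, and seeing why requires using the feasibility of $\hat{M} \cap \declaredEdges$ (which your proof never invokes, and without which the lemma is actually false: take two agents both predicted to $j$ with $v_i + v_k > C_j$). If every $k \in M(j)_t$ lies in $\hat{M}(j) \cap \declaredEdges$, then $M(j)_t \cup \{i\} \subseteq \hat{M}(j) \cap \declaredEdges$, and since $v_t = s_t$ for all agents in \emkar, feasibility of $\hat{M} \cap \declaredEdges$ at $j$ gives $v(M(j)_t) + v_i \le C_j$ --- contradicting the failed capacity test. This is precisely the observation the paper makes (it notes $M(j)_t \setminus (\hat{M}(j) \cap \declaredEdges) \neq \emptyset$), and it lets the paper dispense with your case analysis entirely: pick any $k \in M(j)_t$ with $(k,j) \notin \hat{M}$, so that $v_k = \theta_{kj} \ge \theta_{ij} = \gamma v_i$, hence $v_i \le v_k/\gamma \le v(M(j))/\gamma$, and plug this into $C_j < v(M(j)_t) + v_i \le v(M(j)) + v(M(j))/\gamma$. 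Your Steps 1 and 2 and the first two branches of Step 3 are correct and essentially mirror the paper's reasoning; adding the vacuousness observation would both repair the broken branch and let you collapse the whole case split into the paper's one-line chain.
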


\begin{proof}
Fix an agent $i \in (\hat{M}(j) \cap D) \setminus M(j)$ and denote by $T \subseteq L$ the set of agents assigned to resource $j$ before $\greedy$ considered $(i,j)$. Fix an agent $k \in T \setminus (\hat{M}(j) \cap D)$.\footnote{Note that $T \setminus (\hat{M}(j) \cap D) \neq \emptyset$. Otherwise, by the feasibility of $\hat{M} \cap D$, we would have that $v(T) +v_i \leq C_j$, a contradiction since $i \not \in M(j)$.} We have:
\begin{equation*}
     C_j < v(T) +v_i \leq v(M(j)) +v_i \leq  v(M(j)) +\frac{v_k}{\gamma} \leq \bigg(1 + \frac{1}{\gamma} \bigg) \cdot v(M(j)).
\end{equation*}
 The first inequality holds since $i \not \in M(j)$ implies that in the iteration in which \greedy\ considered edge $(i,j)$, the if-statement in Line 5 failed and $(i,j)$ was not added to $M$. The second inequality follows because $T \subseteq M(j)$ (\greedy\ only adds edges to $M$). The third inequality follows because $\greedy$ considered $(k,j)$ before $(i,j)$, and by the ranking $\vec{z}$ of $\mkarGreedy$ (as defined in \eqref{eq:mkar-zeta}) it thus holds that $\gamma v_i = \theta_{ij}(\gamma, \hat{M}) < \theta_{kj}(\gamma, \hat{M})= v_k$. 
 Finally, the last inequality is true since $k \in  M(j)$. This concludes the proof. 
\end{proof}

The following lemma will further facilitate our proof. It implies that \greedy\ will never reject a predicted edge for a resource $j \in R$, unless it has already selected an agent not in $\hat{M}(j)$ in a previous iteration.

    \begin{lemma}
        \label{claim:mkar-no-subsets}
        Let $\gamma \geq 1$. Let $\mathcal{I}_{\emkar^+}=(G[\declaredEdges],(v_i = \isize_i)_{i \in \agentSet}, \vec{C}, \hat{M})$ be an instance of $\emkar^+$ and let $M$ be the assignment returned by $\mkarGreedy(\mathcal{I}_{\emkar^+}, \gamma)$. There is no resource $j \in R$ such that $M(j) \subset \hat{M}(j) \cap D$.
    \end{lemma}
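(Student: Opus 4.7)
The plan is to proceed by contradiction. Suppose there exists a resource $j \in R$ with $M(j) \subsetneq \hat{M}(j) \cap D$, and fix an agent $i \in (\hat{M}(j) \cap D) \setminus M(j)$. I will derive a contradiction via a capacity argument by first pinning down exactly when \greedy processed the edge $(i,j)$.

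The first step is to show that $(i,j)$ is the \emph{first} edge incident to agent $i$ that \greedy ever processes. Because we are in $\emkar^+$, $v_{ij'} = v_i$ for every $j'\in R$. Since $(i,j) \in \hat{M}$ we have $\theta_{ij}(\gamma, \hat{M}) = \gamma v_i$; and since $\hat{M}$ is an assignment, $(i,j') \notin \hat{M}$ for any $j' \ne j$, so $\theta_{ij'}(\gamma, \hat{M}) = v_i$. If $\gamma > 1$ this gives $\theta_{ij} > \theta_{ij'}$ strictly; if $\gamma = 1$ the two are equal, but the second-order criterion $\mathds{1}_{(\cdot)\in\hat{M}}$ in the ranking $\vec{z}$ from \eqref{eq:mkar-zeta} breaks the tie in favor of $(i,j)$. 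Hence in either case $(i,j)$ precedes every other edge of $i$ in the sorted list $\mathcal{L}$. Consequently, when \greedy reaches $(i,j)$, no earlier edge of $i$ has been considered, so $i$ is still unassigned.

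The second step uses the fact that \greedy never removes an edge from $M$. Since $i$ is unassigned when $(i,j)$ is processed and $i \notin M(j)$ in the final assignment, the capacity check at Line~5 of \greedy must have failed in that iteration. Letting $T$ denote the set of agents assigned to $j$ at that moment, this means $v(T) + v_i > C_j$. By monotonicity, $T \subseteq M(j)$, and by the standing assumption $M(j) \subsetneq \hat{M}(j) \cap D$, so $T \cup \{i\} \subseteq \hat{M}(j) \cap D \subseteq \hat{M}(j)$. Using that the predicted assignment $\hat{M}$ is feasible with respect to capacities, $v(\hat{M}(j)) \le C_j$, and therefore $v(T) + v_i \le v(\hat{M}(j)) \le C_j$, contradicting $v(T) + v_i > C_j$.

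The main obstacle is the first step: making sure that $(i,j)$ is genuinely processed before any other edge of agent $i$. This is where the second-order tiebreaker $\mathds{1}_{(\cdot)\in \hat{M}}$ in $\vec{z}$ is indispensable, since in the boundary case $\gamma = 1$ all of $i$'s edges share the same $\theta$-value and the lemma would otherwise fail. The remaining capacity argument is then a short consequence of the monotone growth of $M$ under \greedy together with the feasibility of $\hat{M}$.
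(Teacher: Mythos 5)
Your proof is correct and follows essentially the same route as the paper's: first use the first-order criterion (for $\gamma>1$) and the second-order tiebreaker $\mathds{1}_{(\cdot)\in\hat{M}}$ (for $\gamma=1$), together with the fact that $\hat{M}$ assigns each agent at most once, to conclude that $(i,j)$ is the first edge of $i$ processed, and then derive the capacity contradiction from $T\cup\{i\}\subseteq \hat{M}(j)\cap \declaredEdges$ and $v_i=s_i$. The only cosmetic difference is that you pass through $v(\hat{M}(j))\le C_j$, i.e.\ feasibility of the full prediction, whereas the containment you already established lets you stop at $v(\hat{M}(j)\cap\declaredEdges)\le C_j$, which is the (weaker) feasibility assumption the paper actually invokes.
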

    
    \begin{proof}
        Suppose for contradiction that there exists a resource $j \in R$ with $M(j) \subset \hat{M}(j) \cap D$. Fix an agent $i \in (\hat{M}(j) \cap \declaredEdges) \setminus M(j)$. We first argue that $(i,j)$ is the first edge of agent $i$ that is considered by $\greedy$.
        
        Since $i \in \hat{M}(j)$, by the first-order criterion of $\vec{z}$ in \eqref{eq:mkar-zeta}, it holds that $\theta_{ij}(\gamma, \hat{M}) \geq \theta_{i \ell}(\gamma, \hat{M})=v_{i}$ for all edges $(i, \ell) \in \declaredEdges_i \setminus\{(i, j)\}$. Furthermore, in case of ties, i.e., when $\theta_{ij}(\gamma, \hat{M})=v_i$ (which can occur when $\gamma=1$), by the second-order criterion of $\vec{z}$, edge $(i,j)$ still precedes all other edges $(i ,\ell) \in \declaredEdges_i \setminus \{(i,j)\}$. Note that we exploit that $\hat{M}$ is a many-to-one matching here. Thus, according to the order used by \greedy, edge $(i,j)$ is the first edge of agent $i$ that will be considered.
    
        Denote by $T$ the set of agents that were already assigned to resource $j$ by $\greedy$ before $(i,j)$ was considered. We have 
        $$C_j < v(T) + v_i \leq v(M(j)) + v_i \leq v((\hat{M}(j) \cap D) \setminus\{i\}) +v_i = v(\hat{M}(j) \cap D).$$
        Note that the first inequality holds because $(i,j)$ is the first edge of $i$ considered by \greedy, but was not selected ($i \notin M(j)$ by assumption), and thus the if-statement in Line 5 must have failed. The second inequality follows because $T \subseteq M(j)$ (\greedy\ only adds edges to $M$). The third inequality follows because $M(j) \subseteq (\hat{M}(j) \cap D) \setminus \{i\}$ (by assumption).
        
        The claim now follows by recalling that for the special case considered here we have $v_i = \isize_i$ for all $i \in \agentSet$. The above inequality thus means that $\hat{M} \cap \declaredEdges$ is an infeasible assignment, which is a contradiction. 
    \end{proof}

We now have all ingredients together to prove that the assignment computed by \mkarGreedy\ is a $(1+\nicefrac{1}{\gamma})$-approximation with respect to the predicted assignment. 
This is the final building block to complete the proof of the approximation guarantee.

\begin{lemma}
\label{lemma:mkar-approx-without-eta}
Let $\gamma \geq 1$. Let $\mathcal{I}_{\emkar^+}=(G[\declaredEdges],(v_i = \isize_i)_{i \in \agentSet}, \vec{C}, \hat{M})$ be an instance of $\emkar^+$ and let $M$ be the assignment returned by $\mkarGreedy(\mathcal{I}_{\emkar^+}, \gamma)$. Then, $(1+\nicefrac{1}{\gamma}) \cdot v(M) \geq v(\hat{M}\cap D)$.
\end{lemma}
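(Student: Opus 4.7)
The plan is to reduce the global inequality to a per-resource inequality and then sum. Concretely, I aim to show that for every $j \in R$,
\[
\Big(1+\tfrac{1}{\gamma}\Big) v(M(j)) \ge v(\hat{M}(j) \cap D),
\]
after which the lemma follows immediately by summing over $j \in R$ and using the decompositions $v(M) = \sum_{j \in R} v(M(j))$ and $v(\hat{M} \cap D) = \sum_{j \in R} v(\hat{M}(j) \cap D)$.

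To establish the per-resource inequality, fix $j \in R$ and split into two cases depending on whether $\mkarGreedy$ captured all predicted-and-declared edges at $j$. If $\hat{M}(j) \cap D \subseteq M(j)$, then since all edge values are positive we have $v(M(j)) \ge v(\hat{M}(j) \cap D)$, and the desired inequality follows because $1+\nicefrac{1}{\gamma} \ge 1$. Otherwise, $(\hat{M}(j) \cap D) \setminus M(j) \ne \emptyset$, so the hypothesis of Lemma~\ref{lemma:mkar-s2-bound} is met and yields $(1+\nicefrac{1}{\gamma}) v(M(j)) \ge C_j$. Since $\hat{M} \cap D$ is a feasible assignment in $G[D]$ (as a subset of the feasible assignment $\hat{M}$), we have $v(\hat{M}(j) \cap D) \le C_j$ in the \emkar setting where $v_i = s_i$, which combined with the previous bound closes the case.

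The argument is essentially an assembly of earlier pieces: the technical core is already isolated in Lemma~\ref{lemma:mkar-s2-bound}, and there is no substantial obstacle left. The only subtle point to handle carefully is the implicit use of the feasibility of $\hat{M} \cap D$ in $G[D]$ to convert the capacity bound $(1+\nicefrac{1}{\gamma})v(M(j)) \ge C_j$ into a value bound against $v(\hat{M}(j)\cap D)$ (the same implicit assumption used in the proof of Lemma~\ref{claim:mkar-no-subsets}). I do not expect Lemma~\ref{claim:mkar-no-subsets} itself to be invoked here, since the above two-case dichotomy already covers all resources without needing to rule out the proper-subset situation.
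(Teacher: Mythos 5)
Your proof is correct and is essentially the paper's own argument: the two per-resource cases you distinguish are exactly the sets $S_1$ and $S_2$ over which the paper splits the sum, with the same use of Lemma~\ref{lemma:mkar-s2-bound} and of the feasibility of $\hat{M}\cap\declaredEdges$ to bound $v(\hat{M}(j)\cap\declaredEdges)$ by $C_j$. Your observation that Lemma~\ref{claim:mkar-no-subsets} need not be invoked here is also right, since the two cases are complementary by pure set logic (the paper cites it only to justify the partition, which holds anyway).
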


\begin{proof}
     Let $S_{1}=\{j \in R \mid M(j) \supseteq \hat{M}(j) \cap \declaredEdges\}$ and $S_{2}=\{j \in R \mid (\hat{M}(j) \cap \declaredEdges) \setminus M(j) \neq \emptyset  \}$. 
     Note that $S_1$ and $S_2$ partition $\taskSet$, i.e., $\taskSet = S_1 \cup S_2$, because for each resource $j \in \taskSet$, $M(j) \not{\subset} \hat{M}(j) \cap D$ by Lemma~\ref{claim:mkar-no-subsets}. 
          
     By the definition of $S_1$, we obtain
    \begin{equation}
        \label{eq:mkar-s1}
        \sum_{j \in S_1} v(\hat{M}(j) \cap D) \leq \sum_{j \in S_1} v(M(j)) <  \bigg( 1+\frac{1}{\gamma} \bigg) \cdot  \sum_{j \in S_1} v(M(j)).
    \end{equation}
    
    Further, we have
    \begin{equation}
        \label{eq:mkar-s2}
        \sum_{j \in S_2} v(\hat{M}(j) \cap D)\leq \sum_{j \in S_2} C_j  \leq \bigg( 1+\frac{1}{\gamma} \bigg) \cdot  \sum_{j \in S_2} v(M(j)).
    \end{equation}
    Here, the first inequality follows because $v_i = \isize_i$ for all agents $i \in \agentSet$ and from the feasibility of $\hat{M}\cap \declaredEdges$. 
    The second inequality follows by applying Lemma \ref{lemma:mkar-s2-bound} for each resource $j \in S_2$.
    By summing \eqref{eq:mkar-s1} and \eqref{eq:mkar-s2}, the proof follows.
\end{proof}

We can now complete the proof of Theorem~\ref{thm:mkar-deterministic}.

\begin{proof}[Proof of Theorem~\ref{thm:mkar-deterministic} (Approximation)]
Let $\gamma \ge 1$ be fixed arbitrarily. Consider an instance $\mathcal{I}_{\emkar^+}=(G[\declaredEdges],(v_i = \isize_i)_{i \in \agentSet}, \vec{C}, \hat{M})$ of $\emkar^+$ with prediction error $\eta(\mathcal{I}_{\emkar^+}) \le \hat{\eta}$. Let $M$ be the assignment returned by $\mkarGreedy(\mathcal{I}_{\emkar^+}, \gamma)$. Note that by Lemma~\ref{lemma:mkar-approx-without-eta} we have $(1+\nicefrac{1}{\gamma}) v(M) \ge v(\hat{M} \cap \declaredEdges)$. Now, using the Lifting Lemma (Lemma~\ref{lem:lifting}) we conclude that \mkarGreedy\ attains an approximation of $(1+\nicefrac{1}{\gamma})/(1-\hat{\eta})$. Further, the robustness guarantee of $2+\gamma$  holds independently of the prediction error $\hat{\eta}$. The claimed bound on the approximation guarantee $g(\hat{\eta}, \gamma)$ in \eqref{eq:apx-mkar-det} now follows by combining these two bounds. 
\end{proof}

\section{Randomized Mechanisms for GAP With Predictions} \label{sec:randomized}

In this section, we devise randomized mechanisms for variants of the Generalized Assignment Problem with predictions. It is important to note that all randomized mechanisms in this section attain the stronger property of universal group-strategyproofness (rather than just universal strategyproofness).

A common thread among all mechanisms in this section is that they may return the outcome of $\trust$ with some probability. As discussed in Section \ref{sec:lowerBounds}, $\trust$ alone lacks a robustness guarantee. However, we can obtain randomized schemes with improved (expected) robustness guarantees by mixing between $\trust$ and other mechanisms. First, we present our methodology in Section \ref{subsec:randomized-separation} by applying it to $\bmp^+$. Then, in Section~\ref{app-subsec:mkar-randomized} we show that the same idea can be applied to $\emkar^+$. Finally, in Section \ref{subsec:sigap-vcgap} we derive our randomized mechanisms for the more general variants of $\gap$.

\subsection{Improved Robustness via Randomization: A Separation Result for Matching} \label{subsec:randomized-separation}

We demonstrate our idea by applying it to $\bmp^+$. Our mechanism \randomizedBoost\ randomizes over two deterministic mechanism, one with and one without a robustness guarantee. \randomizedBoost\ can be summarized as follows: given an instance $\mathcal{I}_{\bmp^+}$ and a confidence parameter $\gamma \geq 1$, with probability $p$ the mechanism outputs $M_1=\boost(\mathcal{I}_{\bmp^+}, \delta(\gamma))$, with $\delta(\gamma)=\sqrt{2(\gamma+1)}-1$. With probability $1-p$, \randomizedBoost\ outputs the matching $M_2=\trust(\mathcal{I}_{\bmp^+})=\hat{M} \cap D$. As we have discussed in Section \ref{sec:lowerBounds}, using \trust implies confidence in the quality of the prediction $\hat{M}$. In expectation, this trade-off gives \randomizedBoost\ an edge regarding robustness compared to simply running $\boost$ and, at the same time, allows it to retain the approximation of $\boost$.

\begin{mechanism}[t]
\SetKwInput{Input}{Input}
\SetKwInput{Output}{Output}
\DontPrintSemicolon
\caption{\label{alg:randomized-bpm}
$\randomizedBoost(\mathcal{I}_{\bmp^+}, \gamma)$}
\Input{An instance $\mathcal{I}_{\bmp^+} =(G[\declaredEdges], \values, \hat{M})$, confidence parameter $\gamma \geq 1$.}
\Output{A probability distribution over matchings for $\mathcal{I}_{\bmp^+}$.}
Let $\delta(\gamma)=\sqrt{2(\gamma+1)}-1$.  \tcp*{Note that $\delta(\gamma) \geq 1$ for all $\gamma \geq 1$.}
Set $M_1=\boost(\mathcal{I}_{\bmp^+}, \delta(\gamma))$.\\
Set $M_2=\trust(\mathcal{I}_{\bmp^+}, \declaredEdges)$.\\
Set $p=2/(\delta(\gamma)+1)$. \tcp*{Note that $p \in (0,1]$ for all $\gamma \geq 1$.}
\Return $M_1$ with probability $p$ and $M_2$ with probability $1-p$.
\end{mechanism}

\begin{restatable}{theorem}{sixone}\label{thm:bpm-randomized}
Fix some error parameter $\hat{\eta}\in [0,1]$. Consider the class of instances of $\bmp^+$ in the private graph model with prediction error at most $\hat{\eta}$. Then, for every confidence parameter $\gamma \geq 1$, \randomizedBoost\ is universally group-strategyproof and has an expected approximation guarantee of
\[
g(\bar{\eta}, \gamma) = 
\begin{cases}
    \frac{1+\gamma}{\gamma(1-\hat{\eta})} & \text{if $\hat{\eta} \le 1 - \frac{\sqrt{2(\gamma+1)}}{2\gamma}$,} \\[1.5ex]
    \sqrt{2(\gamma+1)} & \text{otherwise.}
\end{cases}
\]
In particular, \randomizedBoost\ is $(1+\nicefrac{1}{\gamma})$-consistent and $\sqrt{2(\gamma+1)}$-robust (both in expectation).
\end{restatable}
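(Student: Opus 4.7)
The plan is to leverage the two building blocks of $\randomizedBoost$ separately: the robustness property of $\boost$ from Lemma \ref{lem:missing-lemma} and the predicted-matching approximation of $\boost$ from Lemma \ref{lem:bpm-approx}, and then combine them via the choice of the mixing probability $p$ and the inflated confidence parameter $\delta(\gamma) = \sqrt{2(\gamma+1)} - 1$. Universal group-strategyproofness is immediate: by Theorem \ref{thm:bpm}, $\boost(\cdot, \delta(\gamma))$ is deterministic GSP, and by Theorem \ref{lemma:trust-gsp}, $\trust$ is deterministic GSP; $\randomizedBoost$ is a distribution over exactly these two deterministic GSP mechanisms, so it is universally GSP by definition.

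For the approximation, I would fix the shorthand $\delta := \delta(\gamma)$, so that $\delta+1 = \sqrt{2(\gamma+1)}$ and $p = 2/(\delta+1)$, and write $M_1 = \boost(\mathcal{I}_{\bmp^+}, \delta)$, $M_2 = \hat{M} \cap \declaredEdges$. The expected value of the returned matching is
\[
\mathbb{E}[v(M)] \;=\; p\, v(M_1) + (1-p)\, v(M_2) \;=\; \frac{2v(M_1) + (\delta-1)\, v(M_2)}{\delta+1}.
\]
The key step for the robustness bound is to start from Lemma \ref{lem:missing-lemma} applied to $M_1$: $2 v(M_1) + (\delta-1) v(M_1 \cap \hat{M}) \geq v(M^*_D)$. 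Since $M_1 \subseteq \declaredEdges$, we have $M_1 \cap \hat{M} \subseteq \hat{M} \cap \declaredEdges = M_2$, hence $v(M_1 \cap \hat{M}) \le v(M_2)$. Substituting gives $(\delta+1)\, \mathbb{E}[v(M)] \ge v(M^*_D)$, i.e., $\sqrt{2(\gamma+1)}$-robustness in expectation.

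For the error-dependent bound, I would invoke Lemma \ref{lem:bpm-approx} applied to $M_1$ with confidence $\delta$, which yields $v(M_1) \ge \frac{\delta}{\delta+1}\, v(\hat{M} \cap \declaredEdges) = \frac{\delta}{\delta+1}\, v(M_2)$. Plugging this into the formula for $\mathbb{E}[v(M)]$ gives
\[
\mathbb{E}[v(M)] \;\ge\; \frac{\tfrac{2\delta}{\delta+1} + (\delta-1)}{\delta+1}\, v(M_2) \;=\; \frac{2\delta + \delta^2 - 1}{(\delta+1)^2}\, v(M_2).
\]
The main algebraic observation (and really the only nontrivial step) is that the specific choice $\delta+1 = \sqrt{2(\gamma+1)}$ makes this collapse cleanly: $(\delta+1)^2 = 2(\gamma+1)$ and $\delta^2 + 2\delta - 1 = (\delta+1)^2 - 2 = 2\gamma$, so $\mathbb{E}[v(M)] \ge \frac{\gamma}{\gamma+1}\, v(\hat{M} \cap \declaredEdges)$, equivalently $(1 + \nicefrac{1}{\gamma})\, \mathbb{E}[v(M)] \ge v(\hat{M} \cap \declaredEdges)$. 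Applying the Lifting Lemma (Lemma \ref{lem:lifting}) immediately yields the approximation guarantee $(1+\nicefrac{1}{\gamma})/(1-\hat{\eta})$.

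Finally, the piecewise expression follows by taking the minimum of the two guarantees: $(1+\nicefrac{1}{\gamma})/(1-\hat{\eta}) \le \sqrt{2(\gamma+1)}$ iff $1-\hat{\eta} \ge \frac{\gamma+1}{\gamma\sqrt{2(\gamma+1)}} = \frac{\sqrt{2(\gamma+1)}}{2\gamma}$, which is exactly the stated threshold. The only conceptual obstacle is spotting that the choice of $\delta$ and $p$ was engineered to balance the two bounds; once $\delta+1 = \sqrt{2(\gamma+1)}$ is fixed, everything reduces to routine algebra together with the two lemmas from Section \ref{sec:matching}.
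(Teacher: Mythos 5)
Your proposal is correct and follows essentially the same route as the paper: universal GSP from the two deterministic GSP building blocks, robustness via Lemma \ref{lem:missing-lemma} together with $M_1 \cap \hat{M} \subseteq \hat{M}\cap\declaredEdges$, and the error-dependent bound via Lemma \ref{lem:bpm-approx} plus the Lifting Lemma, with the same algebraic collapse from $(\delta+1)^2 = 2(\gamma+1)$. The only difference is cosmetic bookkeeping in how the convex combination is simplified.
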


\begin{remark}
As argued above (see Corollary \ref{corollary:rsigap}), $\boost$ can be adapted to handle instances of RSGAP$^+$ with the same approximation guarantees as for $\bmp^+$ and remain group-strategyproof. Therefore, Theorem \ref{thm:bpm-randomized} extends to $\text{RSGAP}^+$ as well.
\end{remark}

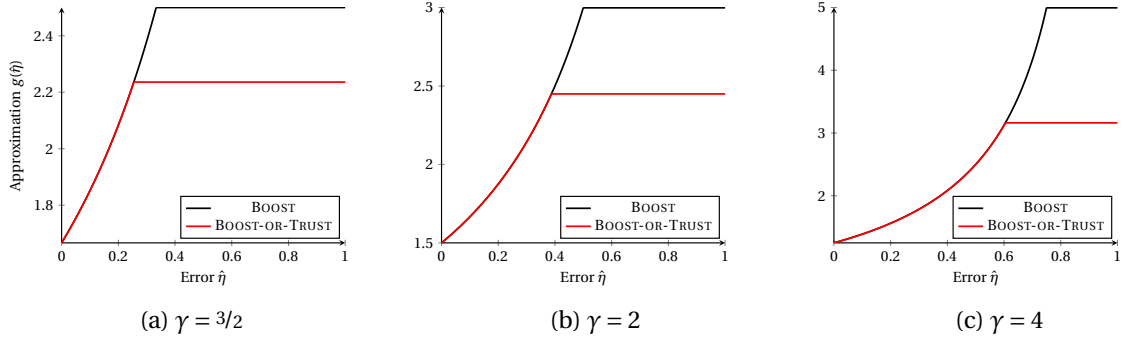
\begin{figure}[t]
\begin{subfigure}[b]{0.3\linewidth}
    \begin{tikzpicture}[scale=0.55]
        \begin{axis}[ thick, axis lines = left, xlabel = \( \text{Error } \hat{\eta}\), ylabel = {\(  \text{Approximation\ } g(\hat{\eta})\)}, legend style={at={(0.7,0.2)}, anchor=north}]
        \addplot [domain=0:(1-1/(3/2)), samples=100, color=black, forget plot, very thick] {(1+1/(3/2))/(1-x)};
        \addplot [domain=(1-1/(3/2)):1, samples=100, color=black, very thick] {1+1.5};
        \addplot [domain=0:(1-(2*(3/2+1))^(1/2)/(3)), samples=100, color=red, very thick] {(1+2/3)/(1-x)}; 
        \addplot [domain=(1-(2*(3/2+1))^(1/2)/(2*3/2)):1, samples=100, color=red, forget plot, very thick] {(2*(5/2))^(1/2)};
        \addlegendentry{$\boost$}
        \addlegendentry{$\randomizedBoost$}
        \end{axis}
    \end{tikzpicture}
    \caption{$\gamma=\nicefrac{3}{2}$}       
\end{subfigure}
\quad
\begin{subfigure}[b]{0.3\linewidth}
    \begin{tikzpicture}[scale=0.55]
        \begin{axis}[ thick, axis lines = left, xlabel = \( \text{Error\ } \hat{\eta}\), legend style={at={(0.7,0.2)}, anchor=north}]
        \addplot [domain=0:((4-1)/(4+2)), samples=100, color=black, very thick] {(1+1/2)/(1-x)}; 
        \addplot [ domain=((4-1)/(4+2)):1, samples=100, color=black, ultra thick, forget plot] {1+2};
        \addplot [domain=0:(1-(6)^(1/2)/4), samples=100, color=red, very thick] {(1+1/2)/(1-x)}; 
        \addplot [ domain=(1-(6)^(1/2)/4:1, samples=100, color=red, forget plot, very thick] {(6)^(1/2)};
        \addlegendentry{$\boost$}
        \addlegendentry{$\randomizedBoost$}
        \end{axis}
    \end{tikzpicture}
    \caption{$\gamma=2$}   
\end{subfigure}
\quad
\begin{subfigure}[b]{0.3\linewidth}
    \begin{tikzpicture}[scale=0.55]
        \begin{axis}[ thick, axis lines = left, xlabel = \( \text{Error } \hat{\eta}\), legend style={at={(0.7,0.2)}, anchor=north}]
        \addplot [domain=0:((16-1)/(16+4)), samples=100, color=black, very thick] {(1+1/4)/(1-x)}; 
        \addplot [ domain=((16-1)/(16+4)):1, samples=100, color=black, ultra thick, forget plot] {1+4};
        \addplot [domain=0:(1-(10)^(1/2)/8), samples=100, color=red, very thick] {(1+1/4)/(1-x)}; 
        \addplot [domain=(1-(10)^(1/2)/8:1, samples=100, color=red, forget plot, very thick] {(10)^(1/2)};
        \addlegendentry{$\boost$}
        \addlegendentry{$\randomizedBoost$}
        \end{axis}
    \end{tikzpicture}
    \caption{$\gamma =4$}
\end{subfigure}
\caption{Approximation guarantees \(g(\hat{\eta})\) of $\boost$ and $\randomizedBoost$ (in expectation) as a function of \(\hat{\eta}\) for various values of \(\gamma\).}
\label{fig:deterministic-vs-randomized-bmp}
\end{figure}

\sloppy
Recall that we concluded from Theorem \ref{th:tradeOffConsisBoundedRobust} and Theorem~\ref{thm:bpm} that $\boost$ attains the optimal consistency-robustness trade-off among all deterministic strategyproof mechanisms. 
However, as shown in Theorem \ref{thm:bpm-randomized}, the (expected) consistency and robustness guarantees achieved by $\randomizedBoost$ is strictly better than that of any deterministic mechanism (see Figure \ref{fig:deterministic-vs-randomized-bmp}). This implies a separation between the two classes of mechanisms for $\bmp^+$ in our environment with predictions.

\fussy

\begin{proof}[Proof of Theorem~\ref{thm:bpm-randomized}]
Let $\gamma \ge 1$ be fixed arbitrarily. Consider an instance $\mathcal{I}_{\bmp^+}=(G[\declaredEdges], \values, \hat{M})$ of $\bmp^+$ in the private graph model with prediction error $\eta(\mathcal{I}_{\bmp^+}) \le \hat{\eta}$. Observe that, for every such instance of $\bmp^+$, both matchings that are potentially returned by \randomizedBoost\ are the outcome of a deterministic group-strategyproof mechanism. In particular, \boost\ is shown to be group-strategyproof in Lemma \ref{lem:gsp}, whereas the group-strategyproofness of \trust\ follows from Theorem \ref{lemma:trust-gsp}.

\sloppy
We show that \randomizedBoost\ is $\sqrt{2(\gamma+1)}$-robust. Let $M_1=\boost(\mathcal{I}_{\bmp^+}, \delta(\gamma))$ and $M_{2} = \trust(\mathcal{I}_{\bmp^+})$. We have: 
\begin{align*}
       v(M^*_{\declaredEdges}) \leq 2v(M_1) + (\delta(\gamma)-1)v(M_1 \cap \hat{M})&\leq 2v(M_1) + (\delta(\gamma)-1)v(M_2)\\
       &=(\delta(\gamma) +1 ) \cdot \bigg(\frac{2}{\delta(\gamma)+1}v(M_1)  + \frac{\delta(\gamma)-1}{\delta(\gamma)+1}v(M_2)\bigg)\\
       &=(\delta(\gamma)+1) \cdot \bigg (pv(M_1) + (1-p) v(M_2) \bigg )\\
       &=(\delta(\gamma)+1) \cdot \mathbb{E}[v(\randomizedBoost(\mathcal{I}_{\bmp^+}, \gamma))].
\end{align*}
The first and second inequality follow from Lemma \ref{lem:missing-lemma} and the fact that $(M_1\cap \hat{M}) \subseteq (D \cap \hat{M}) = M_2$, respectively. Finally, the penultimate equality is true since, by line 4 of $\randomizedBoost$, it holds that $p=2/\sqrt{2(\gamma+1)}=2/(\delta(\gamma)+1)$. Since $\delta(\gamma)+1=\sqrt{2(\gamma+1)}$, the robustness guarantee follows.

We now show that $\randomizedBoost$ achieves an approximation guarantee of $g(\hat{\eta}, \gamma)$. We have:
\begin{equation*}
        \mathbb{E}[v(\randomizedBoost(\mathcal{I}_{\bmp^+}, \gamma))] = p v(M_1) + (1-p)v(\declaredEdges \cap \hat{M})\geq \frac{\delta(\gamma)+1-p}{\delta(\gamma)+1}v(D \cap \hat{M}),
\end{equation*}
where the equality holds since $M_2 = D \cap \hat{M}$, and the inequality follows by applying Lemma \ref{lem:bpm-approx} for $M_1$. After setting $p=2/(\delta(\gamma)+1)$ and $\delta(\gamma)= \sqrt{2(\gamma+1)}-1$ in the above (as $\randomizedBoost$ prescribes), it is a matter of simple calculus to obtain $(1+\nicefrac{1}{\gamma}) \cdot \mathbb{E}[v(\randomizedBoost(\mathcal{I}_{\bmp^+}, \gamma))] \geq v(D \cap \hat{M})$. Thus, by the Lifting Lemma (Lemma~\ref{lem:lifting}), we conclude that \randomizedBoost\ is $(1+\nicefrac{1}{\gamma})/(1-\hat{\eta})$-approximate in expectation. Furthermore, the expected robustness guarantee of $\sqrt{2(\gamma+1)}$ holds independently of the prediction error $\hat{\eta}$. The claimed bound on the expected approximation guarantee $g(\hat{\eta}, \gamma)$ now follows by combining these two bounds. 
\end{proof}

\subsection{Randomized Mechanism for $\emkar^+$}\label{app-subsec:mkar-randomized}

Using the technique from Section $\ref{subsec:randomized-separation}$, we give a randomized universally group-strategyproof mechanism for $\emkar^+$ which we first considered in Section \ref{subsec:mkar-deterministic}.

\begin{mechanism}[t]
\SetKwInput{Input}{Input}
\SetKwInput{Output}{Output}
\DontPrintSemicolon
\caption{\label{alg:randomized-mkar}
\randomizedMKAR$(\mathcal{I}_{\emkar^+}, \gamma)$}
\Input{An instance $\mathcal{I}_{{\emkar^+}}=(G[\declaredEdges], (v_i = s_i)_{i \in \agentSet}, \vec{\jcap})$, confidence parameter $\gamma \geq 1$.}
\Output{A probability distribution of feasible assignments for $\mathcal{I}_{\emkar^+}$.}
Set $\delta(\gamma) = (\sqrt{12\gamma + 13}-3)/2$. \tcp*{Note that $\delta \geq 1$ for all $\gamma \geq 1$.}
Set $M_1=\mkarGreedy(\mathcal{I}_{\emkar^+}, \delta(\gamma))$. \\
Set $M_2=\trust(\mathcal{I}_{\emkar^+})$.\\
Set $p=3/(2+\delta(\gamma))$. \tcp*{Note that $p \in (0,1]$ for all $\gamma \geq 1$.}
\Return $M_1$ with probability $p$ and $M_2$ with probability $1-p$.
\end{mechanism}

\begin{theorem} \label{thm:mkar-randomized} 
    Fix some error parameter $\hat{\eta}\in [0,1]$. Consider the class of instances of $\emkar^+$ in the private graph model and prediction error at most $\hat{\eta}$. Then, for every confidence parameter $\gamma \geq 1$, \randomizedMKAR\ is universally group-strategyproof and has an expected approximation guarantee of
    \[g(\hat{\eta}, \gamma) = \begin{cases*}
                                    \frac{1+\gamma}{\gamma(1-\hat{\eta})} & if $\hat{\eta}\leq 1-\frac{2(1+\gamma)}{\gamma(\sqrt{12\gamma+13} + 1)}$\\
                                    \frac{\sqrt{12\gamma+13} + 1}{2} & otherwise.
    \end{cases*}\]
    In particular, \randomizedMKAR\ is $(1+\nicefrac{1}{\gamma})$-consistent and $\frac12(\sqrt{12\gamma+13} + 1)$-robust (both in expectation).
\end{theorem}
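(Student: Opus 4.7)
The plan is to mirror the proof of Theorem~\ref{thm:bpm-randomized}, substituting the role of \boost\ with \mkarGreedy\ and invoking the structural lemmas from Section~\ref{subsec:mkar-deterministic} in place of the analogous lemmas from Section~\ref{subsec:boost:apx}. Universal group-strategyproofness is immediate: \randomizedMKAR\ is a distribution over two deterministic mechanisms, \mkarGreedy\ (group-strategyproof by Theorem~\ref{thm:mkar-deterministic}) and \trust\ (group-strategyproof by Theorem~\ref{lemma:trust-gsp}), so universal group-strategyproofness follows by definition.

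For the robustness guarantee, I would start from Lemma~\ref{lemma:MKAR-prerobustness-lemma} applied to $M_1=\mkarGreedy(\mathcal{I}_{\emkar^+}, \delta(\gamma))$, namely $v(M^*_D) \le 3 v(M_1) + (\delta(\gamma)-1) v(M_1 \cap \hat{M})$. Since $M_1 \cap \hat{M} \subseteq \declaredEdges \cap \hat{M}= M_2$, this gives $v(M^*_D) \le 3 v(M_1) + (\delta(\gamma)-1) v(M_2)$. Factoring out $\delta(\gamma)+2$, the right-hand side becomes $(\delta(\gamma)+2)\bigl(\tfrac{3}{\delta(\gamma)+2} v(M_1) + \tfrac{\delta(\gamma)-1}{\delta(\gamma)+2} v(M_2)\bigr)= (\delta(\gamma)+2) \cdot \mathbb{E}[v(\randomizedMKAR(\mathcal{I}_{\emkar^+},\gamma))]$, where I used $p = 3/(\delta(\gamma)+2)$. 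A direct computation shows $\delta(\gamma)+2 = (\sqrt{12\gamma+13}+1)/2$, yielding the claimed robustness bound.

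For the consistency and smooth approximation guarantee, I would invoke Lemma~\ref{lemma:mkar-approx-without-eta} for $M_1$, which gives $v(M_2) = v(\hat{M}\cap \declaredEdges) \le (1+1/\delta(\gamma)) v(M_1)$, i.e., $v(M_1) \ge \tfrac{\delta(\gamma)}{\delta(\gamma)+1} v(M_2)$. Then
\[
\mathbb{E}[v(\randomizedMKAR(\mathcal{I}_{\emkar^+},\gamma))]
= p v(M_1) + (1-p) v(M_2)
\ge \Big(1 - \frac{p}{\delta(\gamma)+1}\Big) v(M_2).
\]
Substituting $p=3/(\delta(\gamma)+2)$, this becomes $\tfrac{\delta(\gamma)^2+3\delta(\gamma)-1}{(\delta(\gamma)+1)(\delta(\gamma)+2)}\,v(M_2)$. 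To derive $(1+1/\gamma) \mathbb{E}[v(\randomizedMKAR)] \ge v(M_2)$, it suffices to show $\delta(\gamma)^2+3\delta(\gamma) \ge 3\gamma+1$. The choice $\delta(\gamma)=(\sqrt{12\gamma+13}-3)/2$ is exactly the positive root of $\delta^2+3\delta = 3\gamma+1$, so equality holds and the consistency bound is tight. The approximation guarantee for $\hat\eta \in [0,1]$ then follows from the Lifting Lemma (Lemma~\ref{lem:lifting}) combined with the (unconditional) robustness bound, yielding the piecewise form of $g(\hat\eta,\gamma)$ stated in the theorem.

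The only subtle step is guessing the correct $\delta(\gamma)$ and $p$. I would arrive at them by treating $\delta$ and $p$ as unknowns, writing both the robustness inequality (giving $\beta = p \cdot \text{(robustness factor of } \mkarGreedy\text{)}$ essentially, but via the sharpening through Lemma~\ref{lemma:MKAR-prerobustness-lemma}) and the consistency inequality $(1+1/\gamma)\mathbb{E}[v(\cdot)] \ge v(M_2)$ as a system, and solving for $(\delta,p)$ so that both bounds bind simultaneously; this yields the quadratic $\delta^2 + 3\delta - (3\gamma+1)=0$ and $p = 3/(\delta+2)$. Once these constants are in hand, the rest is bookkeeping, driven entirely by Lemmas~\ref{lemma:MKAR-prerobustness-lemma} and~\ref{lemma:mkar-approx-without-eta} from the deterministic analysis.
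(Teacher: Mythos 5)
Your proposal is correct and follows essentially the same route as the paper's proof: universal group-strategyproofness from the two GSP building blocks, robustness via Lemma~\ref{lemma:MKAR-prerobustness-lemma} together with $M_1 \cap \hat{M} \subseteq \declaredEdges \cap \hat{M} = M_2$ and the choice $p = 3/(\delta(\gamma)+2)$, and consistency/approximation via Lemma~\ref{lemma:mkar-approx-without-eta} followed by the Lifting Lemma, with your reduction to $\delta^2+3\delta \ge 3\gamma+1$ being algebraically identical to the paper's observation that $(\delta(\gamma)+1)(\delta(\gamma)+2)/3 = \gamma+1$. No gaps.
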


\begin{proof}
Let $\gamma \ge 1$ be fixed arbitrarily. Consider an instance $\mathcal{I}_{{\emkar^+}}=(G[\declaredEdges], (v_i = s_i)_{i \in \agentSet}, \vec{\jcap})$ of this special case of $\emkar^+$ in the private graph model with prediction error $\eta(\mathcal{I}_{\bmp^+}) \le \hat{\eta}$. $\randomizedMKAR$ is universally group-strategyproof since both deterministic mechanisms in its support are group-strategyproof. We show that this holds for \mkarGreedy and \trust in Theorems \ref{thm:mkar-deterministic} and \ref{lemma:trust-gsp}, respectively.

Let $M_1=\mkarGreedy(\mathcal{I}_{\emkar^+}, \delta(\gamma))$ and $M_2=\trust(\mathcal{I}_{\emkar^+})$. We continue by showing that $\randomizedMKAR$ is $\frac12 (\sqrt{12\gamma+13} + 1)$-robust. Note that by construction $\delta(\gamma) \geq 1$ for all $\gamma \ge 1$. For an optimal assignment $M^*_\declaredEdges$ we obtain by Lemma \ref{lemma:MKAR-prerobustness-lemma} that
    \begin{align*}
        v(M^*_{\declaredEdges}) &\leq 3v(M_1) + (\delta(\gamma)-1) v(M_1 \cap \hat{M})\leq 3v(M_1) + (\delta(\gamma) -1)v(D \cap \hat{M})\\
        &= 3v(M_1) + (\delta(\gamma) -1)v(M_2) = (2+\delta(\gamma)) \cdot \bigg(\frac{3}{2+\delta(\gamma)}v(M_1) + \frac{\delta(\gamma) -1}{2+\delta(\gamma)}v(M_2)\bigg)\\
        &= (2+\delta(\gamma)) \cdot \bigg(pv(M_1)+(1-p)v(M_2) \bigg)=(2+\delta(\gamma)) \cdot \mathbb{E}[\randomizedMKAR(\mathcal{I}_{\emkar^+}, \gamma)].
    \end{align*}
The second to last equality follows by observing that $p=3/(2+\delta(\gamma))$, as $\randomizedMKAR$ prescribes. Then, the claimed expected robustness guarantee follows since $\delta(\gamma)+2 = \frac{1}{2}(\sqrt{12\gamma+13}+1)$.

To show the claimed approximation $g(\hat{\eta}, \gamma)$, note that $\randomizedMKAR$ also satisfies the following:
    \begin{align*}
        \mathbb{E}[\randomizedMKAR(\mathcal{I}_{\emkar^+}, \delta(\gamma))]&=pv(M_1) + (1-p)v(D \cap \hat{M})\\
        &\geq v(\hat{M} \cap \declaredEdges) \bigg(\frac{p \delta(\gamma)}{\delta(\gamma) + 1} +1-p\bigg)\\
        &=v(\hat{M} \cap \declaredEdges) \bigg(\frac{3\delta(\gamma)}{(\delta(\gamma)+1)(\delta(\gamma)+2)} + 1 - \frac{3}{2+\delta(\gamma)} \bigg)\\
        &=v(\hat{M} \cap \declaredEdges) \bigg(1-\frac{3}{(\delta(\gamma) +1)(\delta(\gamma) +2)} \bigg)
    \end{align*}    
    
The inequality follows by Lemma \ref{lemma:mkar-approx-without-eta}.
Observe that, for $\delta(\gamma)=\frac{1}{2}(\sqrt{12\gamma + 13}-3)$ it holds that 
    \begin{equation*}
        \frac{(\delta(\gamma)+1)(\delta(\gamma)+2)}{3} = \frac{(\sqrt{12\gamma + 13}-1)(\sqrt{12\gamma + 13}+1)}{12}=\frac{12\gamma +12}{12}=\gamma+1.
    \end{equation*}
Thus, by rearranging terms, we obtain that $(1+\nicefrac{1}{\gamma})\mathbb{E}[\randomizedMKAR(\mathcal{I}_{\emkar^+}, \delta(\gamma))] \geq v(\hat{M} \cap D)$. Thus, by the Lifting Lemma (Lemma~\ref{lem:lifting}), we conclude that \randomizedMKAR\ is $(1+\nicefrac{1}{\gamma})/(1-\hat{\eta})$-approximate in expectation. Further, the robustness guarantee of $\frac{1}{2}(\sqrt{12\gamma+13} + 1)$ in expectation holds independently of the prediction error $\hat{\eta}$. The claimed bound on the expected approximation guarantee $g(\hat{\eta}, \gamma)$ now follows by combining these two bounds.
\end{proof}

\subsection{More General Variants of $\gap^+$} \label{subsec:sigap-vcgap}

In this section, we devise randomized universally group-strategyproof mechanisms for two variants of $\gap^+$, namely $\vcgap^+$ and $\sigap^+$. We write $\mathcal{I}_{\vcgap^+}=(G[\declaredEdges], \values, \sizes, \caps, \hat{M})$ to denote an instance of $\vcgap^+$. Similarly, we use $\mathcal{I}_{\sigap^+}=(G[\declaredEdges], \values, \bm{s}, \vec{\jcap}, \hat{M})$ to denote an instance of $\sigap^+$. In the setting without predictions, \citet{dughmi10} and \citet{chen14} studied multiple $\gap$ variants for the private graph model. But, to the best of our knowledge, \vcgap has not been considered in the literature before, and no deterministic strategyproof $O(1)$-approximate mechanisms are known for $\vcgap$ and $\sigap$ in the private graph model.

\minisec{Greedy Mechanism for $\vcgap$ and $\sigap$}
In Section \ref{subsec:mkar-deterministic} we demonstrated how Mechanism $\greedy$ combined with an appropriate ranking function, can serve as a group-strategyproof mechanism for the special case of $\emkar^+$. Here, we present two different instantiations of ranking functions, one for $\vcgap$ and one for $\sigap$, which can be coupled with $\greedy$ to obtain a group-strategyproof mechanism for their respective classes of instances. 

We present our ranking function for $\vcgap$ first. We define the function $\vec{z}_{\vcgap}: L \times R \mapsto \mathbb{R}^3$ as follows: Let $\sigma=(\sigma(1), \dots, \sigma(m))$ be the ordinal consensus permutation of resources in $R$ for this instance. Note that such a permutation is guaranteed to exist, as the instance is $\vcgap^+$. Then, for every pair $(i,j) \in L \times R$, we define
\begin{equation}
    \label{eq:vcgap-zeta}
    \vec{z}_{\vcgap}((i,j)) := \Big( - \sigma(j), \nicefrac{v_{ij}}{s_{ij}}, -i \Big). 
\end{equation}
This function instructs $\greedy$ to rank edges in $D$ so that the edges linked to the most ``sought'' resources, according to the ordinal consensus $\sigma$, are considered first. Then, among the edges linked to each resource, the second-order criterion instructs $\greedy$ to give precedence to the edge with the highest value per size ratio. If any ties remain, they are broken in increasing index of $i$.

For $\sigap^+$, we define the function $\vec{z}_{\sigap}: L \times R \mapsto \mathbb{R}^3$ as follows: For every pair $(i,j) \in L \times R$, we define
\begin{equation}
    \label{eq:sigap-zeta}
    \vec{z}_{\sigap}((i,j)) := \Big( \nicefrac{v_{ij}}{s_{ij}}, -i, -j \Big).
\end{equation}
This ranking function is particularly straightforward; it instructs $\greedy$ to prioritize edges with the highest value per size ratio in the greedy ordering. Then, in case of ties, they are broken in increasing index of first $i$ and then $j$.

Throughout this section, when invoking \greedy\ for $\vcgap^+$ and $\sigap^+$ instances, we will refer to the pairing of \greedy\ with the corresponding ranking functions $\vec{z}_{\vcgap}$ and $\vec{z}_{\sigap}$ respectively.
In Lemma \ref{lemma:greedy-sp} we show that each of the two instantiations of $\greedy$ described above is a GSP mechanism for its respective class of instances.

\begin{restatable}{lemma}{sixfour}\label{lemma:greedy-sp}
    \sloppy
    Mechanism $\greedy$ coupled with ranking function $\vec{z}_{\vcgap}$ (or $\vec{z}_{\sigap}$) is a group-strategyproof mechanism for instances of $\ \vcgap^+$ (or $\sigap^+$, respectively).
\end{restatable}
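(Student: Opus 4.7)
The plan is to invoke Theorem~\ref{thm:greedy-gsp-sufficient}: it suffices to show that each of $\vec{z}_{\vcgap}$ and $\vec{z}_{\sigap}$ is a truth-inducing ranking function in the sense of Definition~\ref{def:truth-inducing} for its respective class of instances. This reduces the entire statement to verifying two properties for each ranking function: (a) the extended lexicographic order induced on $L \times R$ is strict and total; and (b) for any agent $i$ and any two incident edges $e=(i,j)$ and $e'=(i,j')$, if $\vec{z}(e) \lexext \vec{z}(e')$, then $v_{e} \ge v_{e'}$.

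For $\vec{z}_{\vcgap}((i,j)) = (-\sigma(j),\, v_{ij}/s_{ij},\, -i)$, strictness and totality follow because any two distinct edges $(i,j) \neq (i',j')$ must differ either in $j$ (and hence in $-\sigma(j)$, since $\sigma$ is a permutation of $R$) or, when $j=j'$, in $i$ (and hence in the last coordinate $-i$). For (b), suppose $e=(i,j)$ and $e'=(i,j')$ with $\vec{z}_{\vcgap}(e) \lexext \vec{z}_{\vcgap}(e')$. If $j=j'$ then $v_{ij}=v_{ij'}$ trivially; otherwise the first coordinates must be distinct and we get $-\sigma(j) > -\sigma(j')$, i.e.\ $\sigma(j) < \sigma(j')$. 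The $\vcgap$ consensus property $v_{i\sigma(1)} \ge v_{i\sigma(2)} \ge \dots \ge v_{i\sigma(m)}$ then yields $v_{ij} \ge v_{ij'}$, as required.

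For $\vec{z}_{\sigap}((i,j)) = (v_{ij}/s_{ij},\, -i,\, -j)$, strictness and totality follow because any two distinct edges differ in $(-i,-j)$, so the last two coordinates always break ties. For (b), recall that $\sigap$ instances satisfy $s_{ij} = s_i$ for all $j$. Hence for $e=(i,j)$ and $e'=(i,j')$ with $\vec{z}_{\sigap}(e) \lexext \vec{z}_{\sigap}(e')$, the comparison of first coordinates gives $v_{ij}/s_i \ge v_{ij'}/s_i$, which implies $v_{ij} \ge v_{ij'}$ since $s_i > 0$.

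Once both ranking functions are certified truth-inducing, Theorem~\ref{thm:greedy-gsp-sufficient} immediately yields group-strategyproofness. No real obstacle arises; the only point that requires care is that the tie-breaking coordinates are chosen so that distinct edges always receive distinct lex-keys on the whole of $L \times R$ (not merely on $D$), which is precisely what the $-i$ and $-j$ entries guarantee in each of the two ranking functions.
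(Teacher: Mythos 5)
Your proof is correct and follows essentially the same route as the paper: verify that $\vec{z}_{\vcgap}$ and $\vec{z}_{\sigap}$ are truth-inducing in the sense of Definition~\ref{def:truth-inducing} (strict total lexicographic order via the tie-breaking coordinates, and value-monotonicity along each agent's edges via the consensus permutation $\sigma$ for $\vcgap$ and via $s_{ij}=s_i$ for $\sigap$), then apply Theorem~\ref{thm:greedy-gsp-sufficient}. Your added remark about distinctness of lex-keys on all of $L\times R$ is a fine point of care but does not change the argument.
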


\begin{proof}
    We show that both functions are truth-inducing (Definition \ref{def:truth-inducing}). We first show it for the ranking function $\vec{z}_{\vcgap}$ and the class of $\vcgap^+$ instances. Let $\mathcal{I}_{\vcgap^+}=(G[D], \vec{v}, \vec{s}, \vec{C}, \hat{M})$ be an instance of $\vcgap^+$ and let $\sigma$ be the permutation of resources that indicates the consensus of agents for their value. Indeed, the extended lexicographic order with respect to $\vec{z}_{\vcgap}$ as defined in \eqref{eq:vcgap-zeta}, is strict and total. Furthermore, by the definition of $\sigma$, it is true that for each agent $i \in L$ and all $e = (i,j), e'=(i,j') \in L \times R$, $\sigma(j) \leq \sigma(j')$ implies $v_{ij} \geq v_{ij'}$. Since the ranking function $\vec{z}_{\vcgap}$ is truth-inducing, the group-strategyproofness of this greedy mechanism follows from Theorem \ref{thm:greedy-gsp-sufficient}.

    We show that the ranking function $\vec{z}_{\sigap}$ is truth-inducing (Definition \ref{def:truth-inducing}) for the class of $\sigap^+$ instances. Let $\mathcal{I}_{\sigap^+}=(G[D], \vec{v}, \vec{s}, \vec{C}, \hat{M})$ be an instance of $\sigap^+$. It is easy to observe that the extended lexicographic order with respect to $\vec{z}_{\vcgap}$ as defined in \eqref{eq:sigap-zeta}, is strict and total. Furthermore, for each agent $i \in L$ and all $e = (i,j), e'=(i,j') \in L \times R$, since $\mathcal{I}_{\sigap^+}$ is an instance of $\sigap^+$ (and therefore $s_{ij}=s_{ij'}$), $\frac{v_{ij}}{s_{ij}} \geq \frac{v_{ij'}}{s_{ij'}}$  directly implies $v_{ij} \geq v_{ij'}$. Since the ranking function $\vec{z}_{\sigap}$ is truth-inducing, the group-strategyproofness of this greedy mechanism follows from Theorem \ref{thm:greedy-gsp-sufficient}. 
\end{proof}

Note that both ranking functions defined above do not depend on the predicted assignment $\hat{M}$ in any way. Furthermore, the greedy mechanisms described above do not guarantee worst-case approximation guarantees when run as stand-alone mechanisms. However, both ranking functions ensure that agents are processed by $\greedy$ in an efficient way.

\begin{observation}\label{z-v/s-sigap-vcgap} 
For an instance $\mathcal{I_{\gap^+}}=(G[\declaredEdges], \vec{v}, \vec{s}, \vec{C}, \hat{M})$ of $\vcgap^+$ (or $\sigap^+$), the corresponding ranking function $\vec{z}$ as defined in \eqref{eq:vcgap-zeta} (or \eqref{eq:sigap-zeta}, respectively) satisfies the following property: for every resource $j \in R$, and every $(i,j), (i',j) \in \declaredEdges_j$ with $\vec{z}((i,j)) \lex \vec{z}((i',j))$, it holds that $\frac{v_{ij}}{s_{ij}} \geq \frac{v_{i'j}}{s_{i'j}}$.
\end{observation}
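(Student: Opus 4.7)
The plan is to verify the claimed property directly from the definitions of the two ranking functions, handling the two cases ($\vcgap^+$ and $\sigap^+$) separately, since the argument hinges on where the ratio $v_{ij}/s_{ij}$ appears in the lexicographic tuple.

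For the $\vcgap^+$ ranking function $\vec{z}_{\vcgap}((i,j)) = (-\sigma(j), v_{ij}/s_{ij}, -i)$, I would fix a resource $j$ and consider two edges $(i,j), (i',j) \in D_j$ with $\vec{z}_{\vcgap}((i,j)) \lexext \vec{z}_{\vcgap}((i',j))$. The key observation is that the first coordinate $-\sigma(j)$ depends only on $j$, so it agrees for the two edges. The lexicographic comparison therefore reduces to the two-coordinate comparison $(v_{ij}/s_{ij}, -i)$ versus $(v_{i'j}/s_{i'j}, -i')$. From the definition of the extended lexicographic order in \eqref{eq:lexext}, this forces either a strict inequality $v_{ij}/s_{ij} > v_{i'j}/s_{i'j}$ in the leading coordinate, or equality $v_{ij}/s_{ij} = v_{i'j}/s_{i'j}$ together with $-i \ge -i'$ in the tie-breaking coordinate. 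Either way, $v_{ij}/s_{ij} \ge v_{i'j}/s_{i'j}$, as required.

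For the $\sigap^+$ ranking function $\vec{z}_{\sigap}((i,j)) = (v_{ij}/s_{ij}, -i, -j)$ the argument is more immediate: the value-per-size ratio is already the first coordinate of the tuple, so $\vec{z}_{\sigap}((i,j)) \lexext \vec{z}_{\sigap}((i',j))$ unfolded in the lex order in \eqref{eq:lexext} directly gives $v_{ij}/s_{ij} \ge v_{i'j}/s_{i'j}$, with no need to peel off any preceding coordinates.

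There is essentially no obstacle here; both subclaims are immediate from the definitions and the unpacking of $\lexext$. The only bookkeeping point to be careful about is that in the $\vcgap^+$ case one must observe that fixing the common resource $j$ forces the leading coordinate to tie, so that the comparison genuinely passes to the $v/s$-coordinate rather than being decided at the $-\sigma(j)$ level.
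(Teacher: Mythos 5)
Your proof is correct and matches the paper's (implicit) justification: the paper states this as an unproved observation precisely because it follows by unpacking the lexicographic order, which is exactly what you do, including the one non-trivial point that in the $\vcgap^+$ case the leading coordinate $-\sigma(j)$ is common to both edges so the comparison passes to the $v_{ij}/s_{ij}$ coordinate.
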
 

\minisec{Randomized Mechanism for $\vcgap^+$ and $\sigap^+$}
We present our randomized universally group-strategyproof mechanism for $\vcgap^+$ and $\sigap^+$.
There are two main pillars in our approach. The first one is that we randomize over the respective \greedy\ mechanism presented in Section~\ref{sec:GreedyTemplate}, which processes agents in order of efficiency (as argued in Observation~\ref{z-v/s-sigap-vcgap} above), and a complementary mechanism, which processes agents in order of their values. While neither of these mechanisms  achieves a bounded approximation guarantee by itself, their (probabilistic) combination does in expectation. In fact, this is the key idea that \citet{chen14} used to devise the current state-of-the-art strategyproof mechanism for $\sigap$ and special cases of $\vcgap$. Inspired by this idea, we instead randomize over \greedy\ and our mechanism \boost for $\bmp^+$ to leverage the predicted assignment. Finally, we combine the above scheme with a third mechanism, namely \trust, to follow the prediction with some (small) probability. 

We refer to the resulting mechanism as $\randomizedGAP$; see Mechanism~\ref{alg:randomized-gap}. Note that the assignments $M_1$ and $M_2$ computed by \boost and \greedy, respectively, are always returned with positive probability $p = 2/(3 + \gamma)$, while the predicted assignment output by \trust is returned with probability $1-2p$, which is positive only if $\gamma > 1$ (i.e., when there is some confidence in the prediction). A subtle point that needs some clarification here is that the predicted assignment $\hat{M}$ of the constructed instance $\mathcal{I}_{\bmp^+}$ passed on to \boost\ is a many-to-one assignment. However, as argued above (see Section~\ref{rem:boost-extensions}, Extension (E1)), \boost can handle such alterations as well. In fact, $M_1$ being a one-to-one assignment output by \boost\ suffices to prove bounded approximation guarantees for \randomizedGAP. 

\begin{mechanism}[t]
\SetKwInput{Input}{Input}
\SetKwInput{Output}{Output}
\DontPrintSemicolon
\caption{\label{alg:randomized-gap}
\randomizedGAP$(\mathcal{I}_{\gap^+}, \vec{z}, \gamma)$}
\Input{An instance $\mathcal{I}_{{\gap^+}}=(G[\declaredEdges], \values, \vec{\sizes}, \vec{C}, \hat{M})$, a ranking function  $\vec{z}: L \times R \mapsto \mathbb{R}^k$ for some $k \in \mathbb{N}$ and a confidence parameter $\gamma \geq 1$} 
\Output{A probability distribution of feasible assignments for $\mathcal{I}_{\gap^+}$.} 
Construct an instance $\mathcal{I}_{\bmp^+}=(G[\declaredEdges], \values, \hat{M})$. \\
Set $M_1=\boost(\mathcal{I}_{\bmp^+}, \gamma)$ and $M_2 = \greedy(\mathcal{I}_{\gap^+}, \vec{z})$.\\
Set $p=2/(3+\gamma)$.\\
\Return $M_1$ with probability $p$, $M_2$ with probability $p$ and $\trust(\mathcal{I}_{\gap^+})$ with probability $1-2p$.
\end{mechanism}

\begin{restatable}{theorem}{sixfive}\label{thm:randomized-gap-main}
Fix some error parameter $\hat{\eta} \in [0,1]$. Consider the class of instances of $\sigap^+$ (or $\vcgap^+$, respectively) in the private graph model with prediction error at most $\hat{\eta}$. Then, for every confidence parameter $\gamma \ge 1$, \randomizedGAP is universally group-strategyproof and has an expected approximation guarantee of
\begin{equation}
    g(\hat{\eta}, \gamma) = 
    \begin{cases}
        \frac{3+\gamma}{\gamma(1-\hat{\eta})} & \text{if $\hat{\eta} \le 1 - \frac{1}{\gamma}$,} \\
        3+\gamma & \text{otherwise.}
        \end{cases} \label{eq:approxGAP}
\end{equation}
In particular, \randomizedGAP is $(1+\nicefrac{3}{\gamma})$-consistent and $(3+\gamma)$-robust (both in expectation).
\end{restatable}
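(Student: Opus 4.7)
My plan has three stages: universal group-strategyproofness will follow immediately from the components; the approximation $g(\hat{\eta},\gamma)$ will be derived as the minimum of a consistency-style lifting inequality and an additive robustness inequality combining \boost's matching bound with a greedy-versus-matching inequality tailored to \sigap\ and \vcgap; and finally the two bounds will be assembled into the piecewise statement.

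Universal group-strategyproofness is immediate because \randomizedGAP\ is a distribution over three deterministic GSP mechanisms: \boost\ applied to the derived $\bmp^+$ instance (GSP by Theorem~\ref{thm:bpm}, relying on extension (E1) of Section~\ref{rem:boost-extensions} to handle the possibly many-to-one prediction $\hat{M}$), \greedy\ with the truth-inducing ranking $\vec{z}_{\sigap}$ or $\vec{z}_{\vcgap}$ (GSP by Lemma~\ref{lemma:greedy-sp}), and \trust\ (GSP by Theorem~\ref{lemma:trust-gsp}). For the consistency-style bound, I will drop the nonnegative $v(M_2)$ term in the expectation and invoke Lemma~\ref{lem:bpm-approx}, $(1+\nicefrac{1}{\gamma})v(M_1) \geq v(\hat{M} \cap \declaredEdges)$; plugging in $p = 2/(3+\gamma)$ reduces the desired inequality $(1+\nicefrac{3}{\gamma})\mathbb{E}[v(M)] \geq v(\hat{M} \cap \declaredEdges)$ to $\gamma^2 + 2\gamma - 1 \geq \gamma(\gamma+1)$, i.e.\ $\gamma \geq 1$. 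The Lifting Lemma (Lemma~\ref{lem:lifting}) then yields the approximation $(1+\nicefrac{3}{\gamma})/(1-\hat{\eta})$. For the robustness bound, I will show
\[
v(\mstar{\declaredEdges}) \leq 2 v(M_1) + 2 v(M_2) + (\gamma-1) v(\hat{M} \cap \declaredEdges) = (3+\gamma)\mathbb{E}[v(M)]
\]
by combining (a) Lemma~\ref{lem:missing-lemma} applied to \boost\ on the derived $\bmp^+$ instance, together with $v(M_1 \cap \hat{M}) \leq v(\hat{M} \cap \declaredEdges)$ (since $M_1 \subseteq \declaredEdges$), giving $2 v(M_1) + (\gamma-1) v(\hat{M} \cap \declaredEdges) \geq v(M^*_{\mathrm{mat}})$, where $M^*_{\mathrm{mat}}$ denotes an optimal matching in $G[\declaredEdges]$; with (b) the combinatorial greedy-versus-matching inequality $v(\mstar{\declaredEdges}) \leq v(M^*_{\mathrm{mat}}) + 2 v(M_2)$. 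The final $g(\hat{\eta},\gamma)$ is the minimum of the two bounds, with the crossover at $\hat{\eta} = 1 - \nicefrac{1}{\gamma}$; at $\hat{\eta} = 0$, this recovers $(1+\nicefrac{3}{\gamma})$-consistency.

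The main obstacle is ingredient (b). My plan is to decompose $\mstar{\declaredEdges} = M^{*,1} \sqcup M^{*,2}$, where for each resource $j$ with $\mstar{\declaredEdges}(j) \neq \emptyset$, $M^{*,1}$ contains the single ``first'' agent of $\mstar{\declaredEdges}(j)$ in the greedy ordering---for \sigap, the agent of maximum $v/s$-ratio; for \vcgap, the agent that \greedy\ considers first for $j$ under $\vec{z}_{\vcgap}$. Since $\mstar{\declaredEdges}$ assigns each agent at most once, $M^{*,1}$ is a matching, hence $v(M^{*,1}) \leq v(M^*_{\mathrm{mat}})$. The heart of the proof is then $v(M^{*,2}) \leq 2 v(M_2)$, which I will establish via a charging scheme that, for every $(i,j) \in M^{*,2}$, either (i) charges $v_{ij}$ one-to-one to the edge of $i$ in $M_2$---justified for \vcgap\ by the consensus $\sigma$ (which forces $v_{i M_2(i)} \geq v_{ij}$ whenever $i$ was ``stolen'' from $j$ by a more preferred resource) and for \sigap\ by the constancy of sizes across resources for each fixed agent---or (ii) aggregates across all $(i,j) \in M^{*,2}$ that \greedy\ considered and rejected at $j$, using the rejection condition $\sum_{i' \in M_2^t(j)} s_{i'j} > C_j - s_{ij}$ together with the $v/s$-ranking to bound the total by $2 v(M_2(j))$ per resource via $\sum_i s_{ij}/(C_j - s_{ij}) \leq 2 \sum_i s_{ij}/C_j \leq 2$. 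The per-resource bound relies on $s_{ij} \leq C_j/2$, which is handled by observing that any ``big'' item with $s_{ij} > C_j/2$ must be the sole occupant of $\mstar{\declaredEdges}(j)$ by feasibility and hence is placed in $M^{*,1}$, not $M^{*,2}$. Avoiding double-counting between cases (i) and (ii) at the level of individual $M_2$-edges is the most delicate step, and I anticipate it will require exploiting the specific structure of the greedy orderings to make sure the two charge streams do not both saturate the same edge of $M_2$.
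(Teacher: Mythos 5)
Your high-level architecture (universal GSP from the three component mechanisms, a consistency-style bound on $v(\hat M\cap\declaredEdges)$, an additive robustness bound of the form $v(\mstar{\declaredEdges})\le 2v(M_1)+2v(M_2)+(\gamma-1)v(\hat M\cap\declaredEdges)$, and the Lifting Lemma to combine them) matches the paper's, and the GSP part is correct. But both quantitative ingredients have genuine gaps. First, the consistency step cannot rely on Lemma~\ref{lem:bpm-approx}: that lemma's proof charges each edge of $\hat M\cap\declaredEdges$ to an adjacent edge of $M_1$ and uses that \emph{at most two} edges of $\hat M\cap\declaredEdges$ are adjacent to any edge of $M_1$, which requires $\hat M$ to be one-to-one. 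For $\sigap^+$/$\vcgap^+$ the prediction is many-to-one; with a single resource $j$ and $n$ predicted agents, $v(\hat M\cap\declaredEdges)$ can be $n$ times $v(M_1)$, so $(1+\nicefrac{1}{\gamma})\,v(M_1)\ge v(\hat M\cap\declaredEdges)$ is simply false. (Extension (E1) only preserves group-strategyproofness, not the approximation guarantee.) The paper instead proves $v(\hat M\cap\declaredEdges)\le 2v(M_1)+2v(M_2)$, using \greedy's many-to-one output $M_2$ to absorb exactly this case, and only then exploits the $(1-2p)\,v(\hat M\cap\declaredEdges)$ term.

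Second, your ingredient (b), $v(\mstar{\declaredEdges})\le v(M^*_{\mathrm{mat}})+2v(M_2)$, is not established, and the proposed charging scheme fails at precisely the two points you flag. The claim that an agent with $s_{ij}>C_j/2$ must be the sole occupant of $\mstar{\declaredEdges}(j)$ is false (take $C_j=10$ and sizes $7$ and $3$), so the per-resource bound $\sum_i s_{ij}/(C_j-s_{ij})\le 2$ does not hold; and the double-counting between the ``own edge'' charges and the per-resource rejection charges would push the total to $3v(M_2)$ rather than $2v(M_2)$. The paper avoids both problems with a different decomposition (Observation~\ref{observation:not-in-s-randomized} together with Lemmas~\ref{lemma:ArbitraryM} and~\ref{lemma:randomized-gap-key-lemma}): it restricts attention to the set $L^<(M)$ of agents worse off under \emph{both} $M_1$ and $M_2$, so every agent covered by its own edge in either assignment is accounted for exactly once by $v(M_1)+v(M_2)$, and from each resource it removes the agent of \emph{maximum size} in $M(j)\cap L^<(M)$ --- not the greedy-first agent --- which makes the rejection-charging sum telescope to exactly $v(M_2(j))$ with no assumption on item sizes. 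The removed agent $\ell(j)$ is then charged to $M_1(j)$ through the (boosted) rejection at $j$, so the optimal one-to-one matching $M^*_{\mathrm{mat}}$ and Lemma~\ref{lem:missing-lemma} are never needed. As written, your proof does not go through.
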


In what follows, we focus on the proof of Theorem \ref{thm:randomized-gap-main}. We first show that \randomizedGAP is universally group-strategyproof for these instances.

\begin{proof}[Proof of Theorem~\ref{thm:randomized-gap-main} (Universal Group-Strategyproofness)]
Fix $\gamma \geq 1$ arbitrarily. Observe that, for every such instance of $\mathcal{I}_{\gap^+}$, each one of the three assignments potentially returned by \randomizedGAP is the outcome of a deterministic GSP mechanism. Indeed, the fact that \greedy\ and \trust\ are group-strategyproof follows from Lemma \ref{lemma:greedy-sp} and Theorem \ref{lemma:trust-gsp} respectively. Furthermore, in Section~\ref{subsec:bpm-gsp}, we showed that \boost\ is group-strategyproof. 
\end{proof}

Next, we turn our attention to the expected approximation guarantee of Theorem \ref{thm:randomized-gap-main}. To that end, we introduce some auxiliary notation. For a fixed $\gamma \geq 1$ and an instance $\mathcal{I}_{\gap^+}=(G[D], \values, \sizes, \vec{C}, \hat{M})$, let $M_1$ represent the outcome of $\boost$ and $M_2$ represent the outcome of $\greedy$, as prescribed by line 2 of $\randomizedGAP(\mathcal{I}_{\gap^+}, \vec{z}, \gamma)$. Furthermore, let $M$ be any feasible assignment for $\mathcal{I}_{\gap^+}$. We denote by $L^<(M) \subseteq L$ the set of agents who have been assigned to a resource of smaller value under both $M_1$ and $M_2$ compared to their value under $M$. Formally,
\begin{equation*}
L^<(M) =\bigg \{ i \in L \bigm| v_{iM_{1}(i)} < v_{i M(i)} \land v_{iM_{2}(i)} < v_{i M(i)} \bigg \}.
\end{equation*}
Note that $L^<(M)$ depends on both $M_1$ and $M_2$, but for ease of notation, we omit these arguments. Our approach proceeds as follows: First, we begin with a simple observation. Then, we establish two technical lemmas, Lemma~\ref{lemma:ArbitraryM} and Lemma~\ref{lemma:randomized-gap-key-lemma}, which will be useful in our analysis. With these lemmas in place, the expected approximation guarantee of Theorem \ref{thm:randomized-gap-main} will follow promptly.

We begin by an observation which follows by the definition of the set $L^<(M)$, given an assignment $M$ and assignments $M_1$ and $M_2$ as computed in line 2 of the mechanism.
\sloppy
\begin{observation}\label{observation:not-in-s-randomized}
Let $\gamma \geq 1$. Let $\mathcal{I}_{\gap^+}=(G[D], \values, \sizes, \vec{C}, \hat{M})$ be an instance of $\vcgap^+$ (or $\sigap^+$, respectively) and let $M_1$ and $M_2$ be the assignments computed by \boost and \greedy\ in line 2 of $\randomizedGAP(\mathcal{I}_{\gap^+}, \vec{z}, \gamma)$. Furthermore, let $M$ be a feasible assignment for $\mathcal{I}_{\gap^+}$. Then,  $v(M_1) + v(M_2) \ge \sum_{j\in R}v(M(j) \setminus L^<(M))$.
\end{observation}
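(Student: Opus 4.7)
The plan is to rewrite the right-hand side as a sum over agents (rather than over resources), then charge each individual contribution to either $M_1$ or $M_2$. First I would observe that since $M$ is a feasible assignment, every agent is incident to at most one edge of $M$, so
\[
\sum_{j \in R} v(M(j) \setminus L^<(M)) \;=\; \sum_{\substack{i \in L \setminus L^<(M) \\ M(i) \neq \emptyset}} v_{iM(i)}.
\]

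Next, I would exploit the very definition of $L^<(M)$: an agent $i \notin L^<(M)$ must fail the defining conjunction, meaning at least one of $v_{iM_1(i)} \ge v_{iM(i)}$ or $v_{iM_2(i)} \ge v_{iM(i)}$ holds. Based on this, I would partition $L \setminus L^<(M)$ into two disjoint sets $A_1$ and $A_2$: place $i$ into $A_1$ if $v_{iM_1(i)} \ge v_{iM(i)}$, and otherwise into $A_2$ (so necessarily $v_{iM_2(i)} \ge v_{iM(i)}$).

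Finally, I would bound the sum above by
\[
\sum_{i \in A_1} v_{iM_1(i)} \;+\; \sum_{i \in A_2} v_{iM_2(i)} \;\le\; v(M_1) + v(M_2),
\]
where the last inequality uses the disjointness of $A_1$ and $A_2$ together with non-negativity of values. A minor subtlety to address is the case where $i$ is unassigned in $M_1$ or $M_2$: by the convention $v_{i\emptyset} = 0$, an unassigned agent $i$ in $M_1$ can only be placed in $A_1$ if $v_{iM(i)} = 0$, in which case the contribution is zero and does not affect the inequality (and analogously for $M_2$ and $A_2$). I do not anticipate any real obstacle here; the statement is essentially a bookkeeping observation unpacking the definition of $L^<(M)$, and the hardest part is merely fixing a consistent convention for unassigned agents.
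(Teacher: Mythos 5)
Your proposal is correct and matches the paper's intent: the paper states this as an Observation with no written proof, asserting it "follows by the definition of the set $L^<(M)$", and your argument is precisely the natural unpacking of that definition — reindex the sum over agents, use that $i \notin L^<(M)$ negates the conjunction, split into the two cases, and bound each part by $v(M_1)$ or $v(M_2)$ using positivity of values and the convention $v_{i\emptyset}=0$.
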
 

We continue with the two technical lemmas. The first one exploits the fact that $\greedy$ considers agents in order of efficiency.

\begin{lemma} \label{lemma:ArbitraryM}
\sloppy
Let $\gamma \geq 1$. Let $\mathcal{I}_{\gap^+}=(G[D], \values, \sizes, \vec{C}, \hat{M})$ be an instance of $\vcgap^+$ (or $\sigap^+$, respectively) and let $M_1$ and $M_2$ be the assignments computed by \boost and \greedy\ in line 2 of $\randomizedGAP(\mathcal{I}_{\gap^+}, \vec{z}, \gamma)$. Furthermore, let $M$ be any feasible assignment for $\mathcal{I}_{\gap^+}$ and let $j \in \taskSet$ be a resource with $M(j) \cap L^<(M) \neq \emptyset$. Finally, let $\ell \in \argmax_{i \in M(j) \cap L^<(M)}s_{ij}$. It holds that $v(M_2(j)) + v_{\ell j} \geq v\big(M(j) \cap L^<(M)\big)$.
\end{lemma}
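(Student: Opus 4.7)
The plan is to prove the lemma through a direct charging argument that exploits the ordering property of \greedy\ guaranteed by Observation~\ref{z-v/s-sigap-vcgap}. The idea is to show, for each $i \in S \setminus \{\ell\}$ with $S := M(j) \cap L^<(M)$, that $v_{ij}$ is tightly bounded in terms of $v(M_2(j))$ and the residual capacity $C_j - s_{\ell j}$. Summing this over $i \in S \setminus \{\ell\}$ and invoking the feasibility of $M$ at $j$ will then yield the claim.

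First, I would establish a \emph{capacity-violation} step: for every $i \in S$, when \greedy\ considers the edge $(i,j)$ it fails the capacity check in Line 5, so $\sum_{t \in T_i} s_{tj} + s_{ij} > C_j$, where $T_i \subseteq M_2(j)$ collects the agents already assigned to $j$ at that moment. The delicate point is to rule out that $i$ was assigned by \greedy\ to some other resource \emph{before} $(i,j)$ is considered. For $\sigap^+$, this will follow from the fact that within agent $i$, the ranking $\vec{z}_{\sigap}$ orders $i$'s edges by value-per-size $= v_{i\cdot}/s_i$, i.e.\ by value: since $v_{iM_2(i)} < v_{ij}$ (as $i \in L^<(M)$ and $M(i)=j$), the edge $(i,j)$ precedes $(i,M_2(i))$ in the processing order, so $(i,j)$ is processed while $i$ is still unassigned; being skipped then forces the capacity violation. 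For $\vcgap^+$, the primary sort is by the consensus permutation $\sigma$, and $v_{iM_2(i)} < v_{ij}$ together with the ordinal consensus forces $\sigma(M_2(i)) > \sigma(j)$, so again $(i,j)$ is processed before $(i,M_2(i))$, yielding the same conclusion.

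Next, I would exploit Observation~\ref{z-v/s-sigap-vcgap} to obtain, for each $t \in T_i$, $v_{tj}/s_{tj} \ge v_{ij}/s_{ij}$. Combining this with $s_{ij} \le s_{\ell j}$ (by the choice of $\ell$) and $T_i \subseteq M_2(j)$, the capacity-violation step gives
\[
v(M_2(j)) \;\ge\; v(T_i) \;\ge\; \frac{v_{ij}}{s_{ij}}\,\sum_{t \in T_i} s_{tj} \;>\; \frac{v_{ij}}{s_{ij}}\,(C_j - s_{\ell j}),
\]
hence $v_{ij} < s_{ij}\, v(M_2(j))/(C_j - s_{\ell j})$ for every $i \in S \setminus \{\ell\}$. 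Summing this inequality over $i \in S \setminus \{\ell\}$ and using the feasibility of $M$ at $j$, namely $\sum_{i \in S \setminus \{\ell\}} s_{ij} \le C_j - s_{\ell j}$, will yield $\sum_{i \in S \setminus \{\ell\}} v_{ij} < v(M_2(j))$. Adding $v_{\ell j}$ to both sides then gives the desired bound. The corner case $|S|=1$ (where the denominator $C_j - s_{\ell j}$ could vanish) must be treated separately, but there the claim reduces to $v(M_2(j)) \ge 0$ and is immediate.

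The main obstacle will be the capacity-violation step, since the ordering $\greedy$ uses on edges of agent $i$ depends on the variant ($\vcgap^+$ vs.\ $\sigap^+$): the case analysis must invoke the specific structural property of each problem (value consensus versus agent-size invariance) to guarantee that $(i,j)$ is actually processed by \greedy\ while $i$ is still unassigned. Once that is in place, the remainder is a clean ratio-based chain driven by the greedy ordering and the maximum-size choice of $\ell$.
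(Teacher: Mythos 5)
Your proposal is correct and follows essentially the same route as the paper's proof: establish that each $i\in M(j)\cap L^<(M)$ had its edge $(i,j)$ rejected by \greedy\ for capacity reasons while still unassigned (using that $(i,j)$ precedes $(i,M_2(i))$ under both ranking functions), then charge $v_{ij}$ against $v(M_2(j))$ via the value-per-size ordering of Observation~\ref{z-v/s-sigap-vcgap} and close the argument using feasibility of $M$ at $j$ together with the maximality of $s_{\ell j}$. The only cosmetic difference is that you normalize by $C_j - s_{\ell j}$ where the paper normalizes by $\sum_{i\in M(j)\cap L^<(M)} s_{ij} - s_{\ell j}$; both denominators are valid and yield the same bound.
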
 

\begin{proof}
Consider any agent $k \in M(j) \cap L^<(M)$. By the definition of $L^<(M)$, it holds that $v_{k j}> v_{k M_2(k)}$. Observe that, for both ranking functions $\vec{z}_{\vcgap}$ and $\vec{z}_{\vcgap}$ this implies that $(k, j) \lex (k, M_2(k))$. Therefore, agent $k$ was first considered by $\greedy$ to be matched to resource $j$ but was rejected and instead was matched to resource $M_2(k)$.

Let $T(k) \subseteq M_2(j)$ be the set of agents that were assigned to resource $j$ by $\greedy$ \emph{before} considering matching agent $k$ to resource $j$. Note that $T(k)$ is not empty since, by assumption $s_{kj} \leq C_j$. We have:
\begin{equation}
    \label{eq:6.2-not-fit}
    \sum_{i \in T(k)}s_{ij} +s_{kj} > C_j \geq \sum_{i \in M(j)\cap L^<(M)}s_{ij}.
\end{equation}
The first inequality follows from the fact that the condition on line 5 of $\greedy$ evaluated to \emph{False}. Then, the second inequality follows from the feasibility of $M$.

Further, by Observation \ref{z-v/s-sigap-vcgap}, for each agent $i \in T(k)$ it holds that $\frac{v_{kj}}{s_{kj}} \leq \frac{v_{ij}}{s_{ij}}$. By summing over all agents in $T(k)$ and rearranging terms, we obtain that
\begin{equation}
    \label{eq:6.2-efficiency}
    v_{kj} \leq s_{kj} \frac{\sum_{i \in T(k)}v_{ij}}{\sum_{i \in T(k)}s_{ij}}
\end{equation}
Recall that $\ell = \argmax_{i \in M(j) \cap L^<(M)}s_{ij}$. We can conclude the following:
\begin{align*}
    v(M(j) \cap L^<(M))-v_{\ell j}&= \sum_{k \in (M(j) \cap L^<(M))\setminus\{ \ell \}}v_{kj} \leq \sum_{k \in (M(j) \cap L^<(M))\setminus\{ \ell \}} s_{kj} \frac{\sum_{i \in T(k)}v_{ij}}{\sum_{i \in T(k)}s_{ij}}\\
    &\leq v(M_2(j)) \cdot \sum_{k \in (M(j) \cap L^<(M))\setminus\{ \ell \}} \frac{s_{kj}}{\sum_{i \in T(k)}s_{ij}}\\
    &\leq v(M_2(j)) \cdot  \sum_{k \in (M(j) \cap L^<(M))\setminus\{ \ell \}} \frac{s_{kj}}{\sum_{i \in (M(j) \cap L^<(M))}s_{ij}-s_{kj}}\\
    &\leq v(M_2(j)) \cdot  \sum_{k \in (M(j) \cap L^<(M))\setminus\{ \ell \}} \frac{s_{kj}}{\sum_{i \in (M(j) \cap L^<(M))}s_{ij}-s_{\ell j}}=v(M_2(j)).
\end{align*}
The first inequality follows from \eqref{eq:6.2-efficiency} and the second inequality by the fact that $T(k) \subseteq M_2(j)$, for each agent $k \in (M(j) \cap L^<(M)) \setminus \{ \ell \}$. Then, the next inequality follows by applying \eqref{eq:6.2-not-fit} for each such agent $k$. Finally, the last inequality follows from the definition of agent $\ell$. The Lemma follows.
\end{proof}
Lemma \ref{lemma:randomized-gap-key-lemma} will be the key to complete the proof of Theorem \ref{thm:randomized-gap-main}, as we show below.

\sloppy
\begin{lemma}
    \label{lemma:randomized-gap-key-lemma}
    Let $\gamma \geq 1$. Let $\mathcal{I}_{\gap^+}=(G[D], \values, \sizes, \vec{C}, \hat{M})$ be an instance of $\vcgap^+$ (or $\sigap^+$, respectively) and let $M_1$ and $M_2$ be the assignments computed by \boost and \greedy\ in line 2 of $\randomizedGAP(\mathcal{I}_{\gap^+}, \vec{z}, \gamma)$. Furthermore, let $M$ be any feasible assignment for $\mathcal{I}_{\gap^+}$. For each $j \in R$ with $M(j) \cap L^<(M) \neq \emptyset$, fix an agent $\ell(j) \in \argmax_{i \in M(j) \cap L^<(M)}s_{ij}$. Then,
    \begin{equation}\label{eq:randomized-gap-key-eq}
        v(M) \leq v(M_1) + 2v(M_2) + \sum_{\substack{j \in R: \\ M(j) \cap L^<(M) \neq \emptyset}}v_{\ell(j)j}
    \end{equation}
\end{lemma}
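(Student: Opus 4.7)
The plan is to decompose $v(M)$ resource by resource, splitting each $M(j)$ into the part that lies in $L^<(M)$ and the part that does not, and then bound each piece using the results already at our disposal.

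First, I would write
\[
v(M) = \sum_{j \in R} v(M(j)) = \sum_{j \in R} v(M(j) \setminus L^<(M)) + \sum_{j \in R} v(M(j) \cap L^<(M)),
\]
so the task reduces to bounding each of the two sums on the right-hand side.

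For the first sum, the bound is immediate from Observation~\ref{observation:not-in-s-randomized}, which yields
\[
\sum_{j \in R} v(M(j) \setminus L^<(M)) \le v(M_1) + v(M_2).
\]
For the second sum, I would apply Lemma~\ref{lemma:ArbitraryM} to every resource $j \in R$ with $M(j) \cap L^<(M) \neq \emptyset$, with the specific choice $\ell(j) \in \argmax_{i \in M(j) \cap L^<(M)} s_{ij}$ that is fixed in the statement. This gives $v(M(j) \cap L^<(M)) \le v(M_2(j)) + v_{\ell(j)j}$, and summing over all such resources and using $\sum_{j \in R} v(M_2(j)) \le v(M_2)$ yields
\[
\sum_{j \in R} v(M(j) \cap L^<(M)) \le v(M_2) + \sum_{\substack{j \in R:\\ M(j) \cap L^<(M) \neq \emptyset}} v_{\ell(j)j}.
\]

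Combining the two bounds gives exactly \eqref{eq:randomized-gap-key-eq}. There is no real obstacle here: the lemma is essentially a bookkeeping step that glues together Observation~\ref{observation:not-in-s-randomized} (which charges the ``easy'' agents to $M_1 \cup M_2$) with Lemma~\ref{lemma:ArbitraryM} (which charges the ``hard'' agents in $L^<(M)$ to $M_2$, up to one distinguished heavy agent $\ell(j)$ per resource). The only minor care point is making sure the decomposition partitions $M$ correctly and that the resources $j$ with $M(j) \cap L^<(M) = \emptyset$ contribute nothing to the second sum, which they trivially do not.
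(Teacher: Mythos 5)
Your proposal is correct and follows exactly the same route as the paper's proof: decompose $v(M)$ over resources into the part outside $L^<(M)$ (bounded by $v(M_1)+v(M_2)$ via Observation~\ref{observation:not-in-s-randomized}) and the part inside $L^<(M)$ (bounded per resource by $v(M_2(j)) + v_{\ell(j)j}$ via Lemma~\ref{lemma:ArbitraryM}, then summed). Nothing is missing.
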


\begin{proof}
    Indeed, by applying Observation \ref{observation:not-in-s-randomized} for $M$ and afterwards Lemma \ref{lemma:ArbitraryM} for $M$, each $j \in R$ with $M(j) \cap L^<(M) \neq \emptyset$ and each $\ell(j)$ we obtain that
    \begin{align*}
        v(M) &= \sum_{j \in R}v(M(j)) =  \sum_{j \in R}v(M(j) \setminus L^<(M)) + \sum_{j \in R}v(M(j) \cap L^<(M))\\
        &\leq v(M_1) + v(M_2) + \sum_{j \in R}v(M(j) \cap L^<(M)) \leq v(M_1) + 2v(M_2) + \sum_{\substack{j \in R: \\ M(j) \cap L^<(M) \neq \emptyset}}v_{\ell(j)j}.\qedhere
    \end{align*}
\end{proof}

\begin{proof}[Proof of Theorem~\ref{thm:randomized-gap-main} (Robustness)]
Let $\gamma \geq 1$. Let $\mathcal{I}_{\gap^+}=(G[D], \values, \sizes, \vec{C}, \hat{M})$ be an instance of $\vcgap^+$ (or $\sigap^+$, respectively) and let $M_1$ and $M_2$ be the assignments computed by \boost and \greedy\ in line 2 of the mechanism. Let $M$ the outcome of the randomized mechanism. 

Consider an edge $(i,j)$ so that $i \in \argmax_{t \in M^*_{D}(j) \cap L^<(M^*_{D})}{s_{tj}}$. By the construction of $\boost$ and the definition of $L^<(M^*_{\declaredEdges})$, it must be that agent $i$ proposed to resource $M^*_{\declaredEdges}(i)=j$. However, resource $j$ rejected the $\gamma$-boosted offer $\theta_{ij}(\gamma, \hat{M})$ of agent $i$ and instead opted for the $\gamma$-boosted offer of a different agent $M_1(j)$. Using the above fact, we obtain that:
\begin{equation}\label{eq:robustness-s1-randomized-ub}
 v_{ij} \leq \theta_{ij} \leq \theta_{M_1(j)j} = v(M_1(j)) + \mathds{1}_{M_1(j) \in \hat{M}(j)} \cdot (\gamma -1)v(M_1(j)).
\end{equation}

In \eqref{eq:robustness-s1-randomized-ub} we use $\theta_{ij}:=\theta_{ij}(\gamma, \hat{M})$ and $\theta_{M_1(j)j}:=\theta_{M_1(j)j}(\gamma, \hat{M})$ for brevity. Note that both equalities follow from the definition of the $\gamma$-boosted value in \eqref{eq:offer}. We continue by applying Lemma \ref{lemma:randomized-gap-key-lemma} to $M^*_{D}$ and each $j \in R$ with $M^*_{D}(j) \cap L^<(M^*_{D}) \neq \emptyset$ and agent $\ell(j) \in \argmax_{i \in M^*_D(j) \cap L^<(M^*_D)}s_{ij}$. Thus, we can expand \eqref{eq:randomized-gap-key-eq} as follows:
\begin{align*}
    v(M^*_{\declaredEdges}) &\leq v(M_1) + 2v(M_2) + \sum_{\substack{j \in R: \\ M^*_D(j) \cap L^<(M^*_D) \neq \emptyset}}v_{\ell(j)j}\\
    &\leq v(M_1) + 2v(M_2) +  \sum_{\substack{j \in R: \\ M^*_D(j) \cap L^<(M^*_D) \neq \emptyset}} \bigg(v(M_1(j)) + \mathds{1}_{M_1(j) \in \hat{M}(j)} \cdot (\gamma -1)v(M_1(j)) \bigg)\\
    &\leq v(M_1) + 2v(M_2) + \sum_{j \in R}\bigg(v(M_1(j)) + \mathds{1}_{M_1(j) \in \hat{M}(j)} \cdot (\gamma -1)v(M_1(j)) \bigg)\\
    &=  2v(M_1) + 2v(M_2) +(\gamma-1) \cdot \sum_{j \in R}\mathds{1}_{M_1(j) \in \hat{M}(j)}v(M_1(j))\\
    &= 2v(M_1) + 2v(M_2) + (\gamma-1)v(M_1 \cap \hat{M}) \leq 2v(M_1) + 2v(M_2) + (\gamma-1)v(\trust(\mathcal{I}_{\gap}^+))\\
    &= (3+\gamma) \cdot \bigg( pv(M_1)+pv(M_2) + (1-2p) v(\trust \big(\mathcal{I}_{\gap^+} \big) \bigg)= (3+\gamma) \mathbb{E}[M].
\end{align*}
In the above analysis, the last inequality follows from the fact that $M_1 \cap \hat{M} \subseteq D \cap \hat{M} =\trust(\mathcal{I}_{\gap^+}).$ Finally, the next equality follows from the fact that the randomized mechanism sets $p=2/(3+\gamma)$. The claimed expected robustness guarantee ensues.
\end{proof}

\begin{proof}[Proof of Theorem~\ref{thm:randomized-gap-main} (Approximation)]
Let $\gamma \ge 1$ be fixed arbitrarily. Consider an instance $\mathcal{I}_{\gap^+}=(G[D], \values, \sizes, \vec{C}, \hat{M})$ of $\vcgap^+$ (or $\sigap^+$, respectively) with prediction error $\eta(\mathcal{I}_{\gap^+}) \le \hat{\eta}$. Let $M_1$ and $M_2$ be the assignments computed by \boost and \greedy\ in line 2 of \randomizedGAP and let $M$ be the assignment returned by $\randomizedGAP$. For notational convenience, we use $\hat{M}_{\declaredEdges}:= \hat{M} \cap \declaredEdges$ and, for each $j \in R$, $\hat{M}_{\declaredEdges}(j):=\hat{M}(j)\cap \declaredEdges$.

Consider an edge $(i,j)$ so that $i \in \argmax_{t \in \hat{M}_{D}(j) \cap L^<(\hat{M}_{D})}{s_{tj}}$. By the construction of $\boost$ and the definition of $L^<(\hat{M}_{\declaredEdges})$, it must be that agent $i$ proposed to resource $j$ for which $i$ is predicted to be assigned. However, resource $j$ rejected the $\gamma$-boosted offer $\theta_{ij}(\gamma, \hat{M})$ of agent $i$ and instead opted for the $\gamma$-boosted offer of a different agent $M_1(j)$. Using the above fact, we obtain that:
\begin{equation}\label{eq:approximation-s1-randomized-ub}
 v_{ij} = \frac{\theta_{ij}}{\gamma} \leq \frac{\theta_{M_1(j)j}}{\gamma} \leq \frac{\gamma v(M_1(j))}{\gamma}= v(M_1(j)). 
\end{equation}
In \eqref{eq:approximation-s1-randomized-ub} we use $\theta_{ij}:=\theta_{ij}(\gamma, \hat{M})$ and $\theta_{M_1(j)j}:=\theta_{M_1(j)j}(\gamma, \hat{M})$ for brevity. Note that the first equality and the second inequality follow from the definition of the $\gamma$-boosted value in \eqref{eq:offer}. Additionally, the first equality holds since $(i,j) \in \hat{M}_{\declaredEdges}$. We now apply Lemma \ref{lemma:randomized-gap-key-lemma} for $\hat{M}_{\declaredEdges}$ and each $j \in R$ with $ \hat{M}_{\declaredEdges}(j) \cap L^<(\hat{M}_\declaredEdges) \neq \emptyset$ and agent $\ell(j) \in \argmax_{\hat{M}_{\declaredEdges}(j) \cap L^<(\hat{M}_\declaredEdges)}s_{ij}$. We expand \eqref{eq:randomized-gap-key-eq} as follows:
\begin{align}
    v(\hat{M}_{\declaredEdges}) &\leq v(M_1) + 2v(M_2) + \sum_{\substack{j \in R: \\ \hat{M}_D(j) \cap L^<(\hat{M}_D) \neq \emptyset}}v_{\ell(j)j}\leq v(M_1) + 2v(M_2) +  \sum_{\substack{j \in R: \\ \hat{M}_D(j) \cap L^<(\hat{M}_D) \neq \emptyset}} v(M_1(j))\nonumber\\
    &\leq v(M_1) + 2v(M_2) + \sum_{j \in R}v(M_1(j))=2v(M_1) + 2v(M_2).\label{eq:randomized-gap-apx-half}
\end{align}
The second inequality follows by applying \eqref{eq:approximation-s1-randomized-ub} for each resource $j\in R$ with $\hat{M}_{\declaredEdges}(j) \cap L^<(\hat{M}_\declaredEdges) \neq \emptyset$ and agent $\ell(j)$ as specified above. Thus,
\begin{equation*}
   \mathbb{E}[M]= pv(M_1)+pv(M_2) + (1-2p)v(\hat{M}_{\declaredEdges}) \geq v(\hat{M}_{\declaredEdges}) \bigg(\frac{p}{2} +1 -2p \bigg)=v(\hat{M}_{\declaredEdges}) \cdot \frac{\gamma}{3+\gamma}.
\end{equation*}
The first equality follows from the fact that $\hat{M}_{\declaredEdges}=\trust(\mathcal{I}_{\gap^+})$. Then, the inequality follows from \eqref{eq:randomized-gap-apx-half}. Finally, the last equality is due to the fact that $p = {2}/{(3+\gamma)}$ on line 3 of our mechanism.

By rearranging terms we obtain ${(3+\gamma)}/{\gamma} \cdot \mathbb{E}[M] \geq v(\hat{M}_{\declaredEdges})$. By the Lifting Lemma (Lemma \ref{lem:lifting}) we conclude that $\randomizedGAP$ attains an expected approximation of $\frac{3+\gamma}{\gamma(1-\hat{\eta})}$. At the same time, the expected robustness guarantee of $3+\gamma$ holds independently of the prediction error $\hat{\eta}$. The claimed bound on the expected approximation guarantee $g(\hat{\eta}, \gamma)$ in \eqref{eq:approxGAP} now follows by combining these two bounds.
\end{proof}
This concludes the proof of Theorem \ref{thm:randomized-gap-main}.

\section{Conclusions}

In this work, we contribute to the emerging line of research on approximate mechanism design with predictions for environments without monetary transfers. We study generalized assignment problems with predictions in the private graph model introduced by \citet{dughmi10}.
For the Bipartite Matching Problem $\bmp^+$, we derive a new mechanism, $\boost$ (in Section \ref{sec:matching}), and show that it achieves the optimal trade-off between consistency and robustness. Further, we show that our mechanism achieves an approximation guarantee that smoothly transitions between the consistency and robustness guarantees, depending on a natural error parameter. Given our lower bound (in Section~\ref{sec:lowerBounds}), this approximation guarantee might still be improved, but it is off by at most a factor of $1 + \nicefrac{1}{\gamma}$. We leave it for future work to close this gap.
Furthermore, we use $\boost$ as a core component in most of our randomized mechanisms in Section \ref{sec:randomized}. We show that combining $\boost$ with the mechanism $\trust$ (when randomization is allowed) can lead to improved expected approximation guarantees for $\bmp^+$. Additionally, combining the aforementioned mechanisms with simple greedy deterministic mechanisms (introduced in Section \ref{sec:greedy}) yields randomized, universally group-strategyproof mechanisms for more general $\gap^+$ problems (as we show in Section \ref{sec:randomized}).

We believe that our work offers a comprehensive treatment of leveraging predictions within the private graph model. However, several avenues for future research emerge. Firstly, we believe that the consistency-robustness trade-offs and, more generally, the approximation guarantees for the more general variants of $\gap^+$ considered in this paper can be further improved. In fact, these are intriguing problems even in the setting without predictions. 
Secondly, we think it is worthwhile to investigate to which extent our techniques can be applied to other mechanism design problems without monetary transfers, when the setting is augmented with structural predictions such as the sought combinatorial object. 
Finally, an interesting new direction would be to study matching-based problems with more general preference orders of the agents in a learning-augmented environment.

\section*{Acknowledgements}
This work was supported by the Dutch Research Council (NWO) through its Open Technology Program, proj.~no.~18938, and the Gravitation Project NETWORKS, grant no.~024.002.003. It has also been funded by the European Union under the EU Horizon 2020 Research and Innovation Program, Marie Skłodowska-Curie Grant Agreement, grant no.~101034253.

\bibliographystyle{plainnat}
\bibliography{arxiv}

\newpage 
\section*{Appendix}
\appendix

\section{Self-Contained Proof: \boost is GSP} \label{app:GSP}

In this section, we provide a self-contained proof of the group-strategyproofness of $\boost$ (Mechanism \ref{alg:bpm}). We adapt the proof of \citet{GS85} to our setting. 

Let $\gamma \ge 1$ be fixed. Consider an instance $\mathcal{I}_{\bmp^+}=(G[\declaredEdges], \values, \hat{M})$ of $\bmp^+$ with compatibility declarations $\declaredEdges$ and a private graph $G[\privateEdges]$. Let $M$ be a matching in $G[\declaredEdges]$. 
In our context, an edge $(i,j) \in \declaredEdges$ \emph{blocks} $M$ if $i$ and $j$ both prefer to be matched to each other rather than to their respective mates, i.e., (1) $v_{ij} > v_{iM(i)}$ and (2) $\theta_{ij} > \theta_{M(j)j}$. Throughout this section, whenever we use $>$ to compare two values or offers, we implicitly assume that the respective agent preference order $\succ_i$ and resource preference order $\succ_j$ as defined in Section~\ref{subsec:bpm-gsp} is used.
For ease of notation, we define $v_{iM(i)} = 0$ if $M(i) = \emptyset$ (i.e., $i$ is unmatched) and $\theta_{M(j) j} = 0$ if $M(j) = \emptyset$ (i.e., $M(j)$ is unmatched). 
Recall that a matching $M$ is \emph{stable} if it is not blocked by any edge. The following lemma will be useful. 

\begin{lemma}[Adapted from \citet{GS85}]\label{lem:GSP-key}
Let $M(\declaredEdges)$ be the matching computed by \boost for compatibility declarations $\declaredEdges$, and let $M'$ be an arbitrary matching in $G[\declaredEdges]$. 
Let $X \subseteq \agentSet$ be the set of agents that prefer their mate in $M'$ over their mate in $M$, i.e., $X = \sset{i \in \agentSet}{v_{iM'(i)} > v_{iM(i)}}$. Then there exists an edge $(i,j) \in \declaredEdges$ with $i \notin X$ that blocks $M'$. 
\end{lemma}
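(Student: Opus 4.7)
My plan is to combine the stability of $M$ (which \boost inherits as an instance of the agent-proposing deferred acceptance algorithm on the standard preference system constructed in Section~\ref{subsec:bpm-gsp}, via Theorem~\ref{thm:GS}) with a careful inspection of \boost's execution.

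First, I exploit stability. For each $i \in X$ with $j := M'(i)$, the edge $(i, j) \in \declaredEdges$ cannot block $M$; since $i$ strictly prefers $(i, j)$ over $(i, M(i))$, stability forces the resource side to hold: $(M(j), j) \succ_j (i, j)$. In particular, the candidate pair $(M(j), j)$ already satisfies the resource-side blocking condition for $M'$ (whose $j$-mate is $i$). Hence, if for some $i \in X$ the agent $i' := M(M'(i))$ lies outside $X$, then $(i', M'(i))$ is the desired blocking pair: the agent side $(i', M(i')) \succ_{i'} (i', M'(i'))$ is strict because $M(i') = M'(i) \neq M'(i')$ and $i' \notin X$, which in the strict preference order forces strict preference for $M(i')$ over $M'(i')$.

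It remains to handle the case in which $M(M'(i)) \in X$ for every $i \in X$, where $M$ and $M'$ both bijectively match $X$ with $S := M'(X)$. To close this case, I switch to an execution-based analysis. Let $t^*$ be the earliest time in \boost's run at which some $i^* \in X$ is rejected by $j^* := M'(i^*)$, and let $\ell^*$ denote $j^*$'s tentative mate at $t^*$ (which must be nonempty since $\theta_{\ell^* j^*} > \theta_{i^* j^*} > 0$). If $\ell^* \notin X$, then $(\ell^*, j^*)$ blocks $M'$: the resource side is immediate; the agent side follows from the observation that $\ell^*$ reached $j^*$ via its decreasing-preference proposal order, so $\ell^*$'s final mate $M(\ell^*)$ is either $j^*$ itself or strictly less preferred than $j^*$, and combining this with $\ell^* \notin X$ (hence $M(\ell^*) \succeq_{\ell^*} M'(\ell^*)$) yields $(\ell^*, j^*) \succ_{\ell^*} (\ell^*, M'(\ell^*))$ in the strict preference order. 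If $\ell^* \in X$, I case-split on its preference between $j^*$ and $M'(\ell^*)$: the sub-case $(\ell^*, M'(\ell^*)) \succ_{\ell^*} (\ell^*, j^*)$ forces $\ell^*$ to have been rejected by $M'(\ell^*)$ strictly before it proposed to $j^*$ and thus before $t^*$, contradicting the minimality of $t^*$; the complementary sub-case, where $(\ell^*, j^*) \succ_{\ell^*} (\ell^*, M'(\ell^*))$, is handled by iterating the same reasoning on the chain of tentative mates of $j^*$ (whose offers strictly increase over time), culminating at $j^*$'s final mate $M(j^*)$, where the contradiction falls out of the cycle structure $M(M'(i)) \in X$ combined with stability.

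The main obstacle is the last execution-based step: carefully tracking the monotone chain of tentative mates at $j^*$, ensuring that whenever the chain is forced to remain inside $X$ one can propagate the first-rejection minimality argument without looping back, and organizing the case analysis on whether the tentative mate prefers $j^*$ or its own $M'$-mate.
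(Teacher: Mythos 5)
Your opening move is sound: using stability of $M$ to dispose of the case where $M(M'(i)) \notin X$ for some $i \in X$ is a clean substitute for the paper's Case~1, and in your execution-based case your sub-cases ``$\ell^* \notin X$'' and ``$\ell^*$ prefers $M'(\ell^*)$ to $j^*$'' are handled correctly. The gap is the final sub-case, where $\ell^* \in X$ and $(\ell^*, j^*) \succ_{\ell^*} (\ell^*, M'(\ell^*))$: the contradiction you promise ``from the cycle structure combined with stability,'' obtained by walking up the chain of tentative mates of $j^*$, does not exist. It can happen that every tentative mate of $j^*$ from time $t^*$ onward, including $M(j^*)$, lies in $X$, that none of their rejections by their own $M'$-mates precede $t^*$, and that the execution is perfectly consistent with the stability of $M$. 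Concretely: take agents $1,\dots,5$ plus an outside agent $0$, resources $a,\dots,e$, with $M=\{(1,a),(2,d),(3,b),(4,c),(5,e)\}$, $M'=\{(1,b),(2,c),(3,e),(4,a),(5,d)\}$, agent preferences $1\colon b\succ a$, $2\colon b\succ c\succ d$, $3\colon e\succ b$, $4\colon a\succ c$, $5\colon d\succ e$, $0\colon$ only $e$, and resource (offer) orders $3\succ_b 2\succ_b 1$, $1\succ_a 4$, $4\succ_c 2$, $2\succ_d 5$, $5\succ_e 0\succ_e 3$ (realizable, e.g., with $v_{2b}>v_{2c}>v_{2d}>v_{5d}>v_{5e}>v_{0e}>v_{3e}>v_{3b}$ and $(3,b)\in\hat{M}$ boosted so that $\gamma v_{3b}>v_{2b}$). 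Here $X=\{1,\dots,5\}$ and $M(M'(i))\in X$ for all $i\in X$; in a natural execution the first rejection of an $X$-agent by its $M'$-mate is $1$ by $b$, so $j^*=b$ and $\ell^*=2\in X$ with $2$ preferring $b$ to $M'(2)=c$; the tentative mates of $b$ thereafter are $2$ and then $3=M(b)$, both in $X$. Your procedure never leaves resource $b$ (or cycles through $1\to b\to 2\to c\to 4\to a\to 1$ if you recurse on the later rejections, where minimality of $t^*$ no longer applies), yet the unique blocking edge with an agent outside $X$ is $(0,e)$, at a resource your argument never inspects.

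The underlying problem is that ``earliest rejection by an $M'$-mate'' is the wrong extremal event: minimality only bites when the displacing tentative mate prefers its own $M'$-mate to $j^*$, and the displacement cascade can route the outside agent through a resource far from $j^*$. The paper instead takes the \emph{last} proposal made by any agent of $X$ to any resource of $M(X)$; that proposal is necessarily the accepting one $(k, M(k))$, and the tentative mate it displaces cannot lie in $X$, since an $X$-agent displaced at that moment would have to make a still later proposal to its own $M$-mate in $M(X)$. That single global extremal choice immediately yields the blocking edge, and some argument of this flavour is needed to replace your sub-case; the local analysis anchored at $j^*$ cannot be completed.
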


\begin{proof}
Let $Y = M(X)$ and $Y' = M'(X)$ be the sets of resources that are matched with the agents in $X$ under $M$ and $M'$, respectively. 
Note that, by the definition of $X$, each agent $i \in X$ first proposes to $M'(i)$ but is rejected (immediately or later) and only later proposes to $M(i)$ and is accepted (if $M(i) \neq \emptyset$) or remains unmatched (if $M(i) = \emptyset$). In particular, we must have $|Y'| \ge |Y|$. We distinguish two cases: 

\underline{Case 1:} $Y' \setminus Y \neq \emptyset$. 
Consider some resource $j \in Y' \setminus Y$ and let $k \in X$ be the agent that is matched to $j$ under $M'$, i.e., $M'(k) = j$.
As argued above, when $k$ proposes to $j$ it is rejected. 
In particular, this implies that $j$ must have a mate $i = M(j)$ that it prefers over $k$, i.e., $\theta_{ij} > \theta_{M'(j)j}$.
Also, $i$ cannot be part of $X$ because otherwise $M(i) = j \in Y \cap Y'$, contradicting our assumption. Thus, $v_{ij} > v_{iM'(i)}$. It follows that $(i,j) \in \declaredEdges$ with $i \notin X$ blocks $M'$.

\underline{Case 2:} $Y' \setminus Y = \emptyset$. 
Note that each resource $j \in Y = Y'$ has exactly two edges, $(M(j), j)$ and $(M'(j), j)$, incident to it. As observed above, each agents $i \in X$ first proposes to $M'(i)$ and is rejected and then proposes to $M(i)$ and is accepted. In particular, each resource $j$ rejects the offer by $M'(j)$ before it accepts $M(j)$. Let $k \in X$ be the agent who proposes last to a resource $j \in Y$. Then $k = M(j)$ must be accepted. Also, $M'(j)$ proposed to $j$ before and was rejected. This implies that $j$ was tentatively matched to some $i \notin X$ when $j$ accepts $k$. Agent $i$ is rejected by $j$ and thus $v_{ij} > v_{iM(i)}$. In addition, $i \notin X$ and thus $v_{iM(i)} > v_{iM'(i)}$. We thus have $v_{ij} > v_{iM'(i)}$. Also, note that $M'(j)$ got rejected before $i$ and thus $\theta_{ij} > \theta_{M'(j)j}$. We conclude that $(i,j) \in \declaredEdges$ with $i \notin X$ blocks $M'$.
\end{proof}

We can now show group-strategyproofness. 

\begin{lemma}\label{lem:gsp}
Let $S \subseteq \agentSet$ be a subset of agents, and let $\declaredEdges_{-S}$ be arbitrary compatibility declarations of the agents in $\agentSet \setminus S$. 
Then for any compatibility declarations $\declaredEdges'_{S}$ of the agents in $S$, there always exists an agent $i \in S$ such that $u_i(\privateEdges_S, \declaredEdges_{-S}) \ge 
u_i(\declaredEdges'_S, \declaredEdges_{-S})$.
\end{lemma}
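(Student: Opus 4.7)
The plan is to give a contrapositive argument that mirrors Gale--Sotomayor's classical proof and leverages the already-established Lemma~\ref{lem:GSP-key}. Suppose for contradiction that some group deviation $\declaredEdges'_S$ makes \emph{every} agent $i \in S$ strictly better off, i.e., $u_i(\declaredEdges^{\text{dev}}) > u_i(\declaredEdges)$, where $\declaredEdges := \privateEdges_S \cup \declaredEdges_{-S}$ and $\declaredEdges^{\text{dev}} := \declaredEdges'_S \cup \declaredEdges_{-S}$. Let $M := \boost(\declaredEdges)$ and $M^{\text{dev}} := \boost(\declaredEdges^{\text{dev}})$.

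The first key step is to show that $M^{\text{dev}}$ can be legitimately viewed as a matching in the \emph{truthful} compatibility graph $G[\declaredEdges]$. For agents $i \in \agentSet \setminus S$, edges incident to $i$ in $M^{\text{dev}}$ come from $\declaredEdges^{\text{dev}}_i = \declaredEdges_i \subseteq \declaredEdges$. For an agent $i \in S$, the assumption $u_i(\declaredEdges^{\text{dev}}) > u_i(\declaredEdges) \ge 0$ forces $u_i(\declaredEdges^{\text{dev}}) > 0$, which by the utility definition \eqref{eq:def:utility} means $(i, M^{\text{dev}}(i)) \in \privateEdges_i = \declaredEdges_i$; thus again in $\declaredEdges$. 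Hence $M^{\text{dev}} \subseteq \declaredEdges$. Moreover, for each $i \in S$ the same reasoning yields $v_{iM^{\text{dev}}(i)} > v_{iM(i)}$, so the set $X$ from Lemma~\ref{lem:GSP-key} (applied with $M' := M^{\text{dev}}$) contains $S$.

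Invoking Lemma~\ref{lem:GSP-key} then produces an edge $(i,j) \in \declaredEdges$ with $i \notin X$ that blocks $M^{\text{dev}}$. Since $S \subseteq X$, this blocker satisfies $i \notin S$, and therefore $(i,j) \in \declaredEdges_i = \declaredEdges^{\text{dev}}_i \subseteq \declaredEdges^{\text{dev}}$; the blocking inequalities $v_{ij} > v_{iM^{\text{dev}}(i)}$ and $\theta_{ij} > \theta_{M^{\text{dev}}(j)j}$ transfer verbatim, so $(i,j)$ blocks $M^{\text{dev}}$ in $G[\declaredEdges^{\text{dev}}]$ as well.

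The contradiction will come from the stability of $\boost$'s output, which I would state and prove as a short auxiliary claim: for any input declarations $\declaredEdges''$, the matching $\boost(\declaredEdges'')$ admits no blocking edge in $\declaredEdges''$. This follows directly from the mechanism's dynamics---if $(i,j) \in \declaredEdges''$ satisfies $v_{ij} > v_{iM(i)}$, then $i$ must have proposed to $j$ at some iteration (proposals go in decreasing value order), and $j$'s tentative offer only improves from that point onward, so $\theta_{M(j)j} \ge \theta_{ij}$, precluding the blocking inequality for $j$. This contradicts the blocking derived above and finishes the proof. The main obstacle is the first step: it is tempting but wrong to treat $M^{\text{dev}}$ as a matching in $\declaredEdges$ without justification, and the argument crucially relies on the private graph model's convention that mismatches to incompatible resources yield zero utility, which is precisely what rules out deviating agents being matched through edges outside $\privateEdges_i$.
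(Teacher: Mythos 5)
Your proposal is correct and follows essentially the same route as the paper's proof: assume a profitable group deviation, observe that every $i \in S$ must be matched in $M^{\text{dev}}$ through a true edge (so $S \subseteq X$ and $M^{\text{dev}}$ lives in $G[\declaredEdges]$, which the paper handles by intersecting $M'$ with $\declaredEdges$), invoke Lemma~\ref{lem:GSP-key} to obtain a blocking edge $(i,j)$ with $i \notin X$, transfer it to the deviated profile since $\declaredEdges'_i = \declaredEdges_i$, and contradict the stability of \boost's output. Your extra auxiliary claim that \boost never produces a blocked matching is a fact the paper uses without proof, and your one-line justification of it is sound.
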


\begin{proof}
The proof is by contradiction. 
Suppose there exists some set $S \subseteq \agentSet$ and declaration profiles $\declaredEdges = (\privateEdges_S, \declaredEdges_{-S})$ and $\declaredEdges' = (\declaredEdges'_S, \declaredEdges_{-S})$ such that for every agent $i \in S$, $u_i(\privateEdges_S, \declaredEdges_{-S}) < u_i(\declaredEdges'_S, \declaredEdges_{-S})$.
Let $M$ and $M'$ be the matchings output by \boost for profiles $\declaredEdges$ and $\declaredEdges'$, respectively. 
Let $X \subseteq \agentSet$ be the subset of agents that prefer their mate in $M'$ over their mate in $M$, i.e., $
X = \sset{i \in \agentSet}{v_{iM'(i)} > v_{iM(i)}}$. Then $S \subseteq X$. 
Also, note that for each $i \in S$, the edge $(i, M'(i))$ must be part of $\privateEdges$, as otherwise $u_i(\declaredEdges'_S, \declaredEdges_{-S}) = 0$, which gives a contradiction.

Consider the matching $\tilde{M} = M' \cap \declaredEdges$. Note that $\tilde{M}$ is a feasible matching in $G[\declaredEdges]$ and we can thus apply Lemma~\ref{lem:GSP-key}. Thus, there exists an edge $(i,j) \in \declaredEdges$ with $i \notin X$ that blocks $\tilde{M}$ with respect to $\declaredEdges$. 
Note that, because $i \notin X$, the set of edges $\declaredEdges_i$ that $i$ declares is the same in $\declaredEdges$ and $\declaredEdges'$. Given that the preferences of $i$ and $j$ are the same in $\declaredEdges$ and $\declaredEdges'$ ($i \notin X$ and $j$ is not strategic), $(i,j)$ blocks $\tilde{M}$ with respect to $\declaredEdges'$. But then $(i,j)$ blocks $M'$ with respect to $\declaredEdges'$ as well. 
But this is a contradiction to the fact that \boost computes a stable matching when run on $\declaredEdges'$.
\end{proof}

\end{document}